\def\rset{\mathbb R}
\def\zset{\mathbb Z}
\def\eqsp{\;}
\newcommand{\pscal}[2]{\left\langle#1,#2\right\rangle}
\newcommand{\eqdef}{\ensuremath{\stackrel{\mathrm{def}}{=}}}
\def\Xset{\mathcal{X}} 
\def\Yset{\mathcal{Y}} 
\def\F{\mathcal{F}} 
\def\cB{\mathsf{B}} 
\def\e{\mathcal{E}}
\def\M{\mathcal{M}}
\def\D{\mathcal{D}}
\def\A{\mathcal{A}}
\newcommandx\sequence[3][2=t,3=\zset]
\def\PP{\mathbb{P}} 
\newcommand{\CPP}[3][]
{\ifthenelse{\equal{#1}{}}{{\mathbb P}\left(\left. #2 \, \right| #3 \right)}{{\mathbb P}_{#1}\left(\left. #2 \, \right | #3 \right)}}
\def\PE{\mathbb{E}} 
\newcommand{\CPE}[3][]
{\ifthenelse{\equal{#1}{}}{{\mathbb E}\left[\left. #2 \, \right| #3 \right]}{{\mathbb E}_{#1}\left[\left. #2 \, \right | #3 \right]}}
\def\tv{\mathrm{tv}}
\def\Cset{\mathcal{C}} 
\DeclareMathOperator{\ReLU}{ReLU}
\tikzset{fc/.style={black,draw=black,rectangle,minimum height=1cm}}
\tikzset{conv/.style={black,draw=black,rectangle,minimum height=1cm}}
\tikzset{pool/.style={black,draw=black,rectangle,minimum height=1cm}}
\theoremstyle{plain}
\newtheorem{theorem}{Theorem}
\newtheorem{assumption}{H\hspace{-3pt}}
\newaliascnt{proposition}{theorem}
\newtheorem{proposition}[proposition]{Proposition}
\newaliascnt{lemma}{theorem}
\newtheorem{lemma}[lemma]{Lemma}
\newaliascnt{corollary}{theorem}
\theoremstyle{definition}
\newaliascnt{definition}{theorem}
\newtheorem{algorithm}{Algorithm}
\newaliascnt{remark}{theorem}
\newtheorem{remark}[remark]{Remark}
\newaliascnt{example}{theorem}
\def\rmd{\mathrm{d}}
\def\1{\mathbbm{1}}
\def\argmax{\operatorname{Argmax}}
\begin{document}

\title[Asynchronous MCMC sampling  for sparse Bayesian inference]{A fast asynchronous MCMC sampler for sparse Bayesian inference}
\thanks{This work is partially supported by the NSF grant DMS 1513040}

\author{Yves Atchad\'e}\thanks{ Y. Atchad\'e: Boston University, 111 Cummington Mall, Boston 02215 MA, United States. {\em E-mail address:} yvesa@umich.edu}
\author{Liwei Wang}\thanks{ L. Wang: Boston University, 111 Cummington Mall, Boston 02215 MA, United States. {\em E-mail address:} wlwfoo@bu.edu}
  
\subjclass[2010]{62F15, 62Jxx}

\keywords{Sparse Bayesian inference, Asynchronous MCMC sampling, MCMC mixing, Bayesian deep learning}

\maketitle

\begin{center} (May 2021) \end{center}

\begin{abstract}
We propose a very fast approximate Markov Chain Monte Carlo (MCMC) sampling framework that is applicable to a large class of sparse Bayesian inference problems, where the computational cost per iteration in several models is of order $O(ns)$, where $n$ is the sample size, and $s$ the underlying sparsity of the model. This cost can be further reduced by data sub-sampling when stochastic gradient Langevin dynamics are employed. The  algorithm is an extension of the asynchronous Gibbs sampler of \cite{johnson:etal:13}, but can be viewed from a statistical perspective as a form of Bayesian iterated sure independent screening (\cite{fan:etal:09}).  We show that in high-dimensional linear regression problems, the Markov chain generated by the proposed algorithm admits an invariant distribution that recovers correctly the main signal with high probability under some statistical assumptions.  Furthermore we show that its mixing time is at most linear in the number of regressors. We illustrate the  algorithm with several   models.
\end{abstract}

\section{Introduction}\label{sec:intro}

 There is a rich  and extensive literature on high-dimensional sparse Bayesian inference built mainly around shrinkage priors and spike and slab priors (see e.g. \cite{mitchell:beauchamp88,george:mcculloch97,castillo:etal:12,castillo:etal:14,atchade:15b,carvalho:etal:10,piironen:etal:17,biswas:etal:2021} and the references therein). However the computational cost per iteration for sampling from the resulting posterior distributions grows at least as $O(n^2p)$ in Gaussian linear regression models with $n$ data samples and $p$ regressors ($n\geq p$), and becomes quickly prohibitive, particularly  in  non-Gaussian models.   Indeed, computing high-dimensional integrals remains the main challenge in the practical implementation of Bayesian inference. The problem has grown much worse  over the last decade or so with the rise of deep neural networks and other highly over-parameterized models (\cite{bhadra:etal:20}). 
 
 As a step forward, we propose herein a very fast but approximate MCMC scheme for sparse Bayesian models with spike and slab priors. 
The algorithm builds on a version of the spike and slab prior  developed in \cite{AB:19}, and closely related to the pseudo-prior device of \cite{carlin:chib:95}. The proposed prior introduces a variable $\delta\in\Delta\eqdef\{0,1\}^p$ (called sparsity structure), with prior distribution $\{\pi(\delta),\;\delta\in\Delta\}$ of the form  
\begin{equation}\label{prior:1}
\pi(\delta) \propto p^{-\textsf{u}\|\delta\|_0},\end{equation}
for some user-defined parameter $\textsf{u}>1$. Given $\delta$ the components of $\theta$ are assumed to be conditionally independent mean-zero Gaussian random variables, with variance $\rho_0^{-1}$ (resp. $\rho_1^{-1}$) if the corresponding component of $\delta$ is $0$ (resp. $1$), for user-defined parameters $\rho_0$ and $\rho_1$.  The limiting case $\rho_0^{-1}=0$ corresponds to the well-known  spike and slab prior with a point mass at $0$ (\cite{mitchell:beauchamp88}). This type of spike and slab priors goes back at least to (\cite{george:mcculloch97}). Given $(\delta,\theta)$ the conditional distribution of the data is then assumed to be $\D\sim f_{\theta_\delta}$, where $\theta_\delta\eqdef\theta\cdot\delta$ is the component-wise product of $\theta,\delta$, and $f_\theta(\cdot)$ is a density on the data sample space. Setting $\ell(\theta) \eqdef \log f_\theta(\D)$, the resulting  posterior distribution of $(\delta,\theta)$ given $\D$ has density on $\Delta\times\rset^p$ given by
\begin{equation}\label{post:Pi}
\Pi(\delta,\theta\vert \D) \propto \left(\frac{1}{p^\textsf{u}}\sqrt{\frac{\rho_1}{\rho_0}}\right)^{\|\delta\|_0} \exp\left( -\frac{\rho_0}{2}\|\theta -  \theta_{\delta}\|_2^2   -\frac{\rho_1}{2}\|\theta_\delta\|_2^2 + \ell(\theta_\delta)\right).\end{equation}
Note that, because the two alternative prior densities $\textbf{N}(0,\rho_0^{-1})$ and $\textbf{N}(0,\rho_1^{-1})$ have densities with respect to the Lebesgue measure, (\ref{post:Pi}) does not possess the well-known trans-dimensionality issue that pose problems with spike and slab priors with points mass at the origin. Furhermore, because the components of $\theta$ are independent under the prior and the likelihood function depends on $\theta$ through $\theta_\delta$, the marginal posterior distribution of $(\delta,\theta_\delta)$ under (\ref{post:Pi}) does not depend on $\rho_0$, and in particular is the same as with  the corresponding spike and slab prior with point mass at the origin. Hence (\ref{post:Pi}) incurs no loss of information in the estimation of $(\delta,\theta_\delta)$ compared with spike and slab priors with point mass at the origin (we refer the reader to \cite{AB:19} for more details).


\subsection{Main contributions}
We propose a fast MCMC method to sample approximately from (\ref{post:Pi}) where  the  computational cost per iteration in generalized linear models is of order $O(ns)$, where $n$ is the sample size, and $s$ the underlying sparsity of the model. This cost can be further reduced by sub-sampling when stochastic gradient Langevin dynamics (\cite{sgld}) is employed.  Furthermore, we show that for  linear regression models the mixing time of the algorithm is $O(p)$, provided a large enough sample size is available. The algorithm can be viewed as a form of Bayesian sure independent screening  (\cite{fan:lv:08,fan:etal:09}) in the sense that, as in sure independent screening, the algorithm alternates between a fast component-wise variable screening step where the components of $\delta$ are sampled independently (conditionally on $\theta$), and a sparse model refit step where the parameter $\theta$ is re-estimated.  From the MCMC viewpoint, the proposed algorithm is an extension of the asynchronous Gibbs sampler (\cite{smola:nar:10,johnson:etal:13,desa:16}) where several variables are updated asynchronously and in parallel.

We test the algorithm empirically on linear and logistic regression models, and with a deep neural network model (lenet-5 applied to the MNIST-FASHION (\cite{xiao2017fashion}) dataset). The  application to logistic regression models  show that the algorithm is an order of magnitude faster than the mean-field variational approximation of (\ref{post:Pi}). In deep neural network models the proposed algorithm combined with stochastic gradient Langevin dynamics can be easily implemented by modifying existing stochastic gradient descent implementation. 

\subsection{Related work}
Sparse model estimation has been a major theme in statistics over the last two decades (\cite{buhlGeer11,hastie:etal:15,wainwright:19}), driven in large part by biomedical and engineering applications. In deep learning, despite the double descent paradox and the common practice of fitting highly  overparametrized models, there is also a growing interest in sparse modeling (\cite{wen:etal:16,louizos:etal:18,bellec:etal:18,gale:etal:19,lee:etal:19,dettmers:etal:19,frankle:etal:19}).  

Most existing Bayesian implementation of large scale sparse models  typically relies on variational approximation of the posterior distribution, and tremendous progress has been made on the topic over the last few years (\cite{louizos:etal:17,ghosh:etal:19,wang:etal:19,tran:etal:20,abdellatif:20,yang:etal:20,zhang:gao:20}).  However variational approximation is a general methodology, not a specific algorithm. And the extra step of building a good variational approximation family for a given problem --  an important requirement for the consistency of the method (\cite{zhang:gao:20}) --  is not a one-size-fits-all process. Furthermore, fitting variational approximation families that are not conjugate is generally a costly nonconvex problem. For instance, we observed on logistic regression models that the mean field variational approximation of (\ref{post:Pi}) is computationally more expensive than the algorithm proposed in this work.

\subsection{Outline} 
The paper is organized as follows. We end the introduction with a compendium of our notations. The asynchronous sampler is developed in Section \ref{sec:algo}. Some  theoretical insights are  provided in Section \ref{sec:theory}, but to keep the focus on the main ideas we placed the proofs in the appendix. The numerical illustrations are collected in Section \ref{sec:numerics}. The paper ends with some concluding remarks in Section \ref{sec:conclusion}. \textsf{MATLAB} code for the logistic regression and deep  neural network examples are available at \textsf{https://github.com/odrinaryliwei/S-SGLD}.

\subsection{Notations}\label{sec:notations}
We introduce here some Markov chain notations that are used below, largely following  \cite{meyn:tweedie:08}.  A Markov kernel $P$ on some measurable space $(\mathbb{T},\mathcal{B})$  acts both on bounded measurable
functions $f$ on $\mathbb{T}$ and on $\sigma$-finite measures $\mu$ on
$(\mathbb{T},\mathcal{B})$ via $Pf(\cdot) \eqdef \int
P(\cdot,\rmd y) f(y)$ and $\mu P(\cdot) \eqdef \int \mu(\rmd x) P(x, \cdot)$ respectively. If $W: \mathbb{T}\to [1, +\infty)$ is a function, the $W$-norm of a function
$f: \mathbb{T}\to \rset$ is defined as $|f|_W \eqdef \sup_{\mathbb{T}} |f|
/W$.  When $W=1$, this is the supremum norm.  If $\mu$ is a signed measure on $(\mathbb{T},\mathcal{B})$, the total variation norm $\| \mu
\|_{\tv}$ is defined as $\| \mu \|_{\tv} \eqdef \frac{1}{2}\sup_{\{f, |f|_1 \leq 1 \}} | \mu(f)|$,  and the $W$-norm of $\mu$  is
defined as $\| \mu \|_{W} \eqdef \frac{1}{2}\sup_{\{g, |g|_W \leq 1 \}} |\mu(g)|$, where $\mu(f)$  denotes the integral $\int_{\mathbb{T}}f(x)\mu(\rmd x)$. Given two Markov kernels $P,Q$ on $(\mathbb{T},\mathcal{B})$, their product is the Markov kernel defined as $PQ(x,A) \eqdef \int
P(x,\rmd y ) Q(y,A)$.  In particular we define $P^n$, the $n$-th power of $P$, as $P^0(x,A) \eqdef \delta_x(A)$   and $P^{n+1} = PP^n$, $n \geq 0$, where $\delta_x(\rmd  t)$ stands for the Dirac mass at $x$.  Note that  $\mu(PQ) = (\mu P)Q$, and $(\mu P)(f) = \mu(Pf)$.

We also collect here  our notations on sparse models.  Throughout our parameter space is $\rset^p$ equipped with its Euclidean norm $\|\cdot\|_2$ and inner product $\pscal{\cdot}{\cdot}$.  We also use $\|\cdot\|_0$ which counts the number of non-zero elements, and $\|\cdot\|_\infty$ which returns the largest absolute value. We set $\Delta\eqdef\{0,1\}^p$. Elements of $\Delta$ are called sparsity structures. For $\delta,\delta'\in\Delta$, we write $\delta\subseteq\delta'$ if $\delta_j\leq \delta_j'$ for all $1\leq j\leq p$, and we write $\delta\supseteq\delta'$ if $\delta'\subseteq\delta$. Given $\delta\in\Delta$, and $\theta\in\rset^p$, we write $\theta_\delta$ to denote the  component-wise product of $\theta$ and $\delta$, and $\delta^c\eqdef 1-\delta$. We will also write $[\theta]_\delta=(\theta_j,\;j\in\{1\leq k\leq p:\;\delta_k=1\})$ which collect the components of $\theta$ with corresponding components of $\delta$ equal to $1$. Conversely, assuming $\|\delta\|_0>0$, and for $u\in\rset^{\|\delta\|_0}$, we define $(u,0)_\delta$ as the element of $\rset^p$ such that $[(u,0)_\delta]_\delta = u$.


\section{The asynchronous sampler}\label{sec:algo}
Probability distributions of the form (\ref{post:Pi}) are commonly handled using Metropolis-Hastings within Gibbs (\cite{robertetcasella04,handbook:11}). As a start we  follow the same approach, and derive an asymptotically exact algorithm that alternates between an update of $\delta$ given $\theta$, and an update of $\theta$  given $\delta$. To  update $\theta$ given $\delta$,  we  utilize the fact that given $\delta$ the selected components (denoted $[\theta]_\delta$) and the non-selected components (denoted $[\theta]_{\delta^c}$) of $\theta$ are conditionally independent, and  that the components of $[\theta]_{\delta^c}$ are i.i.d. $\textbf{N}(0,\rho_0^{-1})$. Assuming $\|\delta\|_0>0$, it is clear from (\ref{post:Pi}) that the conditional distribution of  $[\theta]_\delta$ given $\delta$ has density  on $\rset^{\|\delta\|_0}$ proportional to 
\begin{equation}\label{cond:dist:theta}
u\mapsto \exp\left(-\frac{\rho_1}{2}\|u\|_2^2 + \ell((u,0)_\delta)\right).\;\;\;\end{equation}
We then naturally update $[\theta]_\delta$ using a Markov kernel on $\rset^{\|\delta\|}$ with invariant distribution proportional to (\ref{cond:dist:theta}) that we denote $P_\delta$. Any convenient  MCMC algorithm can be used here (Random Walk Metropolis, Metropolis adjusted Langevin, Hamiltonian Monte Carlo, or others), and one can leverage the sparsity of $\delta$ for a  fast computation of $\ell((u,0)_\delta)$.

We use a Gibbs sampler to update $\delta$ given $\theta$. It comes out  from (\ref{post:Pi}) that  the conditional distribution of $\delta_j$ given $\theta,\delta_{-j}$ is the Bernoulli distribution $\textbf{Ber}(q_j(\delta,\theta))$, with probability of success given by
\begin{equation}\label{cond:dist:eq:1}
q_j(\delta,\theta) \eqdef \left(1 +\exp\left (\mathsf{a} + \frac{1}{2}(\rho_1 -\rho_0)\theta_j^2+ \ell(\theta_{\delta^{(j,0)}}) - \ell(\theta_{\delta^{(j,1)}})\right)\right)^{-1},\end{equation}
where $\mathsf{a} \eqdef \textsf{u}\log(p) + \frac{1}{2}\log(\rho_0/\rho_1) $, and where $\delta^{(j,0)}$ (resp $\delta^{(j,1)}$) is the same as $\delta$ except possibly at component $j$ where $\delta^{(j,0)}_j=0$ (resp. $\delta^{(j,1)}_j=1$). Naturally,  $q_j(\delta,\theta)$ does not depend on $\delta_j$. We update $J$ randomly selected components of $\delta$ at each iteration. Put together these two steps form an asymptotically exact MCMC algorithm to sample from (\ref{post:Pi}) that is our ideal sampler.
 \medskip

\begin{algorithm}\label{algo:1}[Asymptotically Exact Sampler]$\hrulefill$\\
Draw $(\delta^{(0)},\theta^{(0)})$ from some initial distribution, and repeat the following steps for $k=0,\ldots$. Given $(\delta^{(k)},\theta^{(k)})= (\delta,\theta)\in\Delta\times\rset^p$:
\begin{description}
\item [(STEP 1)] For all $j$ such that $\delta_j=0$, draw independently $\bar\theta_j\sim  \textbf{N}(0,\rho_0^{-1})$. Provided that $\|\delta\|_0>0$, draw $[\bar\theta]_\delta\sim P_\delta([\theta]_\delta,\cdot)$,  where  $P_\delta$ is a Markov kernel on $\rset^{\|\delta\|_0}$ with invariant density proportional to (\ref{cond:dist:theta}).
\item [(STEP 2)] Set $\bar\delta = \delta$. Randomly select a subset $\mathsf{J}\subset\{1,\ldots,p\}$ of size $J$.
\begin{itemize}
\item For each $j\in\mathsf{J}$: draw $V_j\sim \textbf{Ber}(q_j(\bar\delta,\bar\theta))$, where $q_j$ is as in (\ref{cond:dist:eq:1}), and set $\bar \delta_j=V_j$.
\end{itemize}
\item [(STEP 3)]  Set $(\delta^{(k+1)},\theta^{(k+1)})= (\bar \delta, \bar \theta)$.
\end{description}
\vspace{-0.6cm}
\end{algorithm}
$\hrulefill$
\medskip

 Assuming that the cost of computing the likelihood function scales like $O(n\|\delta\|_0)$, and ignoring the cost of generating univariate Gaussian random variables,  the computation cost of the $k$-th iteration of Algorithm \ref{algo:1} is of order $O(nJ\|\delta^{(k)}\|_0)$. Clearly that cost  increases with $J$, but  the  mixing time of the resulting algorithm decreases with $J$. We are not aware of any sensible way of selecting $J$ that balances these two costs. For easy subsequent comparisons we will follow the guideline that we set below for selecting $J$ in the asynchronous algorithm.

\subsection{Asynchronous approximation}\label{sec:algo:2}
Algorithm \ref{algo:1} becomes slow in problems where $n$ is large and there is no efficient way of computing the log-likelihood differences $\ell(\theta_{\delta^{(j,0)}}) - \ell(\theta_{\delta^{(j,1)}})$ in (\ref{cond:dist:eq:1}). We propose to speed up this step of the algorithm by replacing the log-likelihood difference by an approximation.
To gain some intuition, consider a linear regression problem where $\ell(\theta) = -\frac{1}{2}\|y-X\theta\|_2^2$, $y\in\rset^n$, $X\in\rset^{n\times p}$, with columns normalized to $\|X_j\|_2=\sqrt{n}$. In that case the second order Taylor approximation of $\ell$ is exact, and writes
\[\ell(\theta_{\delta^{(j,0)}}) - \ell(\theta_{\delta^{(j,1)}}) = -\theta_j\pscal{X_j}{y-X\theta_{\delta^{(j,0)}}} + \frac{\theta_j^2n}{2}.\]
This expression shows that the first derivative term and the constant $\mathsf{a}$ in (\ref{cond:dist:eq:1}) are the main determining terms. To see this, suppose that $j$ is a relevant variable with true regression coefficient $\beta_{\star j}$, say. Suppose also that $j$ is currently not selected ($\delta_j=0$). In that case, the corresponding regression parameter $\theta_j$ is of  order $1/\sqrt{\rho_0}$, since it is drawn from $\textbf{N}(0,\rho_0^{-1})$. Therefore, and  assuming that the regression errors are sub-Gaussian, it can be shown that with $\rho_0 = n$ (as we advocate below),
\[-\theta_j\pscal{X_j}{y-X\theta_{\delta^{(j,0)}}} \approx -\theta_j\theta_{\star j}n = -\left(\theta_j\sqrt{\rho_0}\right) \frac{n}{\sqrt{\rho_0}} \theta_{\star j}\approx \pm |\theta_{\star j}|\sqrt{n},\] 
whereas the  second order term is $O(n/\rho_0)=O(1)$. Hence, provided that the current estimate $\theta_j$ has the correct sign (which happens with probability $1/2$), the first derivative dominates and $\ell(\theta_{\delta^{(j,0)}}) - \ell(\theta_{\delta^{(j,1)}})$ is negatively large, and $q_j(\delta,\theta)$ is close to $1$. Note however that if the variable  $j$ is irrelevant, then the first order term is $O(\sqrt{\log(p)})$, whereas the second order term remains $O(1)$. In that case  the term $\mathsf{a} \eqdef \textsf{u}\log(p) + \frac{1}{2}\log(\rho_0/\rho_1) $ from the prior dominates and  $q_j(\delta,\theta)$ is close to $0$. 

The discussion so far is essentially the idea of residual correlation well-known in variable selection and sure screening: we fit a model without a variable $X_j$, say, and we consider adding $X_j$ to the model if its correlation with the residual is high. 
We extend this idea to the general model as follow by using the  approximation:
\[\ell(\theta_{\delta^{(j,0)}}) - \ell(\theta_{\delta^{(j,1)}}) \approx -\theta_j\frac{\partial \ell}{\partial\theta_j} (\theta_{\delta^{(j,0)}}) -\frac{\theta_j^2}{2}\left(\frac{\partial \ell}{\partial\theta_j} (\theta_{\delta^{(j,0)}})\right)^2,\]
where we revert the sign of the quadratic term for increased sensitivity.  This leads to the following approximation of $q_j(\delta,\theta)$ in (\ref{cond:dist:eq:1}) :
\begin{multline}\label{cond:dist:eq:2}
\tilde q_j(\delta,\theta) \eqdef \left(1 +\exp\left (\mathsf{a} +\frac{1}{2}(\rho_1 -\rho_0)\theta_j^2 - \theta_j G_j(\theta_{\delta})  -\frac{ \theta_j^2}{2} G_j(\theta_{\delta})^2 \right)\right)^{-1},\\
\;\;\;\mbox{ where }\;\; G_j(\cdot) \eqdef \frac{\partial\ell}{\partial \theta_j}(\cdot).\end{multline}

Suppose now that we randomly select a subset $\mathsf{J}$ of $\{1,\ldots,p\}$ as in (STEP 2) of Algorithm \ref{algo:1}, and we need to approximate the $J$ terms $q_j(\delta,\bar\theta)$. In keeping with the idea of residual correlation explained above, we approximate $q_j(\delta,\bar\theta)$  by $\tilde q_j(\vartheta,\bar\theta)$, where $\vartheta = \vartheta(\mathsf{J},\delta) \in\{0,1\}^p$ with $k$-th component defined as
\begin{equation}\label{delta:J}
\vartheta_k = \left\{\begin{array}{ll}\delta_k & \mbox{ if }\;\; k\notin\mathsf{J}\\ 0 & \mbox{ otherwise }. \end{array}\right. 
\end{equation}

For additional flexibility we allow  to use an approximation $\tilde P_\delta$ of the kernel $P_\delta$ in (STEP 2) of Algorithm \ref{algo:1}, and we do not require $\tilde P_\delta$ to have (\ref{cond:dist:theta}) as invariant distribution.  The full algorithm is summarized in Algorithm \ref{algo:2}. 

Note that, because $\vartheta$ does not depend on  $(\delta_j)_{j\in\mathsf{J}}$, the variables $(V_j)_{j\in\mathsf{J}}$ in (STEP 2) of Algorithm \ref{algo:2} are now conditional independent Bernoulli random variables, and can be sampled  in parallel (instead of sequentially as in Algorithm \ref{algo:1}).  Furthermore, the computation of the $J$ probabilities $\tilde q_j(\vartheta,\bar\theta)$ requires the calculation of only one gradient $G(\theta_\vartheta)$.

 \medskip

\begin{algorithm}\label{algo:2}[Asynchronous sampler]$\hrulefill$\\
Draw $(\delta^{(0)},\theta^{(0)})$ from some initial distribution, and repeat the following steps for $k=0,\ldots$.  Given $(\delta^{(k)},\theta^{(k)})= (\delta,\theta)\in\Delta\times\rset^p$:
\begin{description}
\item [(STEP 1)] For all $j$ such that $\delta_j=0$, draw independently $\bar\theta_j\sim  \textbf{N}(0,\rho_0^{-1})$. Provided that $\|\delta\|_0>0$, draw $[\bar\theta]_\delta\sim \tilde P_\delta([\theta]_\delta,\cdot)$,  where  $\tilde P_\delta$ is a Markov kernel on $\rset^{\|\delta\|}$.
\item [(STEP 2)] Set $\bar\delta = \delta$. Randomly select a subset $\mathsf{J}\subset\{1,\ldots,p\}$ of size $J$. Form the vector $\vartheta = \vartheta(\mathsf{J},\delta)$ as in (\ref{delta:J}).
\begin{itemize}
\item For each $j\in\mathsf{J}$, draw independently $V_j\sim \textbf{Ber}(\tilde q_j(\vartheta,\bar\theta))$, where $\tilde q_j(\vartheta,\bar\theta)$ is as in (\ref{cond:dist:eq:2}), and set $\bar \delta_j = V_j$.
\end{itemize}
\item [(STEP 3)]  Set $(\delta^{(k+1)},\theta^{(k+1)})= (\bar \delta, \bar \theta)$.
\end{description}
\vspace{-0.6cm}
\end{algorithm}
$\hrulefill$
\medskip

Assuming again a likelihood function cost of $O(n\|\delta\|_0)$, and ignoring the cost of generating univariate Gaussian random variables,  the computation cost of the $k$-th iteration of Algorithm \ref{algo:2} is now of order $O(n\|\delta^{(k)}\|_0 + J)$, which can be substantially better than $O(nJ\|\delta^{(k)}\|_0)$  achieved by Algorithm \ref{algo:1}, depending on  $J$. Here it is clearly advantageous to take $J$ large, as close to $n\|\delta^{(k)}\|_0$  as possible. However there is a third new dimension to consider here: as $J$ increases, the limiting distribution of Algorithm \ref{algo:2} (assuming it exists) diverges further away from $\Pi(\cdot\vert\D)$, due to the accumulation of errors in the asynchronous sampling. We argue below that a sensible choice is setting $J=100$, or $J = \alpha n$, for some small fraction $\alpha$ ($\alpha\in [0.01,0.1]$).

Algorithm \ref{algo:2} has several interesting connections. From a statistical perspective the algorithm can be viewed as a Bayesian analog of the  iterated sure independent screening (ISIS) of (\cite{fan:lv:08,fan:etal:09}). Sure independent screening is a statistical inference algorithm that alternates between a fast component-wise variable screening step (based on marginal correlation thresholding, or marginal maximum likelihood estimate thresholding), and a model refit step.  Algorithm \ref{algo:2} has the same structure: (STEP 2) corresponds to the variable screening step -- which boils down to residual correlation in the linear regression case -- followed by a refit step based on MCMC draws. Unlike SIS that relies on hard-thresholding, the variable screening step of Algorithm \ref{algo:2} uses the prior distribution to  control sparsity. 

Viewed through the lense of MCMC methods, Algorithm \ref{algo:2} appears as an approximate version of Algorithm \ref{algo:1} where the update of $\delta$  in (STEP 2) is replaced by (an inexact form of) the asynchronous Gibbs sampler aka Hogwild! (\cite{smola:nar:10,johnson:etal:13}), recently analyzed by \cite{desa:16,daskalakis:18}. Indeed, we note that in (STEP 2) of algorithm \ref{algo:2} the newly simulated variable $(\bar\delta_{j_i})_i$ are conditionally independent and the update of $\bar\delta_{j_i}$ is based on $\vartheta$, a variable that remains unchanged through the sweep.

\subsection{Further extension using stochastic gradient Langevin dynamics}

Most statistical problems require a full pass through the dataset to evaluate the likelihood function and its derivatives. Therefore in big data problems the cost of computing the likelihood and its gradient in Algorithm \ref{algo:2} may become a computational bottleneck. Stochastic gradient Langevin dynamics (SGLD) algorithms have recently emerged as very useful algorithms for big data problems where only a subset of the data is used to approximate the likelihood at each iteration (\cite{sgld,ahn:etal:12,ma:fox:15,li:etal:16,dubey:etal:16,bardenet:doucet:17}). To be more specific, we suppose here that the log-likelihood function has the form
\[\ell(\theta) \eqdef \sum_{i=1}^n \ell_i(\theta),\;\;\;\;\;\;\theta\in\rset^p.\]
 In that case, provided that the log-likelihood functions has a Lipschitz gradients, one can naturally take  $\tilde P_\delta$ in (STEP 1) of Algorithm \ref{algo:2} as a SGLD kernel. The resulting algorithm is presented in Algorithm \ref{algo:4}.  Note that one can also approximate the Bernoulli probability $\tilde q_j$ in (\ref{cond:dist:eq:2}) using the selected mini-batch. For that purpose, and given a mini-batch $\mathsf{I}\subset\{1,\ldots,n\}$ of size $B$, $1\leq j\leq p$, and $(\delta,\theta)\in\Delta\times \rset^p$, we set
\begin{multline}\label{cond:dist:eq:3}
\hat q_j(\delta,\theta) \eqdef \left(1 +\exp\left (\mathsf{a} +\frac{1}{2}(\rho_1 -\rho_0)\theta_j^2 - \theta_j \hat G_j(\theta_{\delta})  -\frac{ \theta_j^2}{2} \hat G_j(\theta_{\delta})^2 \right)\right)^{-1},\\
\;\;\;\mbox{ where }\;\; \hat G_j(\cdot) = \frac{n}{B}\frac{\partial}{\partial \theta_j}\left[\sum_{i\in\mathsf{I}} \ell_i\right](\cdot).\end{multline}

For numerical implementation it is important to notice that in (STEP 2) all the probability $\hat q_j(\vartheta,\bar\theta)$  are  computed based on the same gradient estimate $\hat G(\bar\theta_{\vartheta})$, that is computed only once.

\medskip

Using the same cost computing assumption as above,  and ignoring the cost of generating univariate Gaussian random variables,  we see that the computation cost of the $k$-th iteration of Algorithm \ref{algo:4} is  now of order $O(B\|\delta^{(k)}\|_0 + J)$. However, Algorithm \ref{algo:4} now has a higher approximation error. Our numerical experiments suggest  that this higher approximation error does not impact mixing, but rather the quality of the limiting distribution.

\medskip

\begin{algorithm}\label{algo:4}[Asynchronous sparse SGLD]$\hrulefill$\\
Draw $(\delta^{(0)},\theta^{(0)})$ from some initial distribution, and repeat the following steps for $k=0,\ldots$.  Given $(\delta^{(k)},\theta^{(k)})= (\delta,\theta)\in\Delta\times\rset^p$:
\begin{description}
\item [(STEP 1)] For all $j$ such that $\delta_j=0$, draw independently $\bar\theta_j\sim  \textbf{N}(0,\rho_0^{-1})$. Provided that $\|\delta\|_0>0$, randomly select a data mini-batch $\mathsf{I}\subset\{1,\ldots,n\}$ of size $B$, draw $Z\sim\textbf{N}(0, I_{\|\delta\|_0})$, and set 
\begin{equation}\label{sgld}
[\bar\theta]_{\delta} = [\theta]_{\delta} +\gamma\left(-\rho_1[\theta]_{\delta} + \frac{n}{B}\sum_{i\in\mathsf{I}}  \nabla \ell_i(\theta_{\delta})\right) +\sqrt{2\gamma}Z,\end{equation}
where $\gamma>0$ is the step-size.
\item [(STEP 2)] Set $\bar\delta = \delta$. Randomly select a subset $\mathsf{J}\subset\{1,\ldots,p\}$ of size $J$. Form the vector $\vartheta = \vartheta(\mathsf{J},\delta)$ as in (\ref{delta:J}). 
\begin{itemize}
\item For each $j\in\mathsf{J}$, draw independently $\bar \delta_j\sim \textbf{Ber}(\hat q_j(\vartheta,\bar\theta))$, where $\hat q_j$ is as in (\ref{cond:dist:eq:3}).
\end{itemize}
\item [(STEP 3)]  Set $(\delta^{(k+1)},\theta^{(k+1)})= (\bar \delta, \bar \theta)$.
\end{description}
\vspace{-0.6cm}
\end{algorithm}
$\hrulefill$
\medskip


\section{Approximate correctness}\label{sec:theory}
In this section the dataset $\D$ is assumed fixed, and we shall omit the dependence of the Markov kernels on $\D$. Let $K$ (resp. $\tilde K$) be the transition kernel of the Markov chain generated by Algorithm \ref{algo:1}  (resp. Algorithm \ref{algo:2}).  We first write the expression of $K$ and $\tilde K$ and introduce some useful notations in the process. Given $\theta\in\rset^p$, and $j\in\{1,\ldots,p\}$, let $Q_{\theta,j}$ be the Markov kernel on $\Delta$ which, given $\delta\in\Delta$, leaves $\delta_i$ unchanged for all $i\neq j$, and update  $\delta_j$ using a draw from $\textbf{Ber}(q_j(\delta,\theta))$:
\[Q_{\theta,j}(\delta,\delta') =q_j(\delta,\theta)^{\delta_j'} (1- q_j(\delta,\theta))^{1-\delta_j'}  \prod_{i\neq j}\textbf{1}_{\{\delta_i = \delta_i'\}},\;\;\delta,\delta'\in\Delta.\]
 Given $\mathsf{J}=\{j_1,\ldots,j_J\}\subseteq \{1,\ldots,p\}$, we multiply the Markov kernels $Q_{\theta,j_i}$ together to form $Q_{\theta,\mathsf{J}}$:
\[Q_{\theta,\mathsf{J}} \eqdef  Q_{\theta,j_1}\times \cdots\times Q_{\theta,j_J},\]
where the Markov kernel multiplication is as defined in Section \ref{sec:notations}. Let $\tilde Q_{\theta,\mathsf{J}}$ be the Markov kernel on $\Delta$ given by
\[\tilde Q_{\theta,\mathsf{J}}(\delta,\delta') \eqdef \prod_{j\notin\mathsf{J}} \textbf{1}_{\{\delta_j'=\delta_j\}} \prod_{i=1}^J \tilde q_{j_i}(\vartheta,\theta)^{\delta_{j_i}'}(1 - \tilde q_{j_i}(\vartheta,\theta))^{1-\delta_{j_i}'},\;\;\delta,\delta'\in\Delta,\]
where $\vartheta=\vartheta(\mathsf{J},\delta)$ is as defined in (\ref{delta:J}). The Markov kernel $K$ of algorithm \ref{algo:1} can then be written as
\[K((\delta,\theta);(\rmd\delta',\rmd\theta')) = K_\delta(\theta,\rmd\theta') \sum_{\mathsf{J}:\;|\mathsf{J}|=J}{p\choose  J}^{-1} Q_{\theta',\mathsf{J}}(\delta,\rmd\delta'),\]
where $K_\delta$ denotes the transition kernel of (STEP 1), which can be written as
\[K_\delta(\theta,\rmd \theta') \eqdef P_\delta([\theta]_\delta,\rmd [\theta']_\delta) \prod_{j:\;\delta_j=0} \textbf{N}(0,\rho_0^{-1})(\rmd \theta_j'),\]
where $\textbf{N}(\mu,\sigma^2)(\rmd x)$ denotes the probability measure of the Gaussian distribution $\textbf{N}(\mu,\sigma^2)$ on $\rset$.  The Markov kernel $\tilde K$ has the same structure, but with $P_\delta$ replaced by $\tilde P_\delta$ and $Q_{\theta,\mathsf{J}}$ replaced by $\tilde Q_{\theta,\mathsf{J}}$. Hence the difference between Algorithm \ref{algo:1} and Algorithm \ref{algo:2} is driven by the difference between $P_\delta$ and $\tilde P_\delta$ (STEP 1), and the difference between $Q_{\theta,\mathsf{J}}$ and $\tilde Q_{\theta,\mathsf{J}}$ (STEP 2). By construction, the invariant distribution of $K$ is $\Pi(\cdot\vert \D)$. We  show next that under some additional assumptions $\tilde K$  possesses an invariant distribution that we denote $\tilde\Pi(\cdot\vert \D)$. We make the assumption that  for each fixed model $\delta$, the kernel $\tilde P_\delta$ used in (STEP 1) satisfies  a geometric drift condition.  More specifically, we make the following assumption.

\begin{assumption}\label{H1}
For each $1\leq j\leq p$ there exists $V_j:\rset\to [1,\infty)$, such that $\max_j \int V_j(x)e^{-\rho_0^{-1} x^2/2}\rmd x<\infty$, and the following holds. For each $\delta\in\Delta$, with $\|\delta\|_0>0$, there exist $\lambda_\delta\in (0,1)$, $b_\delta<\infty$ such that
\[\tilde P_\delta V_\delta(u) \leq \lambda_\delta V_\delta(u) + b_\delta,\;\;u\in\rset^{\|\delta\|_0},\]
where $V_\delta(u) \eqdef V(\delta,(u,0)_\delta)$, and $V(\delta,\theta)\eqdef \sum_{j=1}^p \delta_jV_j(\theta_j)$.
\end{assumption}

The next result follows easily from H\ref{H1}.

\begin{proposition}\label{prop:1}
Assume H\ref{H1}, and suppose that $\tilde K$ is phi-irreducible and aperiodic, and for all $b<\infty$, the set $\{(\delta,\theta)\in\Delta\times \rset^p:\; V(\delta,\theta) \leq b\}$ is a petite set for $\tilde K$. Then $\tilde K$ possesses a unique invariant distribution $\tilde \Pi$, and there exist $\tilde\lambda\in(0,1)$, a constant $C_0$ such that for all $(\delta,\theta)\in\Delta\times\rset^p$, and all $k\geq 0$,
\begin{equation}\label{geo:ergo:tilde}
\|\tilde K^k((\delta,\theta),\cdot)  - \tilde\Pi\|_\tv \leq  C_0\tilde \lambda^k V^{1/2}(\delta,\theta).
\end{equation}
\end{proposition}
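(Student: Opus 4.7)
The plan is to verify that the joint kernel $\tilde K$ satisfies a geometric drift condition on $\Delta\times\rset^p$ with Lyapunov function $V(\delta,\theta)=\sum_{j=1}^p\delta_j V_j(\theta_j)$, then combine this with the phi-irreducibility/aperiodicity/petiteness hypotheses and invoke a standard geometric ergodicity theorem from \cite{meyn:tweedie:08} (e.g.\ Theorem 15.0.1). The $V^{1/2}$ exponent on the right-hand side of (\ref{geo:ergo:tilde}) would then be obtained by descending from a $V$-drift to a $V^{1/2}$-drift via Jensen's inequality.

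I would first compute $\tilde K V(\delta,\theta)$ by following the two steps of Algorithm \ref{algo:2}. Since $\bar\delta_j\in\{0,1\}$, the trivial bound $V(\bar\delta,\bar\theta)\le \sum_{j=1}^p V_j(\bar\theta_j)$ essentially decouples (STEP 2) from the drift analysis. In (STEP 1), I would split this sum according to whether $\delta_j=1$ or $\delta_j=0$. The ``on'' part equals $V_\delta([\bar\theta]_\delta)$, which is controlled directly by H\ref{H1}:
\[\PE\bigl[V_\delta([\bar\theta]_\delta)\mid\delta,\theta\bigr]=\tilde P_\delta V_\delta([\theta]_\delta) \le \lambda_\delta\, V(\delta,\theta)+b_\delta.\]
The ``off'' part is a sum of expectations under $\textbf{N}(0,\rho_0^{-1})$, uniformly bounded by $p\,c^\ast$ where $c^\ast$ is finite by the integrability clause of H\ref{H1}. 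Because $\Delta$ is a finite set, setting $\bar\lambda\eqdef\max_{\delta:\|\delta\|_0>0}\lambda_\delta\in(0,1)$ and $\bar b\eqdef\max_\delta b_\delta + pc^\ast<\infty$ yields the uniform geometric drift
\[\tilde K V(\delta,\theta) \le \bar\lambda\, V(\delta,\theta)+\bar b,\qquad(\delta,\theta)\in\Delta\times\rset^p,\]
with the trivial case $\delta=0$ handled separately since $V(0,\theta)=0$ and $\tilde K V(0,\theta)\le pc^\ast$.

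Next, by Jensen's inequality and the subadditivity $\sqrt{a+b}\le\sqrt{a}+\sqrt{b}$,
\[\tilde K V^{1/2}(\delta,\theta)\le \bigl(\tilde K V(\delta,\theta)\bigr)^{1/2}\le \sqrt{\bar\lambda}\,V^{1/2}(\delta,\theta)+\sqrt{\bar b},\]
so (working if necessary with the shifted function $W=V+1$ to respect the standard $W\ge 1$ convention) $V^{1/2}$ is itself a geometric Lyapunov function for $\tilde K$. Since the sublevel sets of $V^{1/2}$ coincide with those of $V$, the petiteness hypothesis of the proposition transfers verbatim. The standard drift-plus-petiteness theorem of \cite{meyn:tweedie:08} then yields the existence of a unique invariant measure $\tilde\Pi$ together with the $V^{1/2}$-geometric ergodicity bound (\ref{geo:ergo:tilde}), using $\|\cdot\|_\tv\le\|\cdot\|_{V^{1/2}}$ to pass from the $V^{1/2}$-norm bound to total variation.

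The proof is largely routine Meyn-Tweedie machinery, so I would not expect any real obstacle. The single point requiring mild care is the uniformity of the contraction rate $\lambda_\delta$ in $\delta$; this is obtained for free because $\Delta=\{0,1\}^p$ is finite, which also bounds the accumulated ``off'' contributions uniformly. The decoupling afforded by $\bar\delta_j\le 1$ is the other key observation: it makes the Bernoulli sweep in (STEP 2) irrelevant for the drift, reducing the analysis to the $\theta$-dynamics of (STEP 1) alone.
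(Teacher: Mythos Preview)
Your proof is correct and follows essentially the same Meyn--Tweedie strategy as the paper: establish a geometric drift for $\tilde K$ with Lyapunov function $V$, then invoke Theorem~15.0.1 of \cite{meyn:tweedie:08}. The only difference is that the paper bounds the effect of (STEP~2) on $V$ more carefully---obtaining $\int \tilde Q_{\theta,\mathsf{J}}(\delta,\rmd\delta')V(\delta',\theta)\le V(\delta,\theta)+\sum_{i=1}^J V_{j_i}(\theta_{j_i})\textbf{1}_{\{\delta_{j_i}=0\}}$ and hence an additive constant of order $J$ rather than your $p\,c^\ast$---and leaves the Jensen passage from $V$-drift to $V^{1/2}$-drift implicit; neither distinction matters for Proposition~\ref{prop:1} itself, though the paper's sharper inequality~(\ref{control:QJ}) is reused in the proof of Theorem~\ref{thm:2}.
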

\begin{proof}
See Section \ref{sec:proof:prop:1}.
\end{proof}

We compare next the stationary distribution $\tilde\Pi(\cdot \vert \D)$ of Algorithm \ref{algo:2}  to the posterior distribution $\Pi(\cdot\vert \D)$ in (\ref{post:Pi}).  Since the invariant distributions $\Pi(\cdot\vert \D)$ and $\tilde \Pi(\cdot\vert \D)$ are eigen-measures of their corresponding Markov operators, comparing $\tilde\Pi(\cdot \vert \D)$ and $\Pi(\cdot \vert \D)$ is a form of Davis-Kahan theorem (\cite{davis:kahan:70}). As such we expect an upper bound on $\tilde\Pi(\cdot \vert \D) - \Pi(\cdot \vert \D)$  to involve the inverse of the spectral gap of $K$ or $\tilde K$, and a comparison of the kernels $K$ and $\tilde K$.  Several such results  have been derived recently in the literature (\cite{pillai:smith:15,rudolf:etal:18,johndrow:mattingly:18}), and the references therein. To the exception of (\cite{pillai:smith:15}) most of these results uses a strategy that compares $K((\delta,\theta),\cdot)$ and $\tilde K((\delta,\theta),\cdot)$ for all $(\delta,\theta)$, yielding bounds that cannot leverage Bayesian posterior contraction. We develop a more suitable bound  that involves  comparing $K((\delta,\theta),\cdot)$ and $\tilde K((\delta,\theta),\cdot)$ only for $(\delta,\theta)\in\cB$, for some set $\cB$ such that $\Pi(\cB\vert \D)\approx 1$. Our approach is similar to, but differs in the details from (\cite{pillai:smith:15}). We shall make the assumption that 

\begin{assumption}\label{H2} There exist $\Delta_0\subseteq\Delta$, measurable sets  $\cB_\delta\subseteq\rset^p$ for each $\delta\in\Delta_0$, and
\[\cB\subseteq \bigcup_{\delta\in\Delta_0} \{\delta\}\times \cB_\delta,\]
such that $\Pi(\cB^c\vert\D)$ is small, where $\cB^c$ denotes the complement of $\cB$. Furthermore,
\begin{equation}\label{control:KV}
\sup_{(\delta,\theta)\in\cB}\;\left[V(\delta,\theta) + K V(\delta,\theta) + \tilde K V(\delta,\theta)\right]<\infty.
\end{equation}
\end{assumption}
\medskip

Given the sets in H\ref{H2}, we define 
\begin{equation}\label{def:eta:1}
\eta_1 \eqdef \sup_{(\delta,\theta)\in\cB}\;\|P_\delta([\theta]_\delta,\cdot) - \tilde P_\delta([\theta]_\delta,\cdot)\|_\tv,
\end{equation}
\begin{equation}\label{def:eta:2}
\eta_2 \eqdef \sup_{(\delta,\theta)\in\cB} \left\| \int\tilde K_\delta(\theta,\rmd\theta') \sum_{\mathsf{J}:\;|\mathsf{J}|=J}{p\choose  J}^{-1} \left(Q_{\theta',\mathsf{J}}(\delta,\cdot) - \tilde Q_{\theta',\mathsf{J}}(\delta,\cdot)\right)\right\|_\tv.
\end{equation}

\begin{theorem}\label{thm:2}Suppose that H\ref{H1} and H\ref{H2} hold, and $\Pi(V\vert \D)<\infty$. Then $\tilde K$ possesses a unique invariant distribution $\tilde \Pi$, and  there exists a finite constant $C_0$ such that
\begin{equation}\label{bound:thm2}
\|\tilde \Pi(\cdot\vert \mathcal{D}) - \Pi(\cdot\vert \mathcal{D})\|_\tv\leq \frac{C_0}{1-\tilde\lambda}\left(\Pi(\cB^c\vert\D)^{1/2} + \eta_1^{1/2} + \eta_2^{1/2} \right),
\end{equation}
where $\tilde\lambda$ is as in Proposition \ref{prop:1}.
\end{theorem}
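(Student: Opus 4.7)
The plan is to combine Proposition~\ref{prop:1} with a Davis--Kahan style perturbation argument. Proposition~\ref{prop:1} supplies existence and uniqueness of $\tilde\Pi$ together with the quantitative ergodicity estimate (\ref{geo:ergo:tilde}). Using $\Pi K=\Pi$, $\tilde\Pi\tilde K=\tilde\Pi$, and the telescoping identity $K^n-\tilde K^n=\sum_{k=0}^{n-1}K^{n-k-1}(K-\tilde K)\tilde K^k$, one obtains for every $n\ge 1$
\begin{equation*}
\Pi-\tilde\Pi=(\Pi-\tilde\Pi)\tilde K^n+\sum_{k=0}^{n-1}\Pi(K-\tilde K)\tilde K^k.
\end{equation*}
Since $\Pi(V^{1/2}|\D)\le\sqrt{\Pi(V|\D)}<\infty$, applying (\ref{geo:ergo:tilde}) to the zero-mass signed measure $\Pi-\tilde\Pi$ forces the first term to vanish in total variation as $n\to\infty$, leaving the series $\Pi-\tilde\Pi=\sum_{k=0}^{\infty}\Pi(K-\tilde K)\tilde K^k$.

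Set $\nu\eqdef\Pi(K-\tilde K)$; the key observation is $\nu(1)=0$, so that for any $|f|\le 1$, centering gives $\nu\tilde K^k(f)=\int\nu(dx)[\tilde K^k f(x)-\tilde\Pi(f)]$, which by (\ref{geo:ergo:tilde}) is bounded in absolute value by $2C_0\tilde\lambda^k\int|\nu|(dx)\,V^{1/2}(x)$. Since $|\nu|(dy)\le\int\Pi(dx)\,|K(x,\cdot)-\tilde K(x,\cdot)|(dy)$ as positive measures, it suffices to control
\begin{equation*}
I\eqdef\int\Pi(dx)\int|K(x,\cdot)-\tilde K(x,\cdot)|(dy)\,V^{1/2}(y)
\end{equation*}
by a constant multiple of $\sqrt{\eta_1}+\sqrt{\eta_2}+\sqrt{\Pi(\cB^c|\D)}$; summing the geometric series $\sum_k\tilde\lambda^k=1/(1-\tilde\lambda)$ then produces (\ref{bound:thm2}).

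I would split $I$ at $x\in\cB$ and $x\in\cB^c$. On $\cB$, the triangle inequality applied to the two-step decomposition of $K$ and $\tilde K$ gives $\|K(x,\cdot)-\tilde K(x,\cdot)\|_\tv\le\eta_1+\eta_2$: the (STEP 1) contribution reduces to $\|P_\delta([\theta]_\delta,\cdot)-\tilde P_\delta([\theta]_\delta,\cdot)\|_\tv$ because the Gaussian parts on the unselected coordinates cancel, and the (STEP 2) contribution is exactly the expression appearing in the definition of $\eta_2$. A Cauchy--Schwarz on the positive measure $|K(x,\cdot)-\tilde K(x,\cdot)|$ then gives
\begin{equation*}
\int|K(x,\cdot)-\tilde K(x,\cdot)|(dy)\,V^{1/2}(y)\le\sqrt{2(\eta_1+\eta_2)\,[KV(x)+\tilde KV(x)]},
\end{equation*}
which is $O(\sqrt{\eta_1+\eta_2})$ on $\cB$ by (\ref{control:KV}). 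On $\cB^c$, Jensen's inequality yields $|K(x,\cdot)-\tilde K(x,\cdot)|(V^{1/2})\le\sqrt{KV(x)}+\sqrt{\tilde KV(x)}$, and a second Cauchy--Schwarz in $\Pi$ gives $\int_{\cB^c}\Pi(dx)\sqrt{KV(x)}\le\sqrt{\Pi(\cB^c|\D)\,\Pi(V|\D)}$ by invariance $\Pi K=\Pi$, and similarly for the $\tilde K$ term. Finally, $\sqrt{\eta_1+\eta_2}\le\sqrt{\eta_1}+\sqrt{\eta_2}$ absorbs the mixed root.

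The main obstacle I expect is the $\cB^c$ estimate, and specifically establishing $\Pi(\tilde KV|\D)<\infty$ so the second Cauchy--Schwarz closes: H\ref{H1} only supplies per-$\delta$ drift constants $\lambda_\delta,b_\delta$ whose behaviour across $\delta$ is not directly controlled. A natural route is to use the prior penalty $p^{-\textsf{u}\|\delta\|_0}$ in (\ref{prior:1}) together with $\Pi(V|\D)<\infty$ to obtain a uniform drift $\Pi(\tilde KV|\D)\le\bar\lambda\,\Pi(V|\D)+\bar b$; alternatively (\ref{control:KV}) could be read as the relevant finiteness being part of the working assumption and the $\cB^c$ bound relaxed accordingly.
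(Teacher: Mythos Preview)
Your approach is essentially the paper's: same telescoping identity, same $V^{1/2}$-ergodicity bound from (\ref{geo:ergo:tilde}), same $\cB$/$\cB^c$ split. On $\cB$ the paper packages your Cauchy--Schwarz step as a small coupling lemma, $|\mu(f)-\nu(f)|\le\|\mu-\nu\|_\tv^{1/2}\bigl(\sqrt{\mu(f^2)}+\sqrt{\nu(f^2)}\bigr)$, applied to the two-step product structure of $K$ and $\tilde K$; this yields the same $\sqrt{\eta_1}+\sqrt{\eta_2}$ dependence as your direct bound $\|K(x,\cdot)-\tilde K(x,\cdot)\|_\tv\le\eta_1+\eta_2$ followed by Cauchy--Schwarz on $|K(x,\cdot)-\tilde K(x,\cdot)|$.

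Your worry about $\Pi(\tilde KV\vert\D)$ is not an obstacle: $\Delta=\{0,1\}^p$ is finite, so $\lambda\eqdef\max_\delta\lambda_\delta<1$ and $\max_\delta b_\delta<\infty$, and the proof of Proposition~\ref{prop:1} already derives the uniform drift $\tilde KV\le\lambda V+C_0$ (equation (\ref{eq:drift:tilde:proof})). Integrating against $\Pi$ gives $\Pi(\tilde KV)\le\lambda\,\Pi(V)+C_0<\infty$, and this is exactly what the paper itself invokes for its $\cB^c$ estimate. The resulting constant may depend on $p$, but the theorem statement allows $C_0$ to absorb that.
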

\begin{proof}
See Section \ref{sec:proof:thm:2}. 
\end{proof}

\begin{remark}\label{rem:thm:2}
Theorem \ref{thm:2} quantifies the observation that Algorithm \ref{algo:2} behaves like Algorithm \ref{algo:1} if the Markov kernels $\tilde P_\delta$ and $\tilde Q_{\theta,\mathsf{J}}$ are close to $P_\delta$ and $Q_{\theta,\mathsf{J}}$ respectively. The result can  leverage posterior contraction properties of $\Pi$ for a more refined comparison. However the dependence of the bound on the spectral gap $1-\tilde\lambda$ is a major roadblock for applying Theorem \ref{thm:2}. Indeed, even in the relatively simple setting of linear regression  models, the ways in which the spectral gap $1- \tilde\lambda$ of $\tilde K$  depends on $n$ and $p$  is difficult to establish. This prevents up from fully characterizing the proximity between $\tilde \Pi(\cdot\vert \mathcal{D})$  and $\Pi(\cdot\vert \mathcal{D})$ in terms of $n,p$, and other primitives of the problem. 
\end{remark}

\subsection{Approximate correctness for linear regression models}
In this section we take a closer look at Algorithm \ref{algo:2} in the case of linear regression models. For reasons explained in Remark \ref{rem:thm:2} we will not rely on Theorem \ref{thm:2}. Instead we analyze directly  the marginal chain $\{\delta^{(k)},\;k\geq 0\}$ produced by Algorithm \ref{algo:2}, building on a coupling argument originally developed by \cite{desa:16}.

Given some random response $Y\in\rset^n$, and a nonrandom design matrix $X\in\rset^{n\times p}$, we consider in this section a log-likelihood function given by
\begin{equation}\label{lin:ll}
\ell(\theta) = -\frac{1}{2\sigma^2}\|Y-X\theta\|_2^2,\;\;\theta\in\rset^p,\end{equation}
for a known constant $\sigma^2$. We write $X_j$ to denote the $j$-th column of $X$, and $X_\delta$ to denote the sub-matrix of $X$ comprised of the columns of $X$ for which  $\delta_j=1$. Without any loss of generality we assume throughout that 
\begin{equation}\label{norm:Xj}
\|X_j\|_2 = \sqrt{n}.
\end{equation}
 In the linear regression considered here,  the conditional distribution of $\theta$ given $\delta$ has a closed-form Gaussian distribution. We can thus assume that (STEP 1) of Algorithm \ref{algo:2} is performed by taking a draw directly from  the conditional distribution of $\theta$ given $\delta$. In this case H\ref{H1} and the assumption of Proposition hold with $V_j\equiv 1$, $\lambda_\delta = 0$, $b_\delta=1$. Hence without any additional assumption we can apply Proposition \ref{prop:1} and conclude that Algorithm \ref{algo:2} admits an invariant distribution $\tilde\Pi(\cdot\vert\D)$.  
To compare $\Pi(\cdot\vert\D)$ and $\tilde\Pi(\cdot\vert\D)$ we make the following assumption.

\begin{assumption}\label{H:post:contr}
\begin{enumerate}
\item There exists an absolute constant $c_0<\infty$ such that
\begin{equation}
\max_{j\neq k}\left|\pscal{X_j}{X_k}\right| \; \leq \; \sqrt{c_0n\log(p)}.
\end{equation}
\item   There exist a parameter value $\theta_\star\in\rset^p$ (with sparsity support denoted $\delta_\star$, and $\ell^0$ norm $s_\star\eqdef\|\delta_\star\|_0$),  such that 
\begin{equation}
\PE_\star\left[Y - X\theta_\star \right]=0,\;\;\;\;\;\PP_\star\left[\left|\pscal{u}{Y-X\theta_\star}\right|>\sigma t\right]\leq c_1e^{-\frac{t^2}{2\|u\|_2^2}},\end{equation}
for all $u\in\rset^n$, $t\in\rset$, and some absolute constant $c_1$.
\item As $n,p\to\infty$, the ratios $\|\theta_\star\|_\infty/\log(p)$ and  $n/p$ remain bounded from above by some absolute constant $c_2$.
\end{enumerate}
\end{assumption}

\medskip

In what follows we write $\PP$ and $\PE$ to denote the probability measure and expectation operator of the Markov chains defined by the algorithms, and we write $\PP_\star$ and $\PE_\star$ for the probability measure and expectation operator related to the data generating distribution as assumed in H\ref{H:post:contr}.

\begin{remark}
Assumption H\ref{H:post:contr}-(2) assumes that the regression errors are sub-Gaussian. (2)-(3) are mild assumptions.  Assumption H\ref{H:post:contr}-(1) is also a standard assumption is sparse signal recovery and assumes that the correlation between any two distinct columns of $X$ is of order $\sqrt{\log(p)/n}$. 
\end{remark}

\medskip

 We set
\[\underline{\theta}_\star \eqdef \min_{j:\;\delta_{\star j}=1} |\theta_{\star j}|.\]

\begin{theorem}\label{thm:3}
Consider the linear regression model presented above and assume  H\ref{H:post:contr}. Suppose that 
\[\rho_1 = 1,\;\;\mbox{ and }\;\;\rho_0 = \frac{n}{\sigma^2}.\]
Let $\mathbb{P}$ denote the distribution of the Markov chain $\{\delta^{(k)},\;k\geq 0\}$ generated by Algorithm \ref{algo:2}, and started from the null model ($\|\delta^{(0)}\|_0=0$).  There exists some constant $C_1,C_2,C_3$ that depends only on $\sigma^2,\|\theta_\star\|_\infty$, $c_0,c_1$, and $c_2$, such that for for all $n,p\geq 2$, if
\begin{equation}\label{ss:cond}
n \geq C_1\max\left(\underline{\theta}_\star^{-2}(1 + s_\star^3)\log(p),\; \underline{\theta}_\star^{-2}J^2\log(p),\; (\log(p))^3\right),\;\mbox{ and }\;\; \mathsf{u} \geq C_2(1+s_\star)^2,
\end{equation}
it holds for all $k\geq 1$,
\begin{multline}\label{eq:approx_mt}
\PE_\star\left[\max_{j:\;\delta_{\star j}=1}\;\;\left|\mathbb{P}(\delta_j^{(k)}=1) -\Pi(\delta_j=1\vert \D)\right|\right] \; \\
\leq  \left(1- \frac{3}{10}\frac{J}{p}\right)^k + 10\left(e^{-C_3 \underline{\theta}_\star \sqrt{n}} + \frac{1}{p}\right).
\end{multline}
\end{theorem}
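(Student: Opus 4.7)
The plan is to use the triangle inequality
\[
\bigl|\mathbb{P}(\delta_j^{(k)}=1)-\Pi(\delta_j=1\vert\D)\bigr| \leq \bigl|\mathbb{P}(\delta_j^{(k)}=1)-1\bigr|+\bigl|1-\Pi(\delta_j=1\vert\D)\bigr|
\]
for $j$ with $\delta_{\star j}=1$, and bound each term on a high-probability event $\mathcal{E}_\star$ for the data. I would construct $\mathcal{E}_\star$ by requiring the usual concentration inequalities for the sub-Gaussian noise, in particular $\max_j|\langle X_j,Y-X\theta_\star\rangle|\leq \sigma\sqrt{Cn\log p}$ together with analogous bounds for $\|X_S^\top(Y-X\theta_\star)\|_2$ over sparsity supports $S$ of size $\lesssim s_\star$. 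Hypothesis H\ref{H:post:contr}-(2) and a union bound give $\PP_\star(\mathcal{E}_\star^c)\leq c/p$, which is absorbed into the $1/p$ term in the final bound.

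On $\mathcal{E}_\star$, the posterior term $|1-\Pi(\delta_j=1\vert\D)|$ is handled by a spike-and-slab posterior contraction argument in the style of \cite{AB:19}: marginal-likelihood-ratio computations under the specific choices $\rho_0=n/\sigma^2$, $\rho_1=1$ and $\mathsf{u}\geq C_2(1+s_\star)^2$ give $\Pi(\delta=\delta_\star\vert\D)\geq 1-\exp(-C\underline\theta_\star\sqrt{n})$, from which the bound on the posterior term follows uniformly in $j$.

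The algorithmic term is the heart of the argument and relies on a coupling adapted from \cite{desa:16}. As a preliminary step I would establish pointwise estimates on $\tilde q_j(\vartheta,\theta)$: substituting $G_j(\theta_\vartheta)=\sigma^{-2}\langle X_j,Y-X\theta_\vartheta\rangle$ into the exponent of (\ref{cond:dist:eq:2}) and comparing with $\mathsf{a}=\mathsf{u}\log p+\tfrac12\log(\rho_0/\rho_1)$, one shows that on $\mathcal{E}_\star$, for $\vartheta$ in a \emph{typical set} $\mathcal{T}$ of models with $\|\vartheta\|_0\lesssim s_\star$, and for $\theta$ drawn from the exact conditional Gaussian given $\delta$, one has $\tilde q_j(\vartheta,\theta)\geq 1-e^{-C\underline\theta_\star\sqrt{n}}$ for $j$ relevant and $\tilde q_j(\vartheta,\theta)\leq 1/p$ for $j$ irrelevant, each with conditional probability (over the Gaussian randomness in $\theta$) bounded below by an absolute constant. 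The factor $3/10$ in the theorem arises as an explicit lower bound on this constant and reflects the residual-correlation heuristic discussed in Section \ref{sec:algo:2} (essentially, the chance that the Gaussian draw $\theta_j$ has the correct sign in the first-order term).

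With these screening estimates I would couple $\{\delta^{(k)}\}$ with a stationary copy $\{\tilde\delta^{(k)}\}\sim\tilde\Pi$ by using the same random subset $\mathsf{J}$ and the same uniform driving variables for the Bernoulli updates in (STEP 2) at each iteration. On the event that both chains lie in $\mathcal{T}$, any coordinate $j\in\mathsf{J}$ couples with probability at least $3/10$, giving an overall coupling rate of $3J/(10p)$ per step at coordinate $j$ and hence the $(1-3J/(10p))^k$ factor; a short separate argument based on the same screening estimates shows that both chains enter $\mathcal{T}$ essentially instantaneously. The hardest step will be the uniform control of $\tilde q_j$ over $\vartheta\in\mathcal{T}$: $\vartheta$ is random and can differ substantially from $\delta_\star$ because zeroing out $J$ coordinates may remove several relevant signals at once, and even a single missing relevant signal in $\vartheta$ inflates the residual by $O(\underline\theta_\star\sqrt{n})$. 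Bounding the worst-case residual $Y-X\theta_\vartheta$ uniformly over $\vartheta\in\mathcal{T}$ is exactly where the conditions $n\gtrsim\underline\theta_\star^{-2}s_\star^3\log p$ and $n\gtrsim\underline\theta_\star^{-2}J^2\log p$ enter, with the $s_\star^3$ factor absorbing the accumulated effect of missing signals and the $J^2$ factor absorbing the variance inflation from erasing up to $J$ components in $\vartheta$ simultaneously.
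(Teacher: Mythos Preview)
Your screening estimates and the identification of the $3/10$ constant are on target, but the coupling step has a genuine gap. You propose to couple $\{\delta^{(k)}\}$ with a stationary copy of \emph{itself} drawn from $\tilde\Pi$, and then argue that both chains lie in a typical set $\mathcal{T}$. This is circular: concentration of $\tilde\Pi$ on $\mathcal{T}$ is essentially what the theorem is meant to deliver, and you have no independent handle on $\tilde\Pi$. Relatedly, your claim that the null-started chain enters $\mathcal{T}$ ``essentially instantaneously'' fails if $\mathcal{T}$ requires $\delta\supseteq\delta_\star$: selecting all $s_\star$ relevant coordinates takes order $p/J$ steps, which is precisely the mixing time being bounded.

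The paper's key idea is to couple $\{\delta^{(k)}\}$ not with a copy of Algorithm~\ref{algo:2}, but with a stationary copy $\{\check\delta^{(k)}\}$ of the \emph{exact} Algorithm~\ref{algo:1}, drawn from $\Pi(\cdot\vert\mathcal{D})$, and to use the coupling inequality $|\mathbb{P}(\delta_j^{(k)}=1)-\Pi(\delta_j=1\vert\mathcal{D})|\leq\mathbb{P}(\delta_j^{(k)}\neq\check\delta_j^{(k)})$ directly, with no detour through $1$. Posterior contraction for $\Pi$ is available independently (Lemma~\ref{post:contr}) and yields $\check\delta^{(k)}\in\mathcal{C}_{s_2}=\{\delta\supseteq\delta_\star,\ \|\delta\|_0\leq s_2\}$ at every step. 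The good set is therefore \emph{asymmetric}: $\mathbb{T}=\Delta_{s_1}\times\mathcal{C}_{s_2}$, where the Algorithm~\ref{algo:2} chain is only required to stay sparse ($\|\delta^{(k)}\|_0\leq s_1$, Lemma~\ref{lem:moment}, which holds from $k=0$ since $\delta^{(0)}=0$), while the superset condition falls on $\check\delta^{(k)}$ alone. The recursion for $\mathbb{P}(\delta_j^{(k)}\neq\check\delta_j^{(k)})$ then splits on whether the coordinate already agrees: if $\delta_j^{(k)}=\check\delta_j^{(k)}=1$ the two Bernoulli draws (one with $\tilde q_j$, one with the exact $q_j$) match with probability $1-o(1)$; if $\delta_j^{(k)}=0\neq\check\delta_j^{(k)}=1$ they match with probability at least $3/10$ over the Gaussian draw of $\theta_j$. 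This asymmetric case analysis is what produces the contraction factor $1-\tfrac{3J}{10p}$.

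Your triangle inequality through $1$ is not wrong in itself: it could be salvaged by dropping the coupling altogether and running a direct one-chain recursion on $\mathbb{P}(\delta_j^{(k)}=0)$, splitting on $\delta_j^{(k)}\in\{0,1\}$ and using the same screening bounds for $1-\tilde q_j$. But as written, the coupling with $\tilde\Pi$ does not close.
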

\begin{proof}
See Section \ref{sec:proof:thm:3}.
\end{proof}

The theorem shows that in linear regression models the limiting distribution of Algorithm \ref{algo:2} recovers correctly  the relevant components of the signal. The bound in (\ref{eq:approx_mt}) can be interpreted as a mixing time bound for the Markov chain $\{\delta^{(k)},\;k\geq 0\}$, where  convergence to stationarity is measured using total variation distance on the relevant one-dimensional marginal distributions.  Importantly, the result shows that the convergence rate is at most linear in  $p/J$. However, the first part of (\ref{ss:cond}) shows that $J$ cannot be taken too large.  The theorem also shows that  the correct scaling for the prior parameter $\rho_0$ in order to achieve a good mixing is $\rho_0\propto n$. We recall that the statistical performance of the posterior does not depend on $\rho_0$. 

The first part of (\ref{ss:cond}) imposes some minimum sample size requirement. We noted empirically that the mixing time of Algorithm \ref{algo:2} degrades when $n$ is too small compared to $p$, particularly in logistic regression models. This suggests that (\ref{ss:cond}) represents some  genuine information limit of the problem.  In limited data settings where (\ref{ss:cond}) may not hold, we recommend combining Algorithm \ref{algo:2} with simulated tempering or related methods for improved mixing. However in the interest of space, we do not pursue these tempering ideas here.

The dependence of (\ref{ss:cond}) on $\underline{\theta}_\star$ is the so-called $\beta$-min condition that is commonly needed for correct model selection (see \cite{meinshausenetal09} for discussion). This condition has also appeared elsewhere in the analysis of high-dimensional MCMC samplers (\cite{yang:etal:15,atchade:asg}). The condition on $\mathsf{u}$ in (\ref{ss:cond}) is admittedly very hard to check since $s_\star$ is not known. We found in practice that for linear and logistic regression models the algorithm performs well when $\mathsf{u}$ is simply taken in the range $\mathsf{u} \in (1,2]$. 

With the same proof strategy, we believe that Theorem \ref{thm:3} can be extended to statistical models that possess the restricted strong concavity property (\cite{negahbanetal10}), under the additional assumption  that one can sample exactly from the conditional distribution of $\theta$ given $\delta$. We did not pursue this because of the limited applicability: the exact sampling assumption is highly unrealistic for non-Gaussian models.  Extending Theorem \ref{thm:3} to cases where a Markov kernel is used in (STEP 1) seems more challenging, but is likely to still hold if the Markov kernel has a strong drift toward the level sets of the target distribution. We leave this for potential future research.

\section{Numerical illustration}\label{sec:numerics}
\subsection{Linear regression}\label{sec:lm}
To illustrate Theorem \ref{thm:3} we estimate empirically the mixing time of Algorithm \ref{algo:1} and Algorithm \ref{algo:2} using the coupling methodology of  \cite{biswas2019estimating}, for increasing values of $p$. We refer the reader to Appendix \ref{append:coupled:chains} for a brief description of the estimation method and the coupled chain used. Here is the simulation set up. We generate $X\in\rset^{n\times p}$ with independent rows drawn from $\textbf{N}_p(0,\Sigma)$, where $\Sigma_{ij} = \varrho^{|j-i|}$, where $\varrho\in\{0,0.9\}$. Then we draw $Y\sim \textbf{N}_n(X\theta_\star,\sigma I_n)$, with $\sigma =1$, and a sparse $\theta_\star$ with $10$ non-zero components uniformly drawn from $(-7,-6)\cup (6,7)$. We scale the sample size as $n=p/2$. For all the results we set
\[\rho_0 = n,\;\;\;\;\rho_1 = 1, \mbox{ and  }\;\; \textsf{u}=1.5.\]
We set $J= 100$ for both MCMC samplers.  To estimate the mixing times we replicated the coupled chains $50$ times.  The estimated mixing times are given on Figure \ref{fig:mixing_lm}, and indeed shows a  linear trend. The results also show that the asynchronous sampler mixes slightly faster than Algorithm \ref{algo:1}, as we expected, due to the quadratic boost in the approximation.  We also look at the sample path of the penalized log-likelihood 
\[\bar\ell(\theta,\delta\vert Y',X') \eqdef -\frac{1}{2\sigma^2}\|Y'-X'\theta_\delta\|_2^2   -  \frac{\rho_1}{2}\|\theta_\delta\|_2^2,\]
evaluated on a test dataset $(Y',X')$ (generated independently from the traning set $(Y,X)$ but from the same model) along the MCMC iterations. By posterior contraction, we expect $\bar\ell(\theta^{(k)},\delta^{(k)}\vert Y',X')$ to concentrate around $\bar\ell(\theta_\star,\delta_\star\vert Y',X')$ as $k\to\infty$. The speed with which $\bar\ell(\theta^{(k)},\delta^{(k)}\vert Y',X')$ approaches $\bar\ell(\theta_\star,\delta_\star\vert Y',X')$ during the MCMC sampling is another empirical indication of mixing. For this comparison we run the MCMC samplers for $\textsf{Niter} = \max(2000,p-2000)$ number of iterations. Figures \ref{fig:pll_lm} and \ref{fig:pll_lm_rho09} show the averages of 50 penalized log-likelihood  sample paths (for each MCMC  sample we generate a new training and test datasets with the same $\theta_\star$). These averaged sample paths offer another look into the mixing of the samplers that is consistent with the empirical mixing times estimates. 

We also compare the parameter estimates. On a given MCMC run we evaluate the accuracy  of the parameter estimation by
\begin{equation}\label{eq:ek}
\e\eqdef \frac{1}{\textsf{Niter}-N_0}\sum_{k=N_0+1}^{\textsf{Niter}} \frac{\|\theta^{(k)}\cdot\delta^{(k)} - \theta_\star\|_2}{\|\theta_\star\|_2},\end{equation}
for a burn-in $N_0$ that we set at $N_0 = \textsf{Niter}-1000$, where $\textsf{Niter}$ is the number of MCMC iterations. Figure \ref{fig:re_lm} and \ref{fig:re_lm_rho09} show the distributions of the relative errors $\e$ produced by Algorithm \ref{algo:1} and \ref{algo:2} under various settings. Again, the difference between Algorithm \ref{algo:1} and Algorithm \ref{algo:2} remains small, even in the case $\varrho=0.9$.

\medskip
\begin{figure}
	\centering
	\includegraphics[width=.49\textwidth]{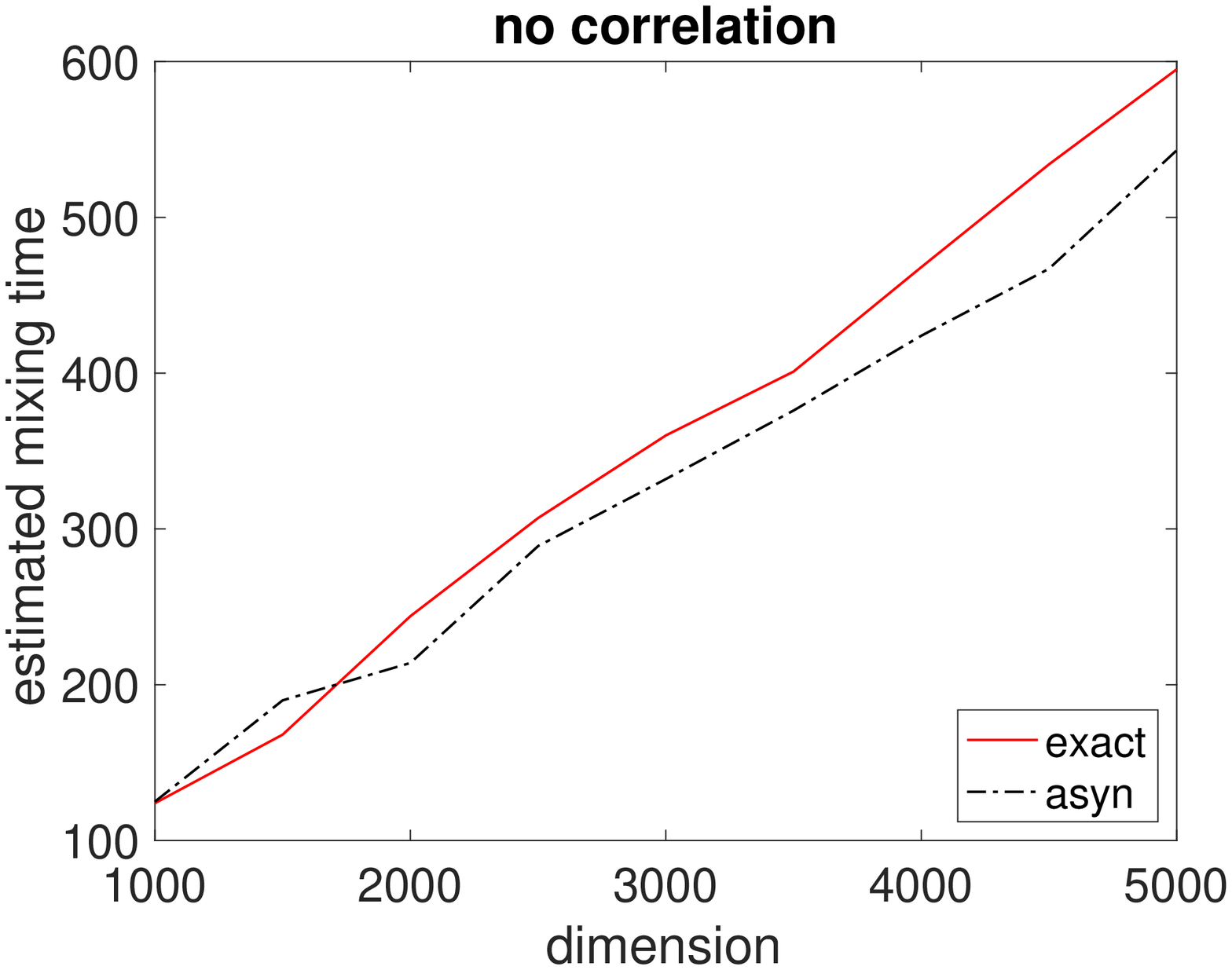}
	\includegraphics[width=.49\textwidth]{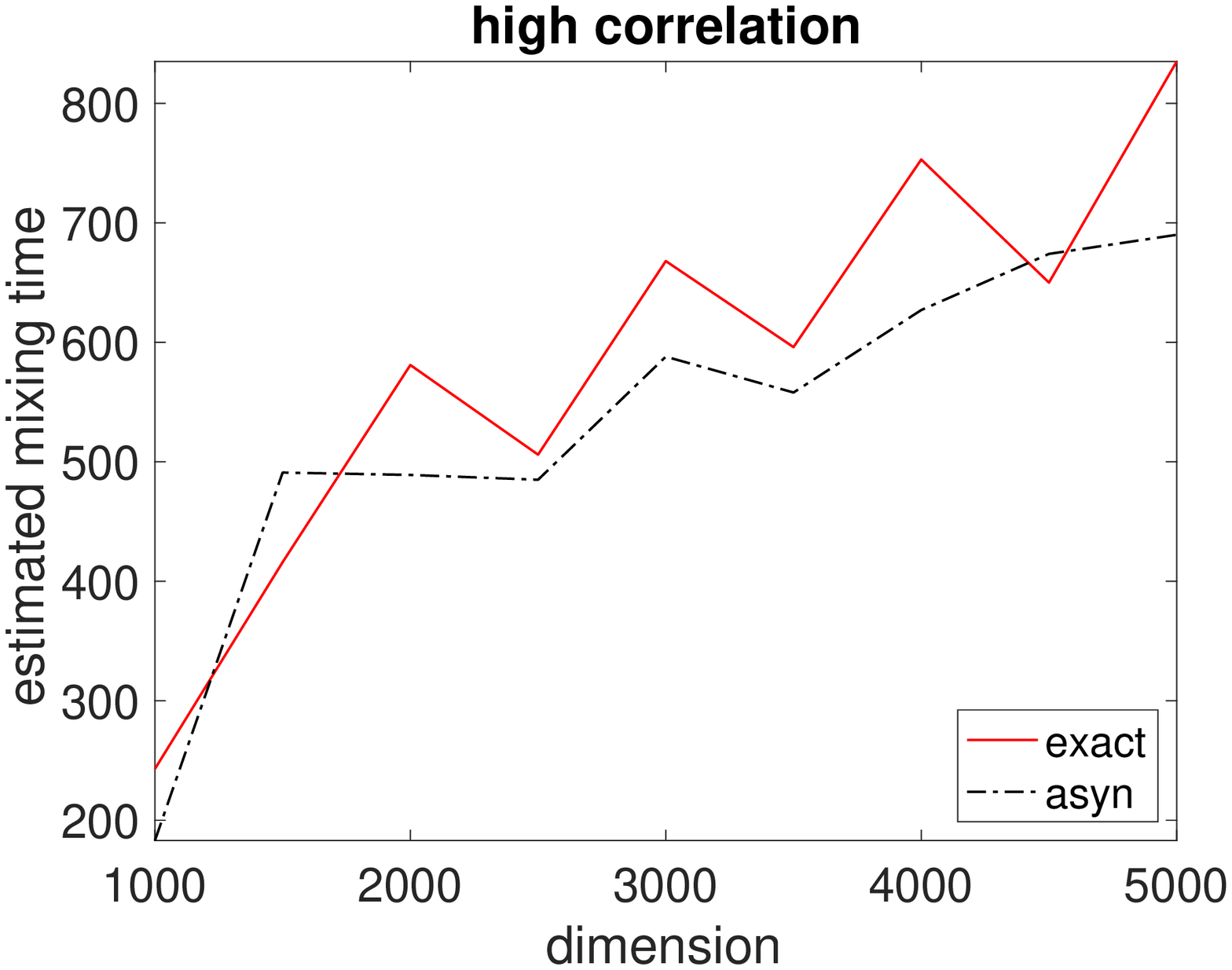}
	\caption{Estimated mixing time for a linear regression example. Figure on the left (resp. right) is $\varrho=0$ (resp. $\varrho=0.9$).}\label{fig:mixing_lm}
\end{figure}

\medskip

\begin{figure}
	\centering
	\includegraphics[width=.49\textwidth]{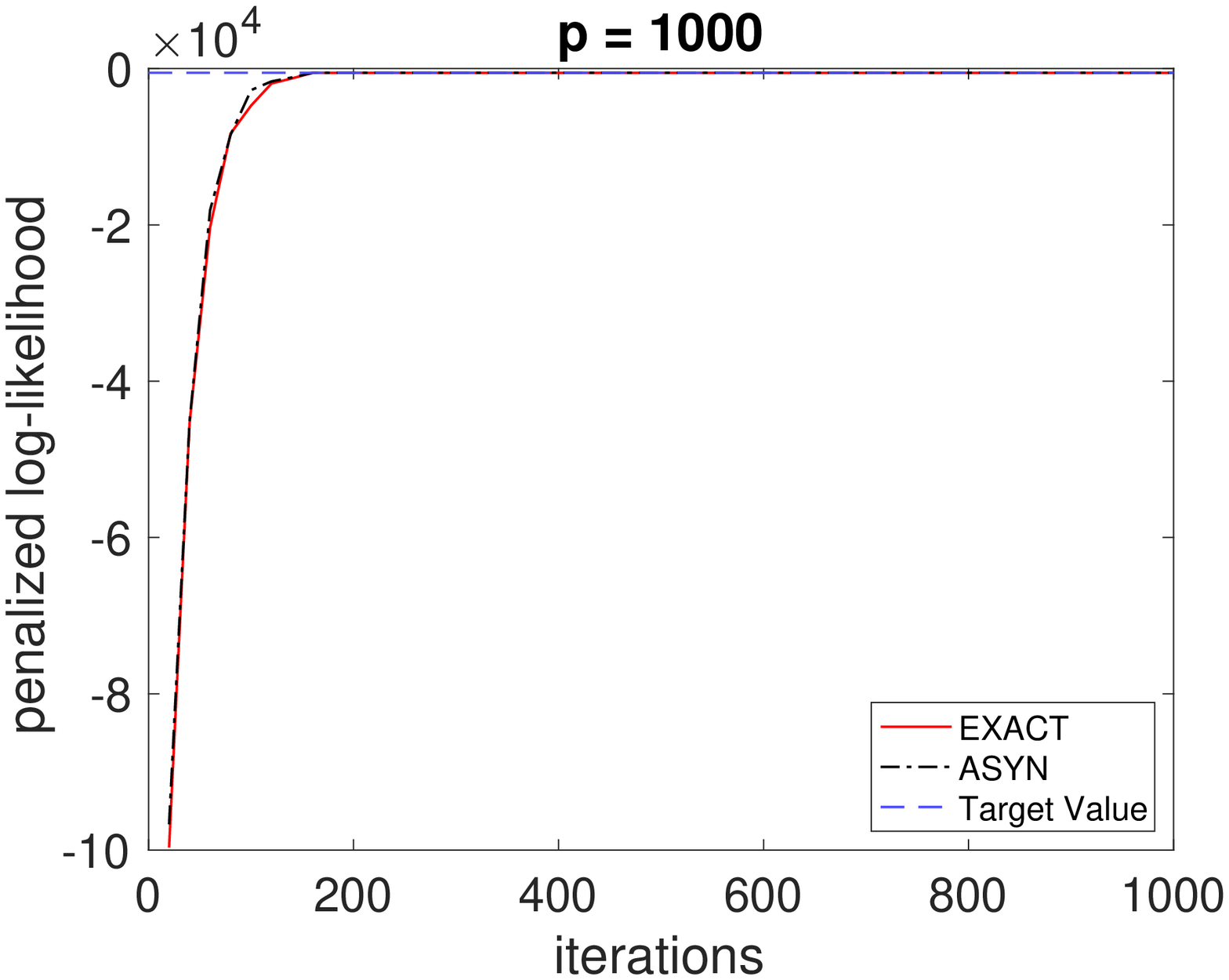}
	\includegraphics[width=.49\textwidth]{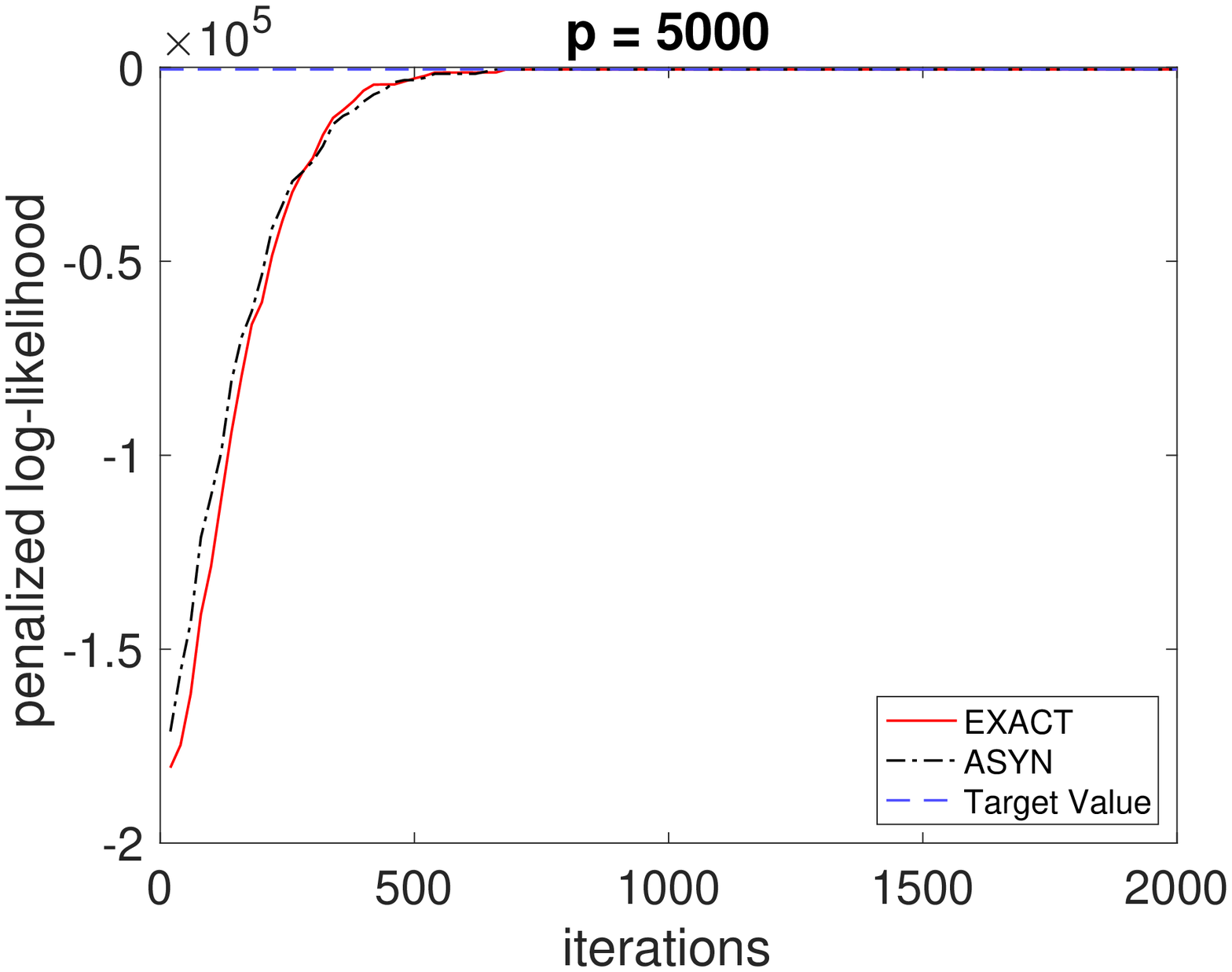}
	\caption{Averaged sample paths of penalized log-likelihood values in linear regression with $\varrho=0$}\label{fig:pll_lm}
\end{figure}
\medskip

\begin{figure}
	\centering
	\includegraphics[width=.49\textwidth]{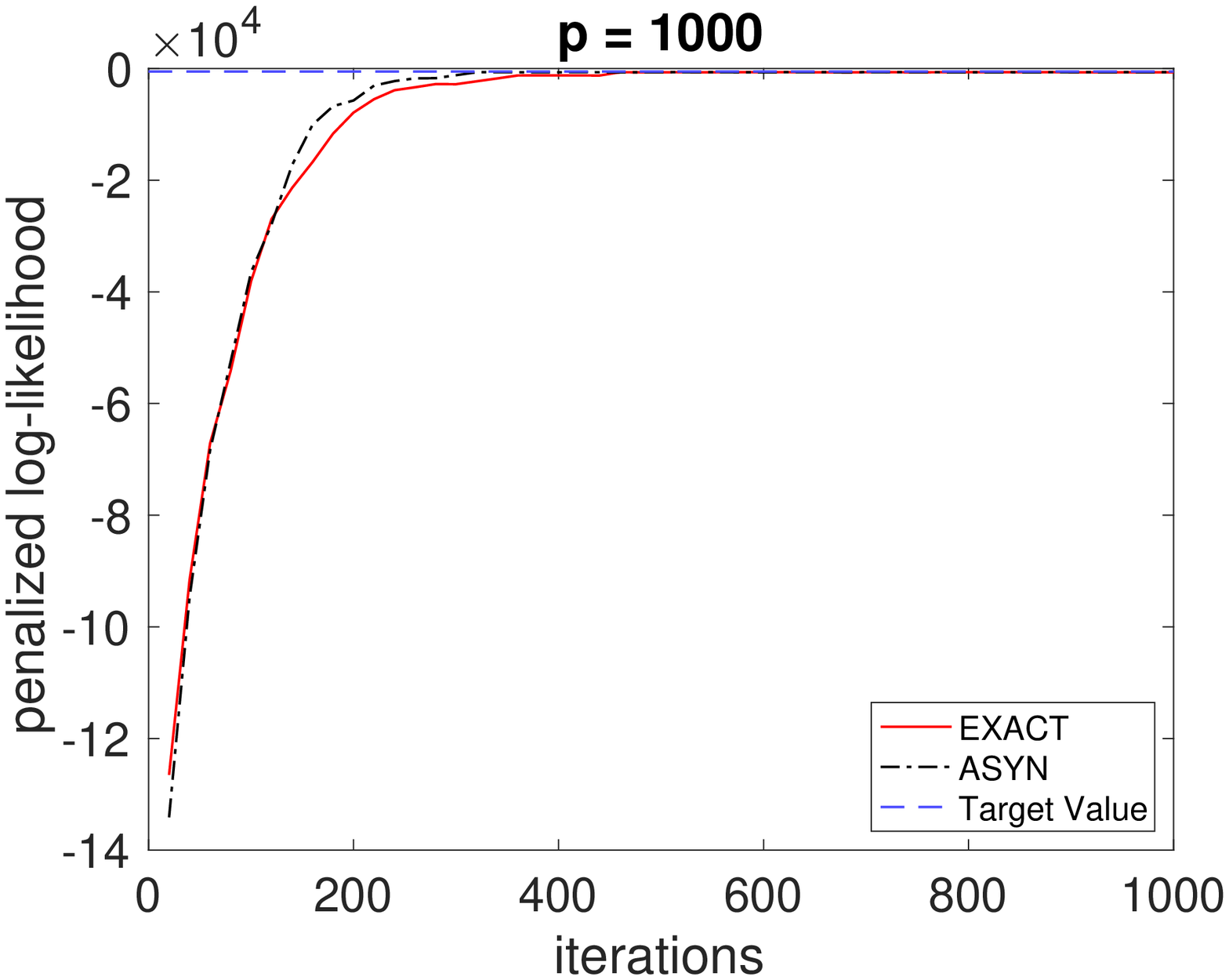}
	\includegraphics[width=.49\textwidth]{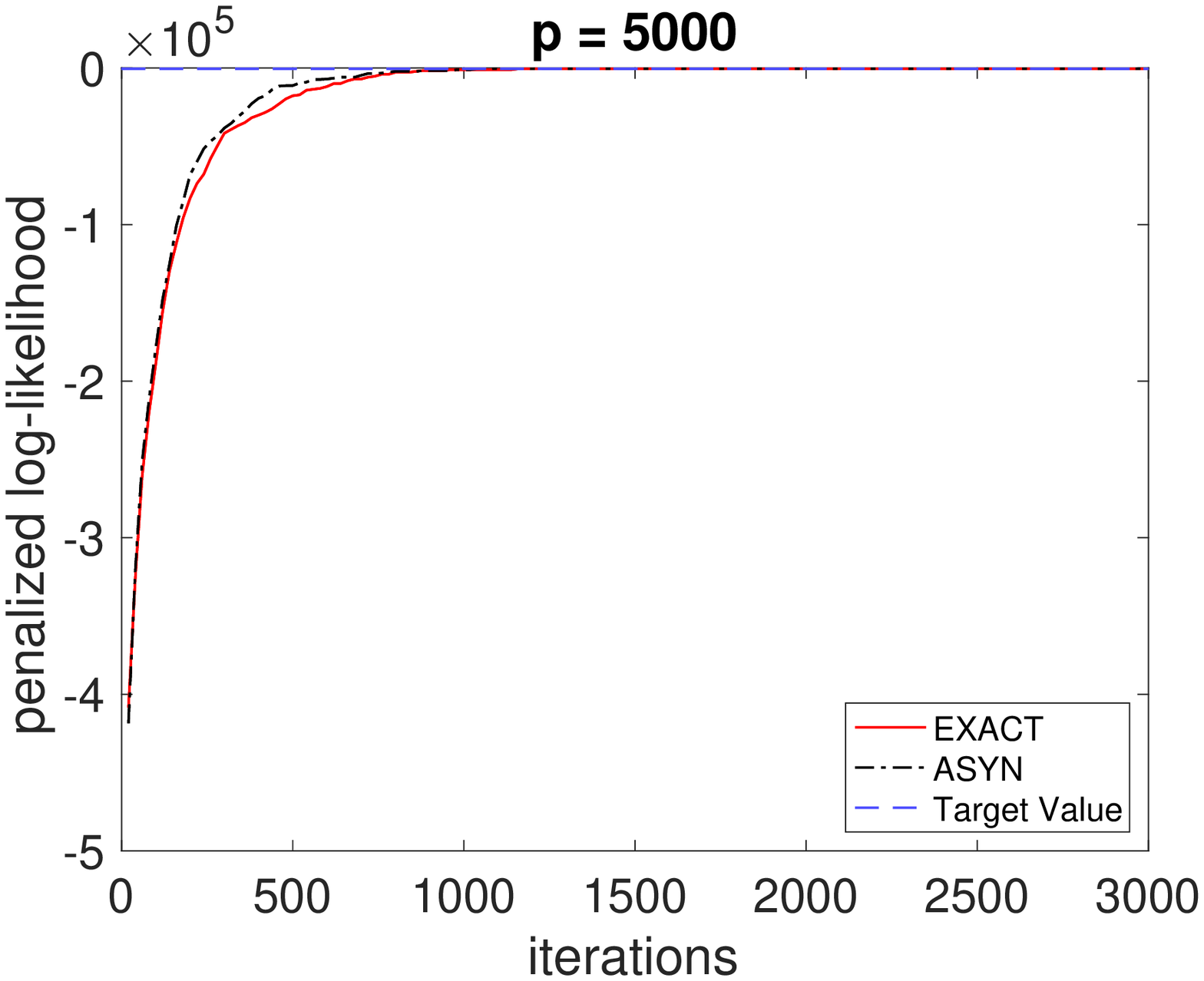}
	\caption{Averaged sample paths of penalized log-likelihood values in linear regression with $\varrho=0.9$}\label{fig:pll_lm_rho09}
\end{figure}
\medskip

\begin{figure}
	\centering
	\includegraphics[width=.49\textwidth]{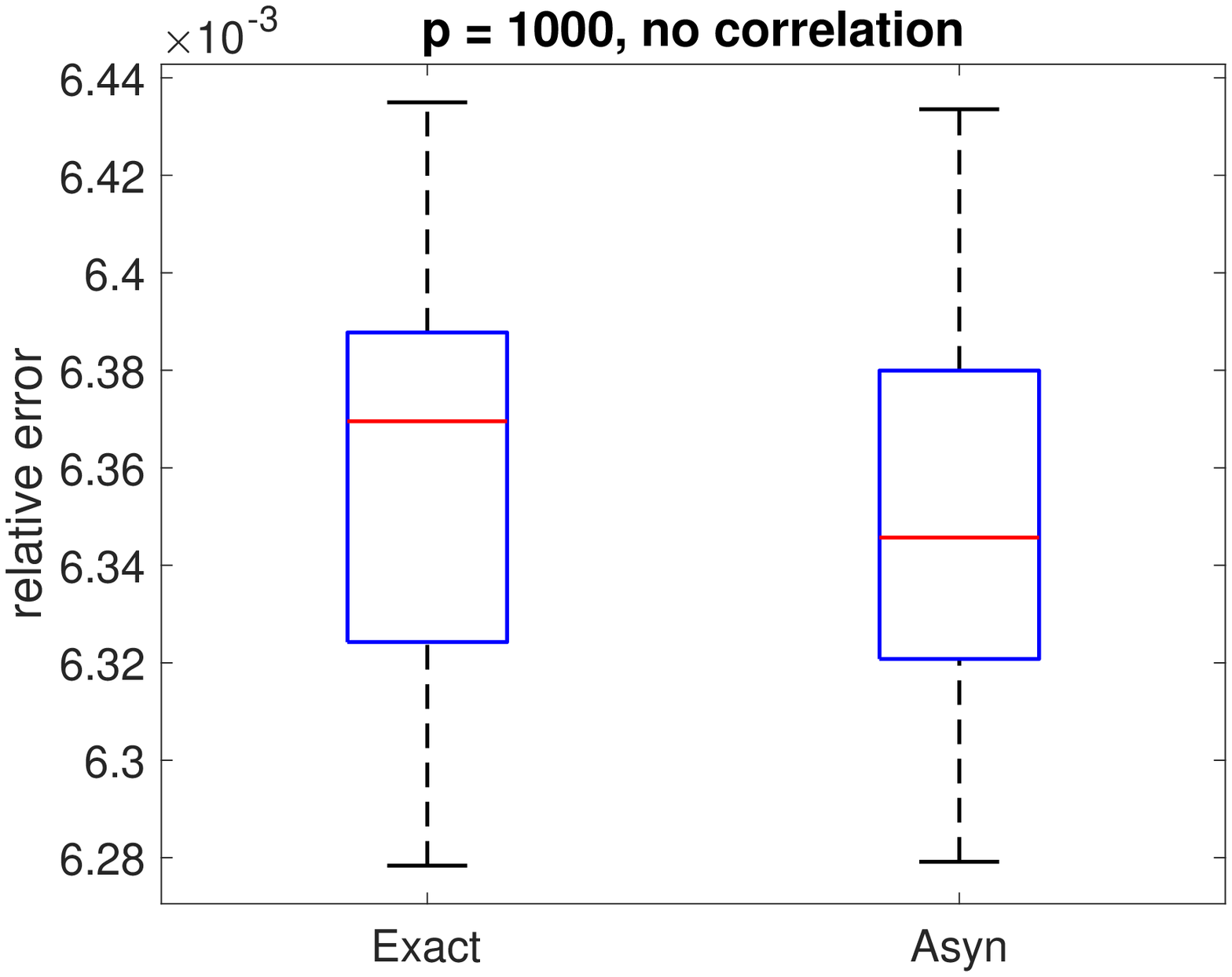}
	\includegraphics[width=.49\textwidth]{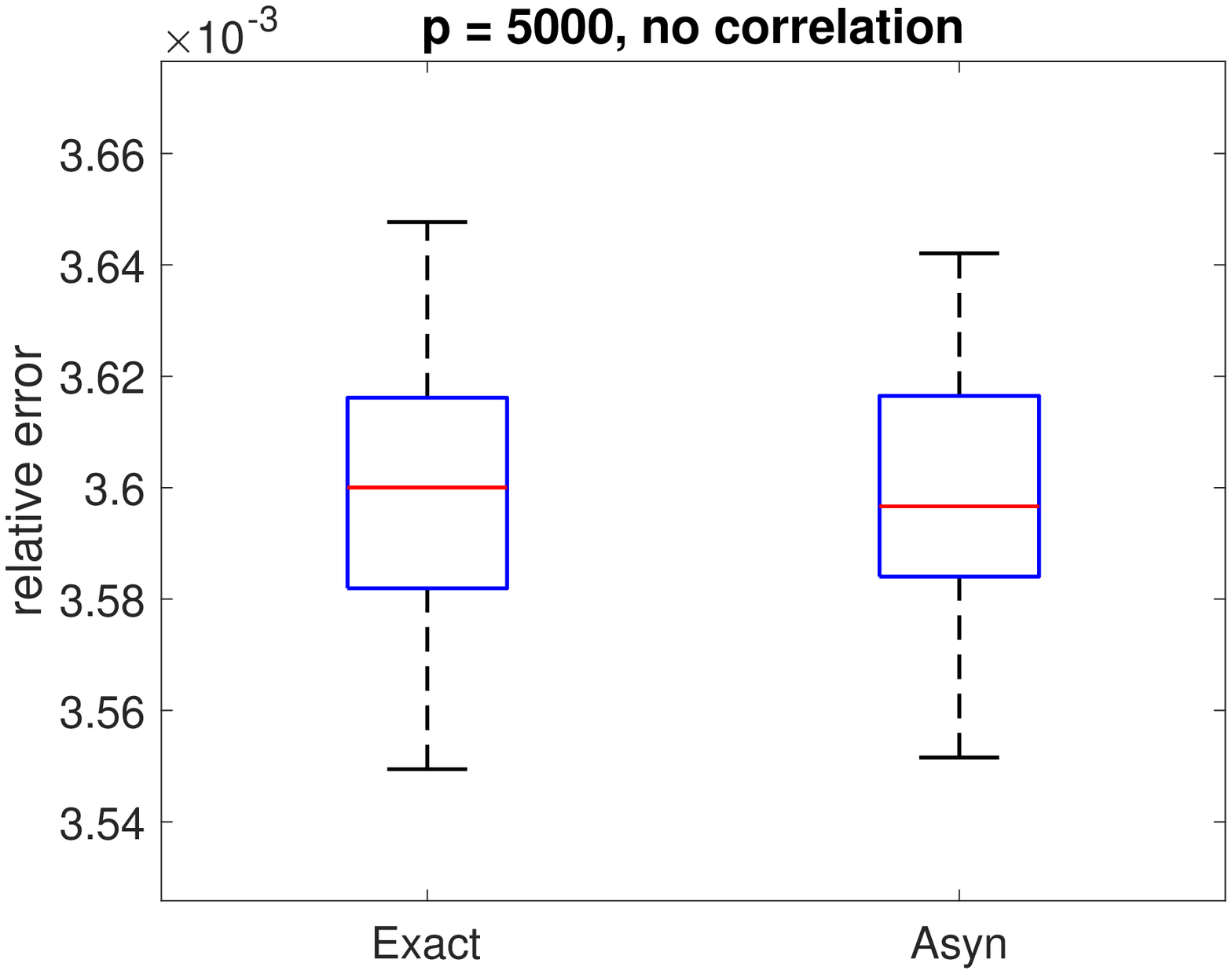}
	\caption{Averaged relative error in linear regression with $\varrho=0$}\label{fig:re_lm}
\end{figure}

\begin{figure}
	\centering
	\includegraphics[width=.49\textwidth]{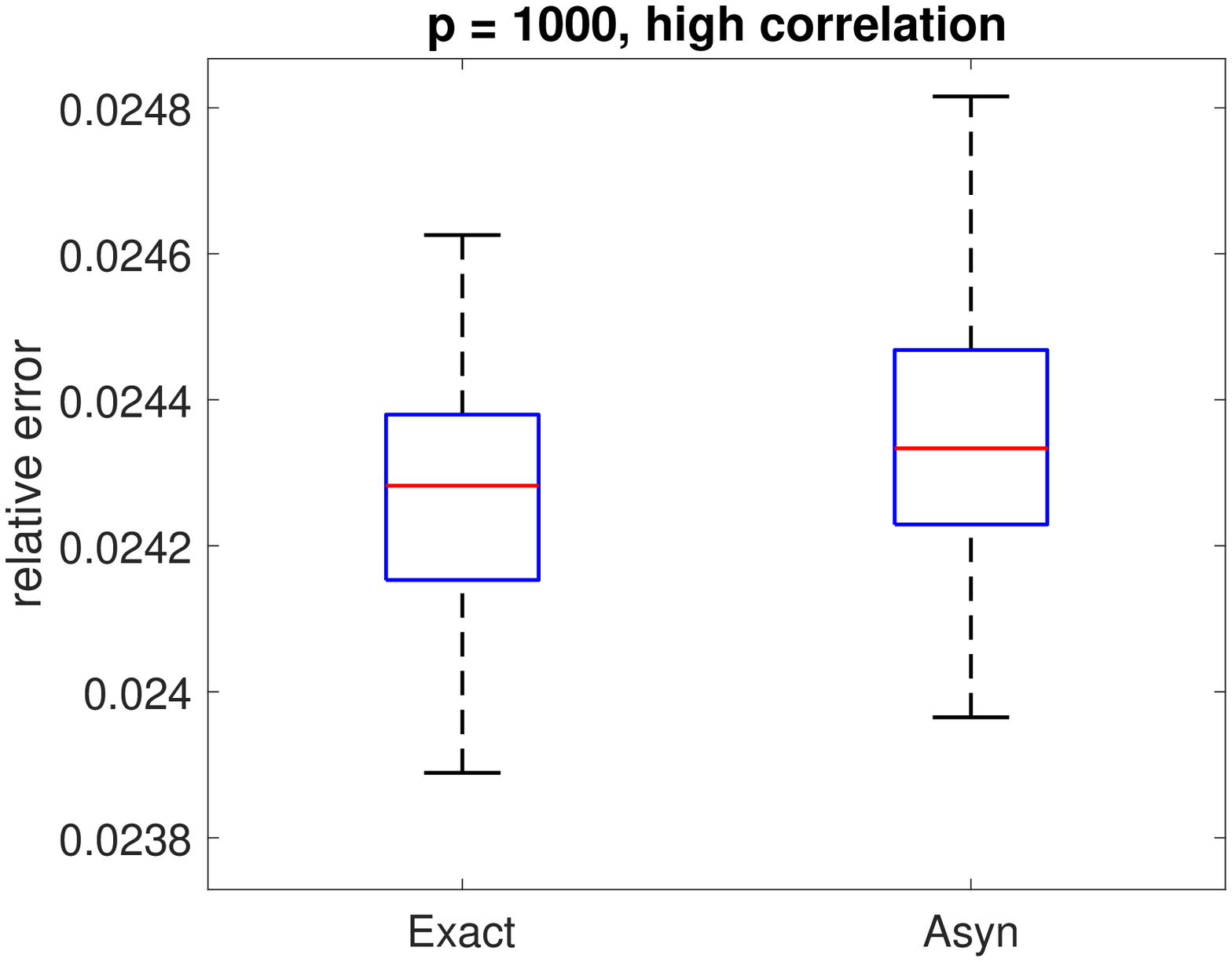}
	\includegraphics[width=.49\textwidth]{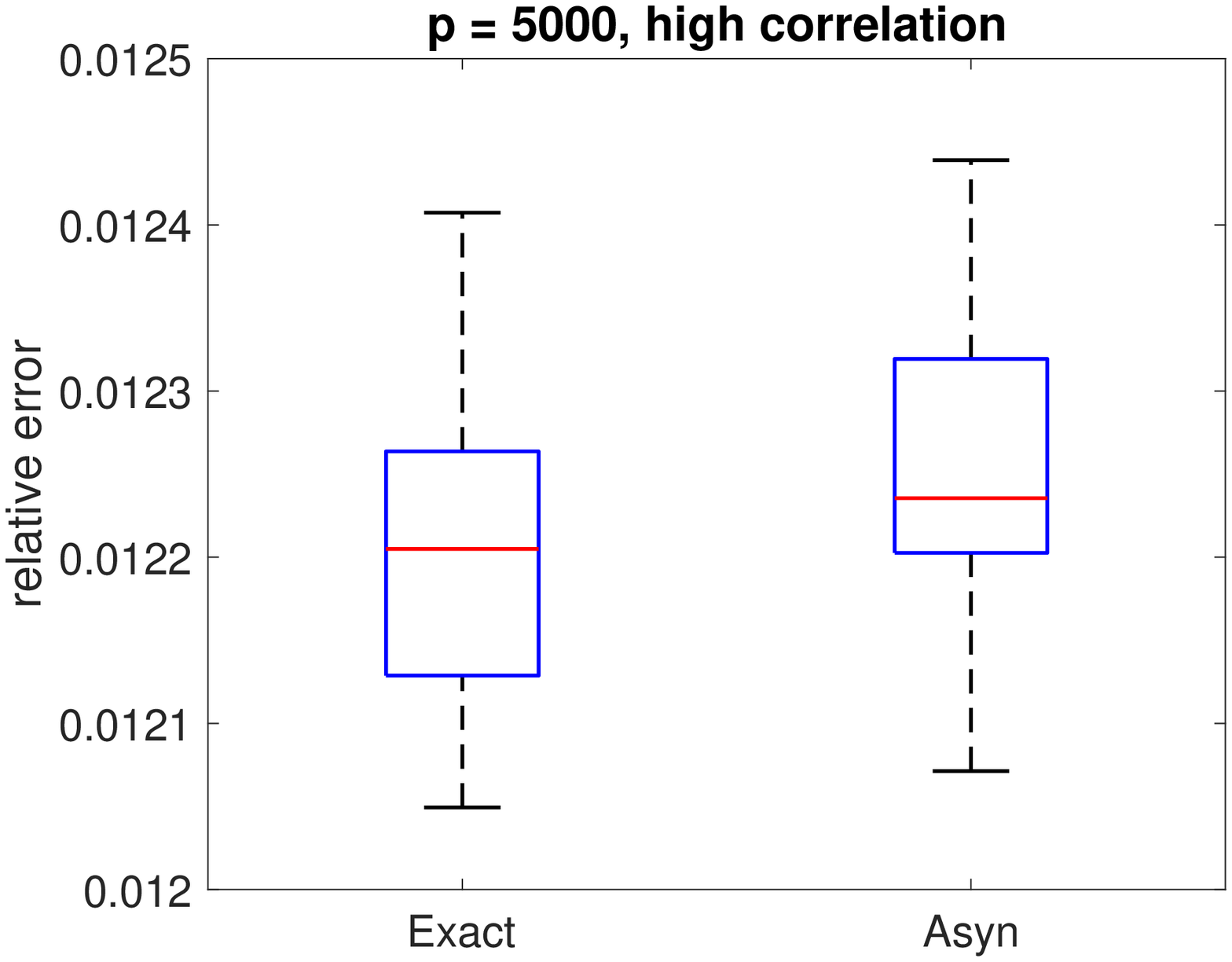}
	\caption{Averaged relative error in linear regression with $\varrho=0.9$}\label{fig:re_lm_rho09}
\end{figure}

\subsection{Logistic regression}\label{sec:logreg}
We also illustrate the behavior of the method on  logistic regression models. We use the same data generating set up for the regressors $X\in\rset^{n\times p}$ and true signal $\theta_\star$ as above. And we  draw the response as $Y_i \sim \textbf{Ber}(p_i)$, with $p_i = \left(1+ \exp^{-\pscal{x_i}{\theta_\star}}\right)^{-1}$, where $x_i$ denotes the $i$-th row of $X$. In this model we cannot draw exactly from the posterior conditional distribution of $\theta$ given $\delta$. Hence we implemented Algorithm \ref{algo:1} with $P_\delta$ taken as (one iteration of) the Metropolis Adjusted Langevin (MaLa) algorithm (\cite{roberts:tweedie96b}), with a step-size fixed to $0.01$. We consider two different implementation of Algorithm \ref{algo:2}. In the first implementation we choose $\tilde P_\delta$ to be the same MaLa 
as in Algorithm \ref{algo:1}. Whereas in the second implementation we choose $\tilde P_\delta$ to be the stochastic gradient Langevin dynamics (SGLD) kernel, with a mini-batch of size $B = 100$, and a step-size fixed to $0.005$.  We call the first implementation the asynchronous sampler (\texttt{ASYN}), and we call the second implementation the sparse asynchronous SGLD sampler (\texttt{SA-SGLD}). To improve mixing, particularly when $\varrho=0.9$, we initialize both algorithms from the \textsf{lasso} estimate of $\theta$.

To evaluate the mixing of the algorithms we look at the sample path of the penalized log-likelihood 
\[\pscal{Y'}{X'\theta_\delta} - \left\|\log\left(1 + e^{X'\theta_\delta}\right)\right\|_1   -  \frac{\rho_1}{2}\|\theta_\delta\|_2^2,\]
evaluated on a an independent test sample $(Y',X')$ (where the $\log$ and $e$ functions are evaluated componentwise), along the MCMC iterations, averaged over $50$ data and MCMC sampling replications. The results are reported on Figures \ref{pll_log}-\ref{pll_log_rho09}. We see again that both approximate samplers behaved very well.

As in the linear regression example we also  compare parameter estimates. On a given MCMC run we evaluate the accuracy  of the parameter estimation using the relative error $\e$ given in (\ref{eq:ek}).  Figures \ref{rr_log}-\ref{rr_logrho09} show the distributions of $\e$ based on 50 MCMC replications, for $p\in\{1000, 5000\}$, and $\varrho\in\{0.0, 0.9\}$. We also use this example to compare the proposed algorithms with a mean field variational approximation (VA) of (\ref{post:Pi}) using the VA family
\[\prod_{j=1}^p \textbf{Ber}(\alpha_j)(\rmd \delta_j) \textbf{N}(\mu_j,v_j^2)(\rmd\theta_j),\]
with parameter $(\alpha_j,\mu_j,v_j^2)_{1\leq j\leq p}$ that we estimate by minimizing the ELBO objective function using stochastic gradient descent. We use the re-parametrization trick of (\cite{kingma:welling:14}). In the stochastic gradient descent we estimate the gradient by drawing small sample of size $100$ from the VA family and small mini-batch of size $100$ from the dataset. For a fair comparison we also initialize $\mu^{(0)}$ from the same \textsf{lasso} estimate. We stop the stochastic gradient descent when the maximum relative change 
\[\max\left(\frac{\|\alpha^{(k)}-\alpha^{(k-1)}\|_2}{\|\alpha^{(k)}\|_2},\frac{\|\mu^{(k)}-\mu^{(k-1)}\|_2}{\|\mu^{(k)}\|_2},\frac{\|v^{(k)}-v^{(k-1)}\|_2}{\|v^{(k)}\|_2}\right) \leq 0.0025.\]
And we evaluate the accuracy  of the produced solution by computing on the last iteration the relative error
\[ \frac{\|\mu^{(k)}\cdot\alpha^{(k)} - \theta_\star\|_2}{\|\theta_\star\|_2}.\]
We observe from Figures \ref{rr_log}-\ref{rr_logrho09} and Table \ref{rtime_log} that both asynchronous MCMC are  more accurate, but also faster than the mean field VA approximation.

\medskip

\begin{figure}
	\includegraphics[width = .49\textwidth]{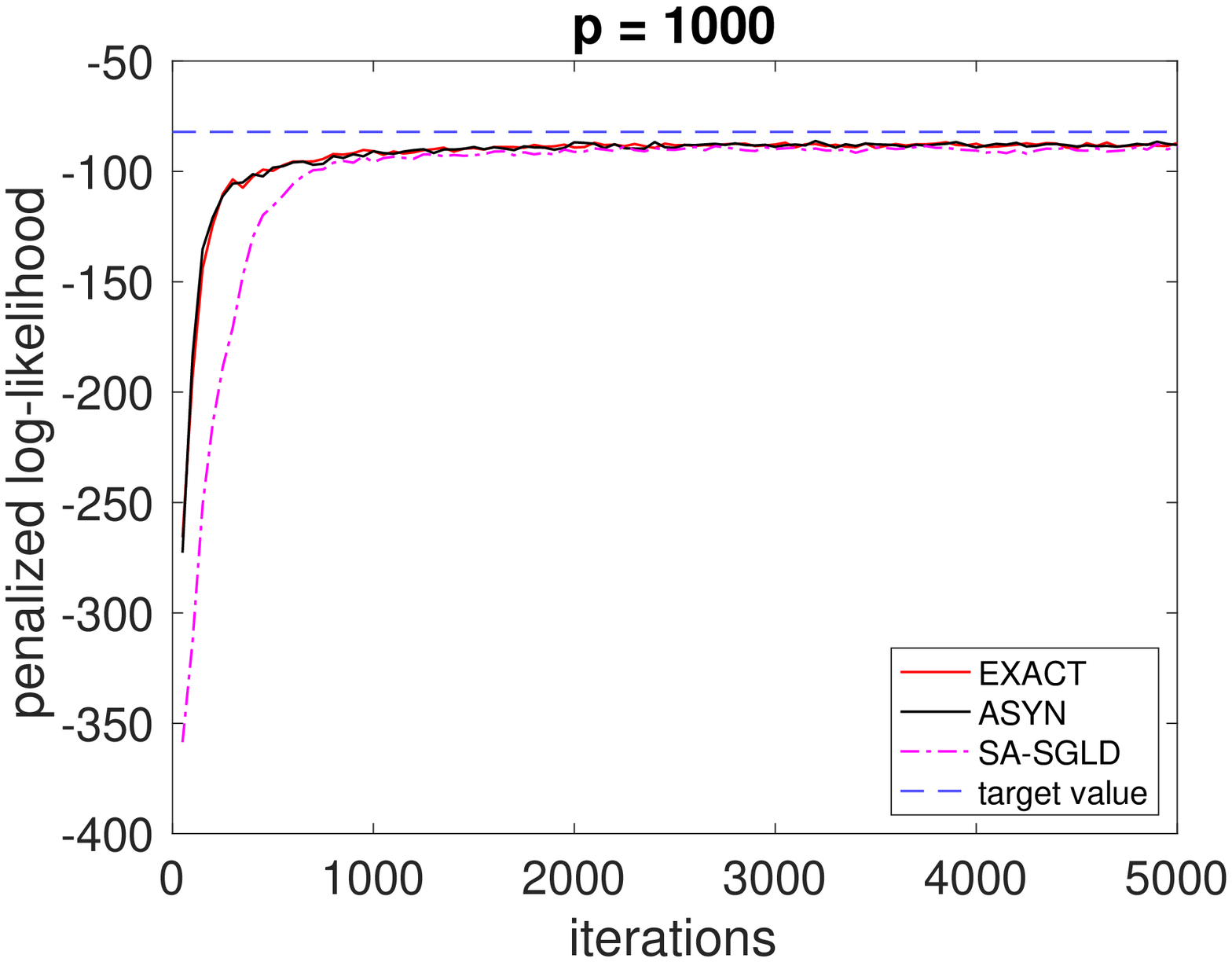}
	\includegraphics[width = .49\textwidth]{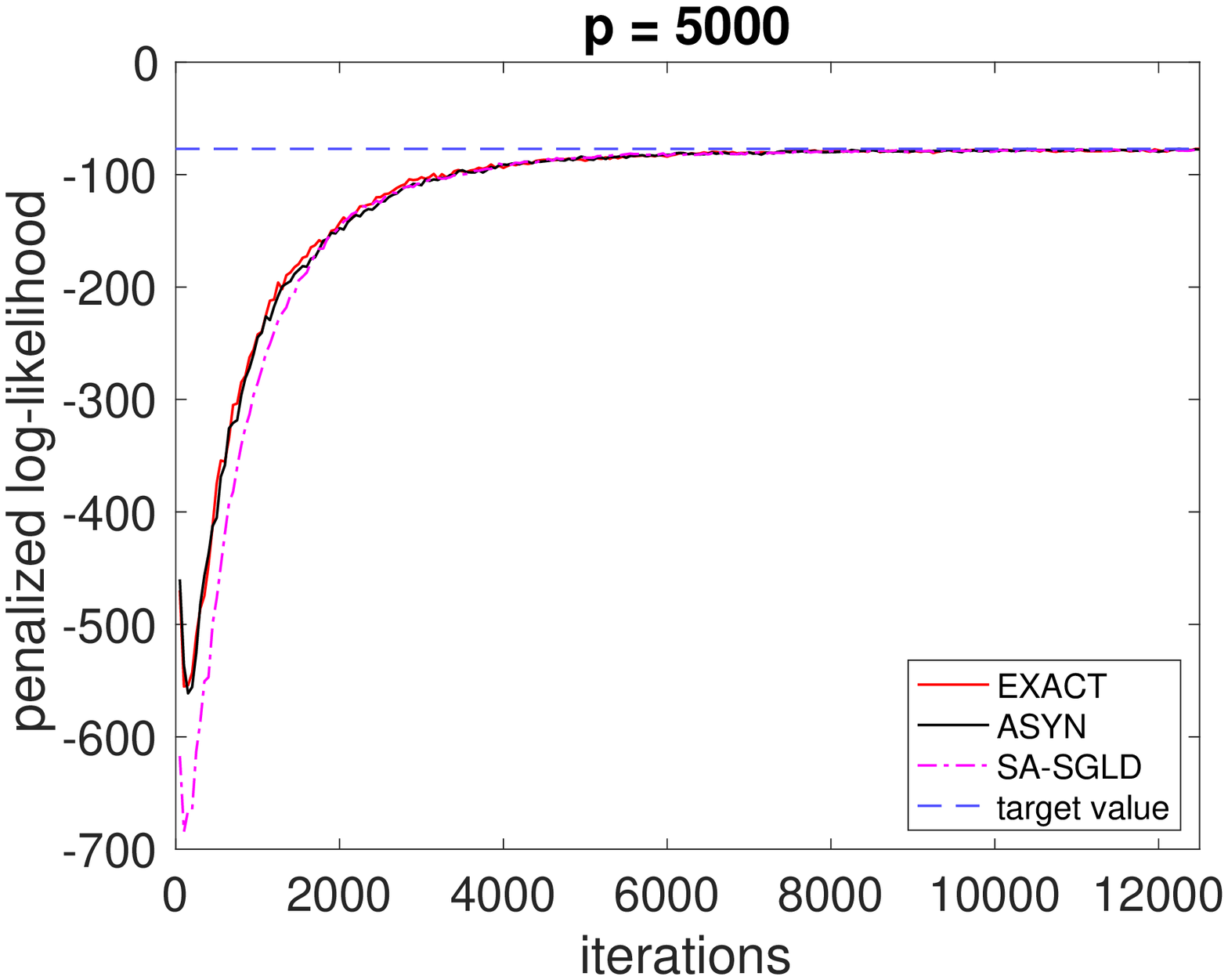}
	\caption{Averaged sample paths of penalized log-likelihood for logistic regression, with $\varrho = 0$}\label{pll_log}
\end{figure}

\medskip

\begin{figure}
	\includegraphics[width = .49\textwidth]{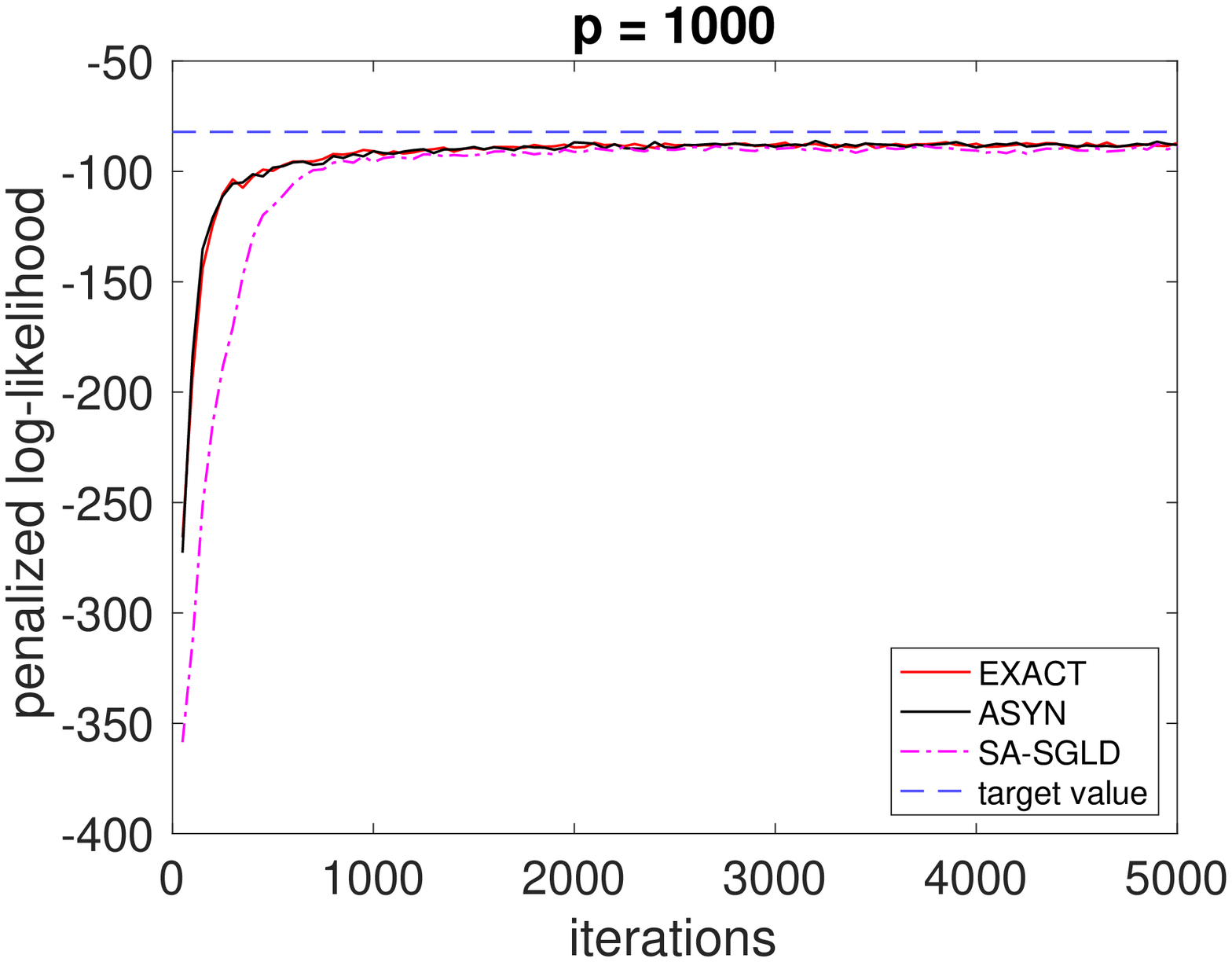}
	\includegraphics[width = .49\textwidth]{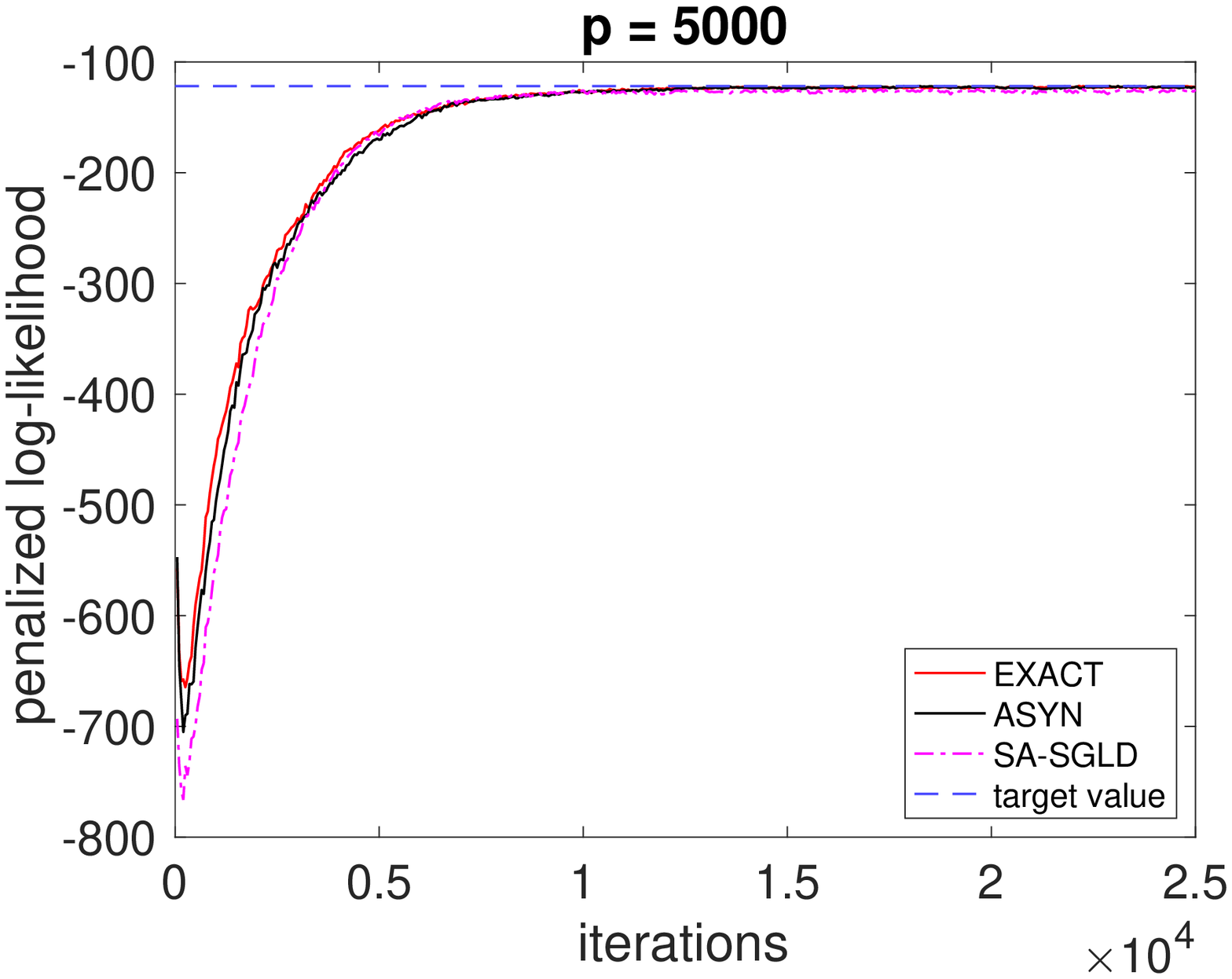}
	\caption{Averaged sample paths of penalized log-likelihood for logistic regression with $\varrho = 0.9$}\label{pll_log_rho09}
\end{figure}

\medskip

\begin{figure}
	\includegraphics[width = .49\textwidth]{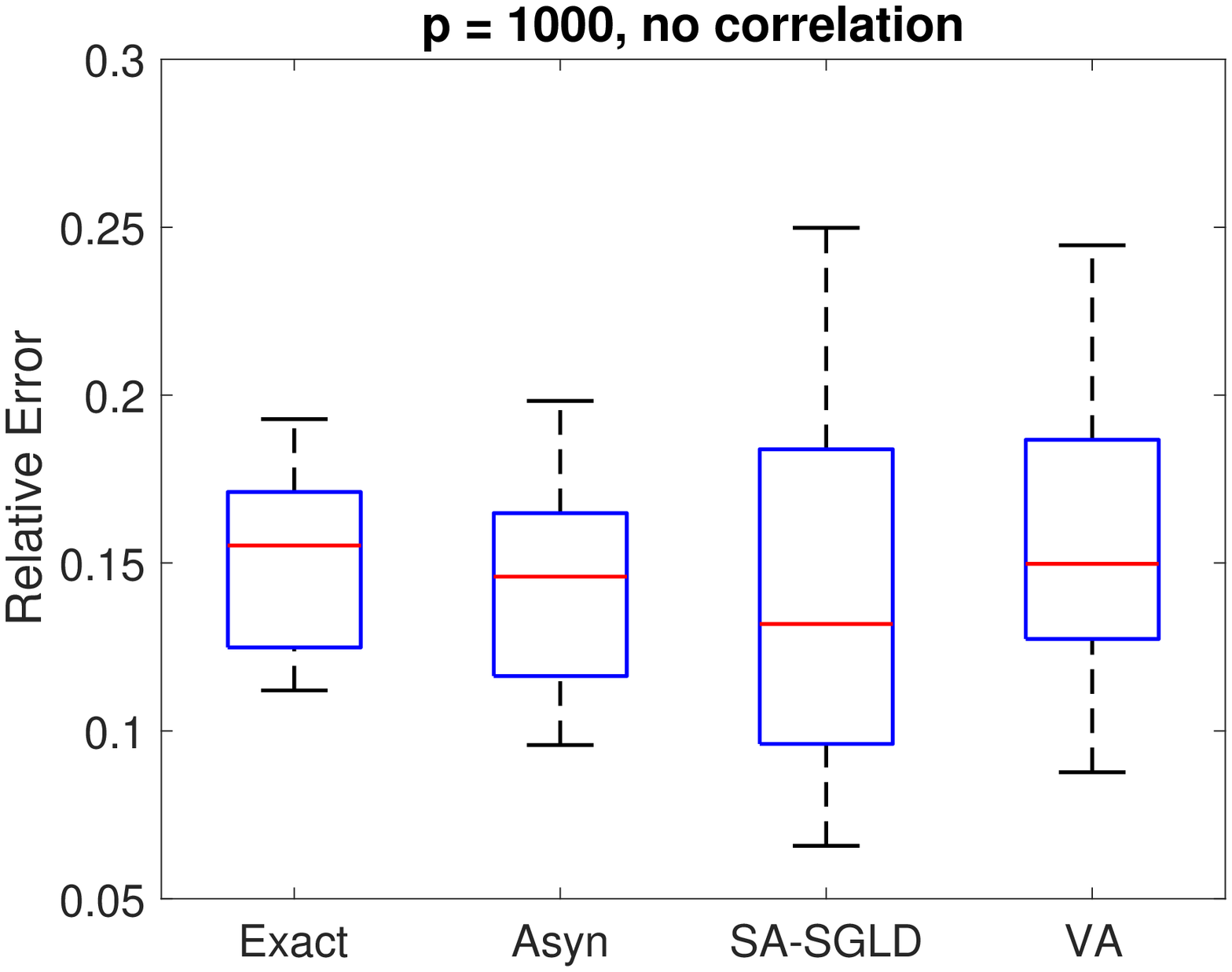}
	\includegraphics[width = .49\textwidth]{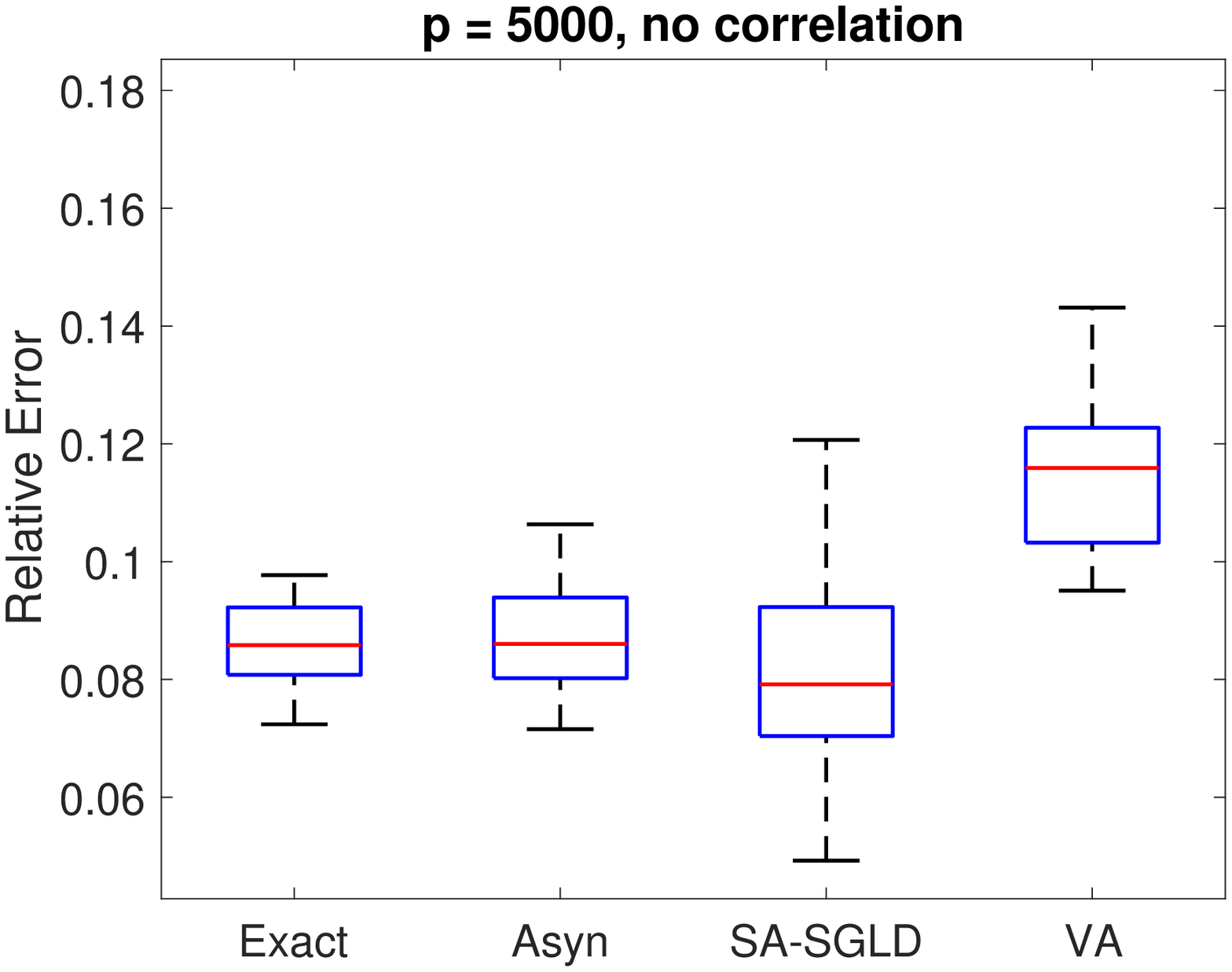}
	\caption{Relative error for logistic regression,  with $\varrho=0$}\label{rr_log}
\end{figure}

\medskip

\begin{figure}
	\includegraphics[width = .49\textwidth]{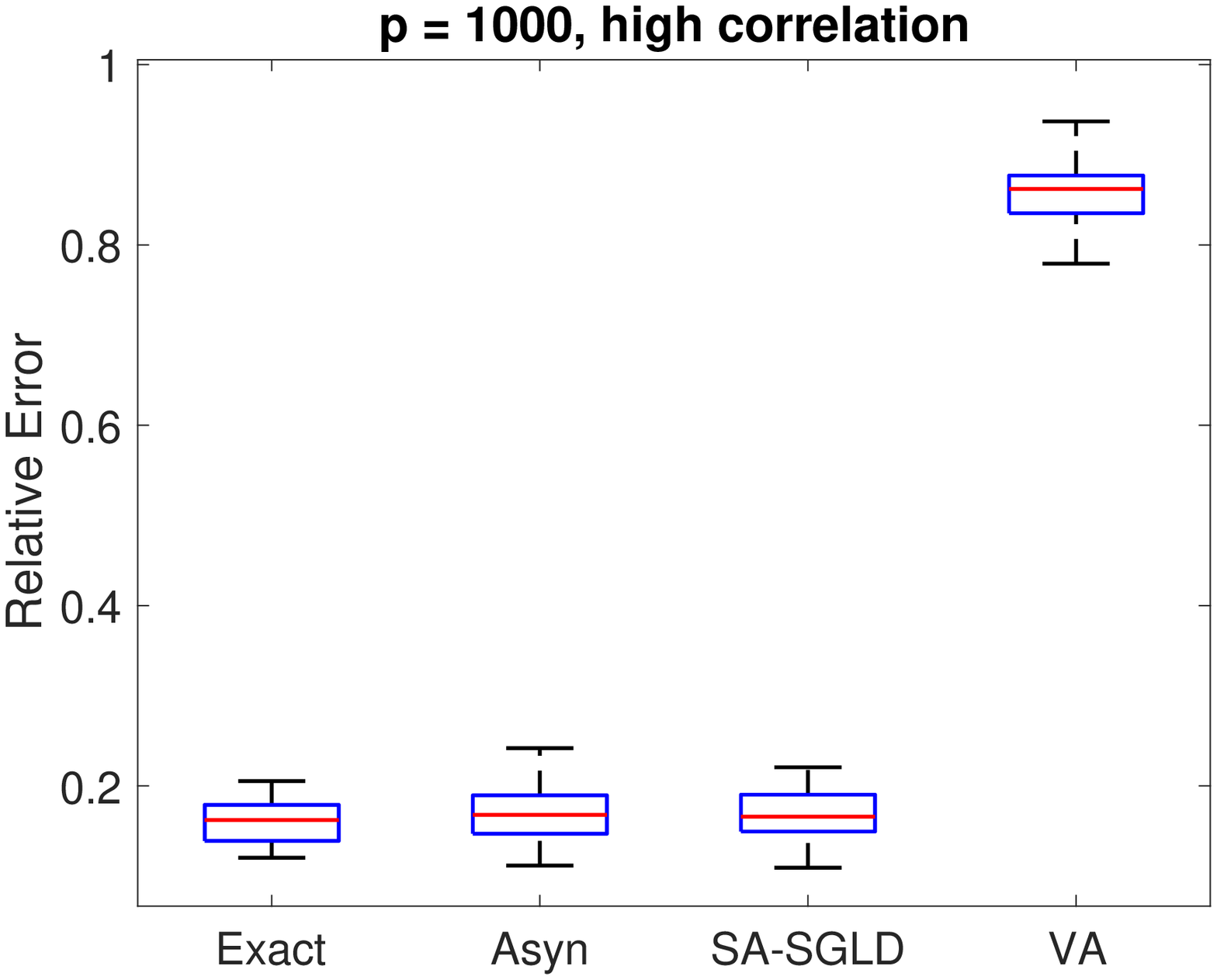}
	\includegraphics[width = .49\textwidth]{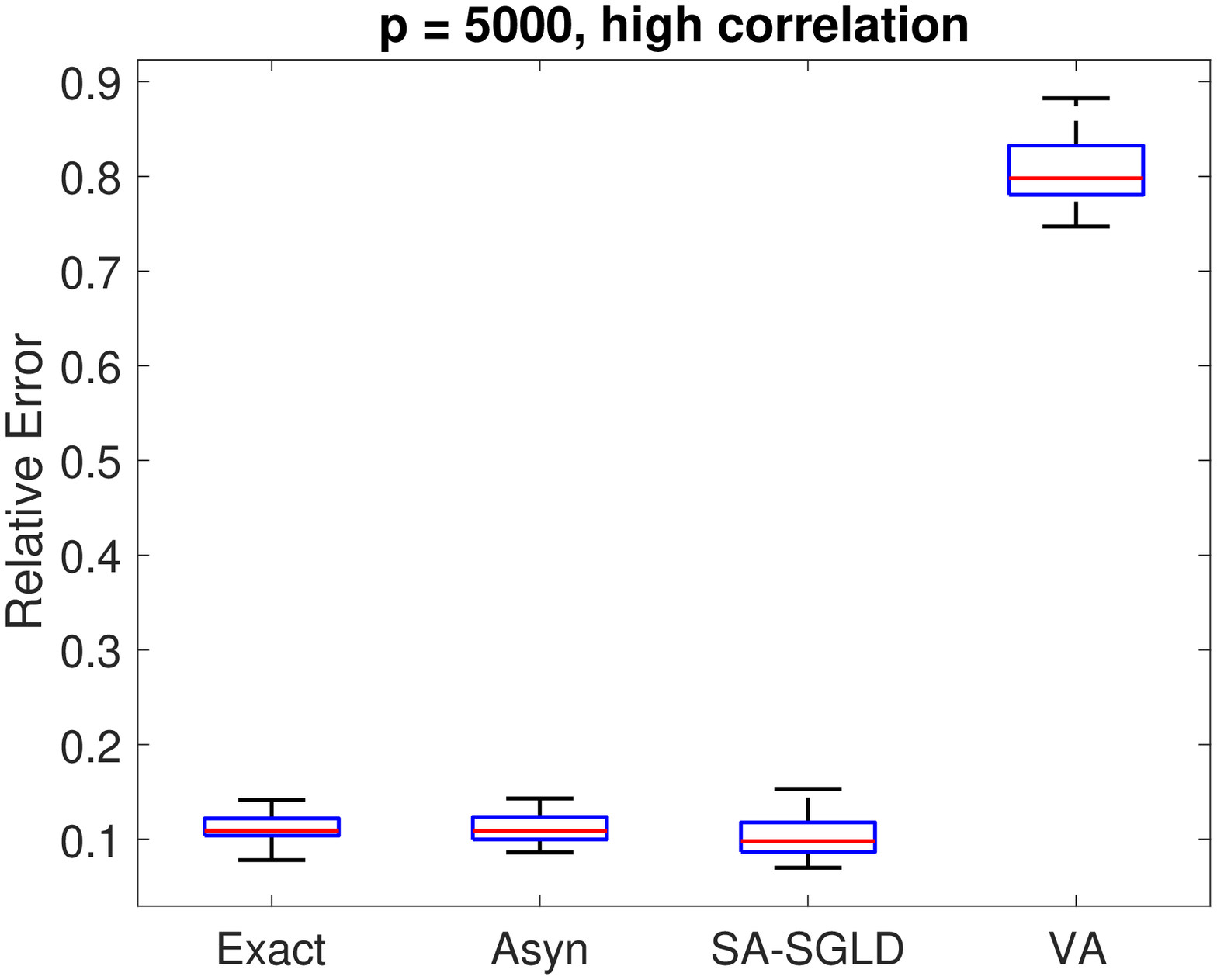}
		\caption{Relative error for logistic Regression,  with $\varrho=0.9$}\label{rr_logrho09}
\end{figure}

\medskip

\begin{table}[htbp]
	\centering 
	\begin{tabular}{cccc}
		\hline
		$p$ & 1000  & 2000 & 5000  \\             
		Exact  &  $71.61\pm 2.80s$  &$141.12 \pm 5.12s$  &$1232.71 \pm 132.90s$\\
		Asynchronous  & $8.00 \pm 0.74s$ &$16.07 \pm 1.64s$ & $142.55 \pm 24.05s$\\ 
		SA-SGLD & $2.14\pm 0.24s$ & $4.72 \pm 0.39s$ & $19.22 \pm 1.27s$\\
		VA & $59.92 \pm 11.48s$ & $116.05 \pm 7.18s $  & $1054.14 \pm 189.68s$\\
		\hline
	\end{tabular}
	\caption{Running times  for  Logistic Regression}\label{rtime_log}
\end{table}

\subsection{Illustration with a deep neural network model}
As mentioned in the introduction there is a growing interest in sparse deep learning. Both in theory (as a way to reconcile deep learning with classical statistical theory), and in the applications (for instance in the growing area of tiny machine learning for mobile AI). Most existing approach for estimating sparse deep learning models are frequentist. Bayesian deep learning can greatly facilitate uncertainty quantification in model predictions. As a proof of concept  we apply Algorithm \ref{algo:4} for a Bayesian classification of MNIST-FASHION (\cite{xiao2017fashion}) image data using  the deep neural network Lenet-5 (\cite{lecun2010mnist}), one of the smallest deep neural network models. The MNIST-FASHION dataset consists of $60,000$ data points $(y_i,{\bf x}_i)$ (plus another $10,000$ test sample), where $y_i\in\{1,\ldots,10\}$ encodes the class of a fashion item (T-shirt, trouser, etc), and ${\bf x}_i$ is a $28\times 28$ image of the item. The dataset is known to be more challenging than the more widely-known handwritten digit MNIST dataset. We model the class outcome $y_i$ as independent random variables draws from a multinomial distribution:
\[y_i\sim \M(\F_\theta({\bf x}_i)),\;\;\;i=1,\ldots,n,\]
with class probabilities proportional to $\exp(\F_\theta({\bf x}_i))$, where $\F_\theta:\;\rset^{28\times 28}\to \rset^{10}$ is a lenet-5 neural network. We actually use a slightly modified lenet-5 architecture obtained by replacing the $\mathsf{tanh}$ activation function by the $\mathsf{ReLU}$ function, and by enlarging the fully-connected layers.  We refer the reader to Figure \ref{fig:lenet} for the architecture of the network, and to \cite{d2l} for an introduction to neural network modeling. The total number of parameter is $p=298,650$. For stability in the learned structures we did not sparsify the convolutional layers (specifically, we keep their corresponding $\delta_j$ set to $1$).

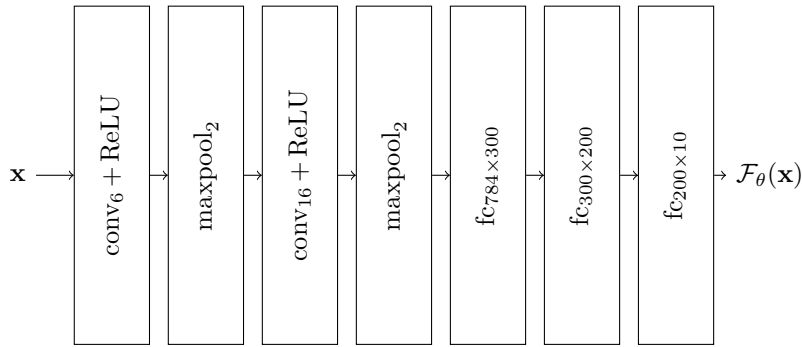
\begin{figure}
  \centering
  \hspace*{-0.75cm}
  \begin{tikzpicture}
    \node (y) at (0,0) {\small${\bf x}$}; 
    \node[conv,rotate=90,minimum width=4.5cm] (conv1) at (1.25,0) {\small$\text{conv}_{6}$\,+\,$\ReLU$};
    \node[pool,rotate=90,minimum width=4.5cm] (pool1) at (2.5,0) {\small$\text{maxpool}_{2}$};
    \node[conv,rotate=90,minimum width=4.5cm] (conv2) at (3.75,0) {\small$\text{conv}_{16}$\,+\,$\ReLU$};
    \node[pool,rotate=90,minimum width=4.5cm] (pool2) at (5,0) {\small$\text{maxpool}_{2}$};
    \node[fc,rotate=90,minimum width=4.5cm] (conv3) at (6.25,0) {\small$\text{fc}_{784\times 300}$};
    \node[fc,rotate=90,minimum width=4.5cm] (pool3) at (7.5,0) {\small$\text{fc}_{300\times 200}$};
    \node[fc,rotate=90,minimum width=4.5cm] (conv4) at (8.75,0) {\small$\text{fc}_{200\times 10}$};
      
      \node (x) at (10,0) {\small$\F_{\theta}({\bf x})$};
      
    \draw[->] (y) -- (conv1);
    \draw[->] (conv1) -- (pool1);
    \draw[->] (pool1) -- (conv2);
    
    \draw[->] (conv2) -- (pool2);
    \draw[->] (pool2) -- (conv3);
    
    \draw[->] (conv3) -- (pool3);
    \draw[->] (pool3) -- (conv4);
    
    \draw[->] (conv4) -- (x);
  \end{tikzpicture}
  \vskip 6px
  \caption{Illustration of the lenet-5 architecture.}
  \label{fig:lenet}
\end{figure}

For the Bayesian inference we use the hyper-parameter $\mathsf{u}=50$, $\rho_0 = 10^4$, $\rho_1=1$. We apply Algorithm \ref{algo:4} with SGLD on the selected components of $\theta$ with a fixed step-size $\gamma = 10^{-7}$. We set  $J=550$ (with stratified sampling accross the layer), and a batch size $B=100$.   We initialize the sampler from the full model with all components active, and the parameter $\theta$  initialized using the default  initialization in \textsf{Matlab}. We then run Algorithm \ref{algo:4} for $\textsf{Niter} = 250,000$ iterations and we use the first $150,000$ as burn-in. The running time took about $4.9$ hours on a 8-core computer node with a \textsf{NVIDIA} TESLA V100 GPU system with $384$ GB GPU memory, using \textsf{MATLAB}  2021a.

During the MCMC, at each iteration $k$, and for each $i$ in the test sample we define the prediction accurary as $\textsf{A}_i^{(k)} \eqdef \textbf{1}_{\{\hat y_i^{(k)}=y_i\}}$, where $\hat y_i^{(k)}\sim \M(\F_{\theta^{(k)}\cdot\delta^{(k)}}({\bf x}_i))$. We average these prediction accuracies to get $\bar{\textsf{A}}^{(k)}$. We also average the prediction accuracies within each group of items to get $\bar{\textsf{A}}^{(k)}(g)$, $g=1,\ldots,10$. To save time we actually compute these statistics  only  every $100$ iterations. Figure \ref{fig:lenet_sum} plots $\{\bar{\textsf{A}}^{(k)},\;k\}$  and the model sparsity $\{\|\delta^{(k)}\|_0/p,\;k\}$ along the MCMC iterations, and Figure \ref{fig:pred_error_mnist} shows the boxplots of the  $\{\bar{\textsf{A}}^{(k)}(g),\;k\}$ for each $g$. Table \ref{table:lenet_sum} shows the posterior sparsity and posterior average accuracy, and includes a comparison to Monte Carlo dropout (\cite{gal:etal:16}). The results shows that it  is possible to significantly compress deep learning models with only modest loss of performance. 

The computational cost (per iteration) of the algorithm is roughly twice that of stochastic gradient descent, its frequentist counterpart. Note however that this cost can potentially be further reduced by exploiting sparsity (as we did with linear and logistic regression models). We did not pursue this here because \textsf{MATLAB} 2021a that we used for this project does not support sparse deep learning computation.

 We end with some words of caution. We are presenting this example mainly as an exploratory exercise in the potential of the proposed framework, without much theoretical guarantee. In particular, due to the poor general understanding of deep neural network models, we currently cannot say much about the properties of the limiting distribution of Algorithm \ref{algo:4}. Furthermore, due to the lighly multimodal nature of the likelihood surface of deep neural network models,  we cannot guarantee either that the algorithm has mixed and is correctly sampling from its limiting distribution. More research is needed on these issues.

\begin{figure}
\begin{floatrow}
\ffigbox{%
  \includegraphics[scale = 0.2]{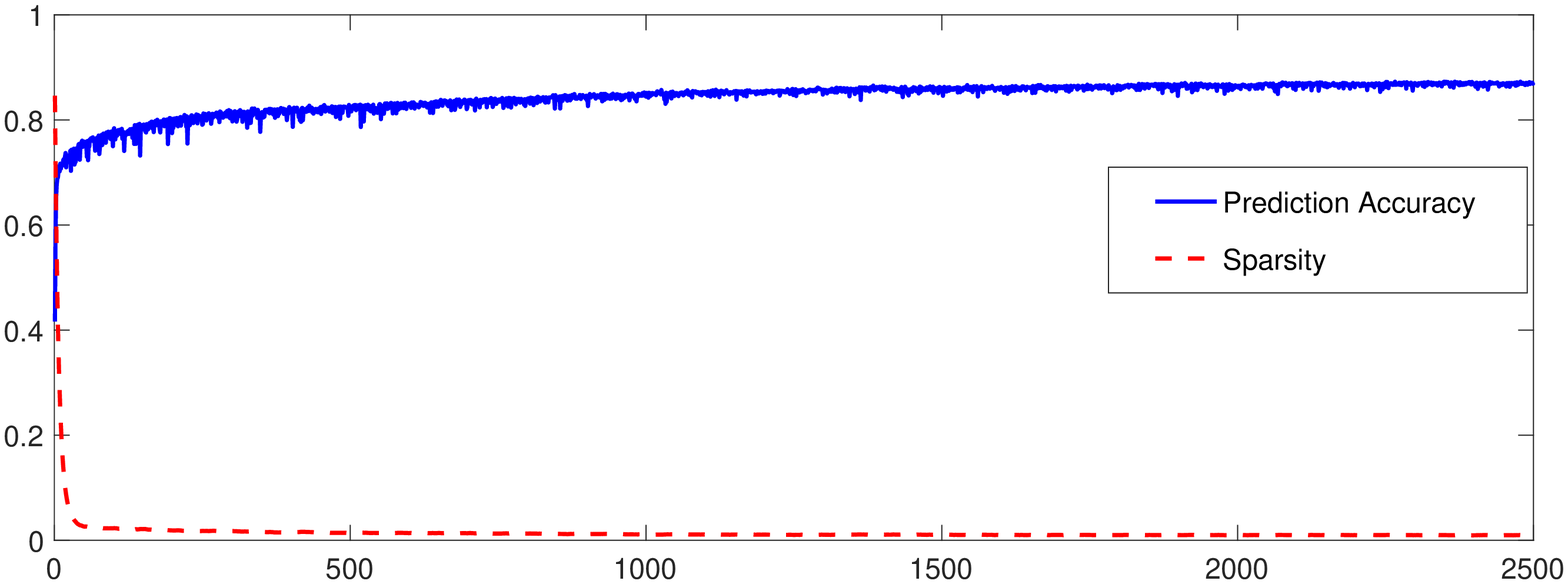}
}{%
  \caption{Prediction accuracy on test sample and sparsity along MCMC run}\label{fig:lenet_sum}%
}
\capbtabbox{%
  \begin{tabular}{lll} \hline
  &Sparsity & Accuracy \\ \hline
  SA-SGLD& 1.00 (0.00) & 86.5 (0.45) \\
  MC Dropout& 100 & 88.04 (0.02) \\ \hline
  \end{tabular}
}{%
  \caption{Estimated posterior sparsity and prediction accuracy on test sample (in percentage)}\label{table:lenet_sum}%
}
\end{floatrow}
\end{figure}

\begin{center}
\begin{figure}
	\includegraphics[scale = 0.3]{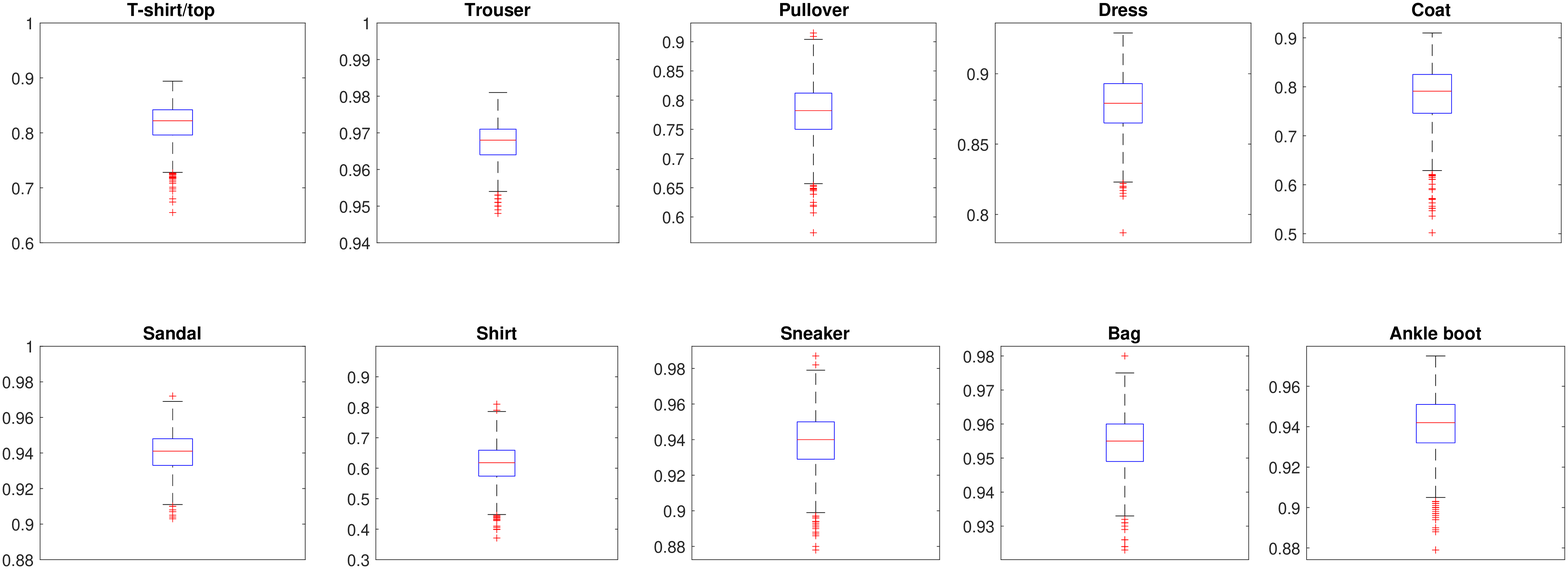}
	\caption{Distribution of posterior predictive accuracy on MNIST-Fashion test samples. Averaged within each class of item.}\label{fig:pred_error_mnist}
\end{figure}
\end{center}

\section{Some concluding remarks}\label{sec:conclusion}
We proposed in this work a fast MCMC algorithm for the Bayesian analysis of sparse high-dimensional models.  The algorithm operates as a form of Bayesian iterated sure independent screening, resulting in tremendous speed. In linear regression models we show that the algorithm mixes quickly to a limiting distribution that recovers correctly the main underlying signal. In limited sample size problems  the algorithm can be advantageously combined with tempering techniques (such as simulated tempering or related ideas) for better mixing properties. Such extensions could also be particularly useful in deep learning where the resulting posterior distributions are known to be highly multimodal.  One interesting aspect of the theoretical analysis done in this work (and  that extends from \cite{desa:16}),  is the use of a metric weaker than the total variation metric and  more directly pertinent for the statistical analysis, to measure MCMC mixing.  Exploring more systematically this idea could be an important theoretical contribution to the literature, particularly in high-dimensional problems.

\appendix
\section{Proof of Proposition \ref{prop:1}}\label{sec:proof:prop:1}
Throughout $C_0$  denotes a generic constant whose value may change from one appearance to the next. The Markov kernel of Algorithm \ref{algo:2} writes
\[\tilde K((\delta,\theta);(\rmd\delta',\rmd\theta')) = \tilde K_\delta(\theta,\rmd\theta') \sum_{\mathsf{J}:\;|\mathsf{J}|=J}{p\choose  J}^{-1} \tilde Q_{\theta',\mathsf{J}}(\delta,\rmd\delta'),\]
where 
\[\tilde K_\delta(\theta,\rmd \theta') \eqdef \tilde P_\delta([\theta]_\delta,\rmd [\theta']_\delta) \prod_{j:\;\delta_j=0} \textbf{N}(0,\rho_0^{-1})(\rmd \theta_j').\]
Recall that $V(\delta,\theta) =  \sum_{j}\delta_j V_j(\theta_j)$. Given a selection $\mathsf{J}=\{j_1,\ldots,j_J\}\subseteq \{1,\ldots,p\}$, and $j_i\in\mathsf{J}$, we  have
\begin{multline*}
\int_\Delta \tilde Q_{\theta,j_i}(\delta,\rmd\delta')V(\delta',\theta)  = V(\delta,\theta)  + \tilde q_{j_i}V_{j_i}(\theta_{j_i}) - \delta_{j_i}V_{j_i}(\theta_{j_i}) \leq V(\delta,\theta) +(1 - \delta_{j_i}) V_{j_i}(\theta_{j_i}),
\end{multline*}
where $\tilde q_j = \tilde q_j(\vartheta,\theta)$. It follows that
\begin{equation}\label{control:QJ}
\int_\Delta \tilde Q_{\theta,\mathsf{J}}(\delta,\rmd\delta') V(\delta',\theta)  \leq 
  V(\delta,\theta) +\sum_{i=1}^J  V_{j_i}(\theta_{j_i})\textbf{1}_{\{\delta_{j_i} = 0\}}.\end{equation}
Note that in deriving (\ref{control:QJ}) we did not use any specific information about the probability $\tilde q_j$. In particular the kernel $Q_{\theta,\mathsf{J}}$ also satisfies (\ref{control:QJ}). Using (\ref{control:QJ}) we have
\begin{multline*}
\int_{\rset^p} \tilde K_\delta(\theta,\rmd\theta')\int_\Delta \tilde Q_{\theta',\mathsf{J}}(\delta,\rmd \delta') V(\delta',\theta')\\
\leq \int_{\rset^{\|\delta\|_0}} \tilde P_\delta([\theta]_\delta,\rmd u)V_\delta(u) + \sum_{i = 1}^J  \int_\rset V_{j_i}(x) \textbf{N}(0,\rho_0^{-1})(\rmd x)\\
\leq \lambda_\delta V(\delta,\theta) + b_\delta + C_0 J,
\end{multline*}
where the first inequality uses the fact that under $\tilde K_\delta$, when $\delta_j=0$ we update $\theta_j$ by drawing from $\textbf{N}(0,\rho_0^{-1})$. With $\lambda = \max_\delta \lambda_\delta$, we conclude that
\begin{equation}\label{eq:drift:tilde:proof}
\int_{\Delta\times\rset^p} \tilde K((\delta,\theta);(\rmd\delta',\rmd\theta'))V(\delta',\theta') \leq \lambda V(\delta,\theta) + C_0.\end{equation}
Furthermore, $\tilde K$ is phi-irreducible and aperiodic by assumption, and the level sets $\{(\delta,\theta):\;  V(\delta,\theta) \leq b\}$ are petite sets for $\tilde K$. Therefore, by Lemma 15.2.8, and  Theorem 15.0.1 of \cite{meyn:tweedie:08} $\tilde K$ admits a unique invariant distribution $\tilde \Pi$, and (\ref{geo:ergo:tilde}) holds. 
\vspace{-0.5cm}
\begin{flushright}
$\square$
\end{flushright}

\section{Proof of Theorem \ref{thm:2}}\label{sec:proof:thm:2}
Throughout $C_0$  denotes a generic constant whose value may change from one appearance to the next.  We shall write $\Pi(\cdot)$ and $\tilde \Pi(\cdot)$ instead of $\Pi(\cdot\vert \D)$ and $\tilde\Pi(\cdot\vert \D)$ respectively. For any two Markov kernels $P_1$ and $P_2$ and for any integer $k\geq  1$, it is easily checked that $P_1^k = P_2^k + \sum_{j=1}^kP_1^{k-j}(P_1-P_2)P_2^{j-1}$. Using this identity, for any  bounded measurable function $f:\;\Delta\times \rset^p\to\rset$, writing $\tilde f  = f-\tilde\Pi(f)$, we have for any $k\geq 0$,
\begin{multline}\label{thm2:basic:eq}
\Pi(f) - \tilde\Pi(f) = \Pi(\tilde f)  =  \Pi (K^k\tilde f)= \Pi(\tilde K^k\tilde  f) +\sum_{j=1}^k \Pi\left((K-\tilde K)\tilde K^{j-1}\tilde  f\right),\\
=\Pi(\tilde K^k\tilde  f) +\Pi\left[(K-\tilde K)\left(\sum_{j=0}^{k-1} \tilde K^{j}\tilde  f\right)\right].\end{multline}
Define
\[g_k\eqdef\sum_{j = 0}^{k-1}\tilde K^j \tilde f.\]
 It follows from (\ref{geo:ergo:tilde}) that for all $(\delta,\theta)\in\Delta\times\rset^p$,
 \begin{equation}\label{eq:bound:gk}
 |g_k(\delta,\theta)| \leq C_0 \|f\|_\infty V^{1/2}(\delta,\theta)\sum_{j=0}^{k-1}\tilde\lambda^j \leq \frac{C_0\|f\|_\infty}{1-\tilde\lambda} V^{1/2}(\delta,\theta).\end{equation}
Without any loss generality we shall assume now that $\|f\|_\infty = 1$. Since $\Pi(V)<\infty$, we get that 
\begin{equation}\label{thm2:eq1}
\left|\Pi(f) - \tilde\Pi(f)\right| \leq C_0\tilde\lambda^k + \int_{\Delta\times\rset^p}\Pi(\rmd \delta,\rmd\theta) \left| Kg_k(\delta,\theta) - \tilde Kg_k(\delta,\theta)\right|.\end{equation}
We split the integral  over $\cB$ and over $\cB^c$.  For the part over $\cB^c$, we use the Cauchy-Schwarz inequality, (\ref{eq:bound:gk}) and (\ref{eq:drift:tilde:proof}) to write
\begin{multline*}
\left|\int_{\cB^c}\Pi(\rmd \delta,\rmd\theta)\tilde Kg_k(\delta,\theta)\right| \leq \Pi(\cB^c)^{1/2}\sqrt{\int\Pi(\rmd \delta,\rmd\theta)|\tilde Kg_k(\delta,\theta)|^2}\leq \Pi(\cB^c)^{1/2}\frac{C_0}{1-\tilde\lambda}\Pi(V\vert \mathcal{D})^{1/2}.
\end{multline*}
Similarly,
\begin{multline*}
\left|\int_{\cB^c}\Pi(\rmd \delta,\rmd\theta)Kg_k(\delta,\theta)\right| \leq \Pi(\cB^c)^{1/2}\sqrt{\int\Pi(\rmd \delta,\rmd\theta)g_k^2(\delta,\theta)}\leq \Pi(\cB^c)^{1/2}\frac{C_0}{1-\tilde\lambda}\Pi(V\vert \mathcal{D})^{1/2}.\end{multline*}
It follows that
\begin{equation}\label{eq:thm1:bound:dif:1}
\left|\Pi(f) - \tilde\Pi(f)\right|  \leq C_0\tilde\lambda^k +  \frac{C_0\sqrt{\Pi(\cB^c)}}{1-\tilde\lambda}
+ \int_{\cB}\Pi(\rmd \delta,\rmd\theta) \left| Kg_k(\delta,\theta) - \tilde Kg_k(\delta,\theta)\right|.
 \end{equation}
  We will use the following simple version of the coupling inequality (\cite{lindvall92}).
 \begin{lemma} \label{lem:coupling}Let $\mu,\nu$, be two probability measures on some Polish space $(\Xset,\cB)$,  and $f:\;\Xset\to\rset$ a measurable function. Then
 \[\left|\mu(f) - \nu(f)\right| \leq \|\mu-\nu\|_\tv^{1/2} \left(\sqrt{\int f^2\rmd\mu} + \sqrt{\int f^2\rmd \nu}\right).\]
 \end{lemma}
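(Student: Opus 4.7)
The plan is to leverage the coupling characterization of total variation distance, which under the convention $\|\mu-\nu\|_\tv = \frac12\sup_{|g|_1\leq 1}|(\mu-\nu)(g)|$ used in this paper, yields $\|\mu-\nu\|_\tv = \inf \PP(X\neq Y)$, where the infimum ranges over all couplings $(X,Y)$ of $\mu$ and $\nu$ on a common probability space. On a Polish space, existence of a \emph{maximal} coupling $(X,Y)$ attaining the infimum is a standard result (see, e.g., \cite{lindvall92}), and this is what I would invoke at the outset.

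With such a coupling in hand, the identity $\mu(f)-\nu(f) = \PE[f(X)-f(Y)]$ and the fact that $f(X)-f(Y)$ vanishes on $\{X=Y\}$ give
\[
\bigl|\mu(f)-\nu(f)\bigr| = \bigl|\PE\bigl[(f(X)-f(Y))\mathbf{1}_{\{X\neq Y\}}\bigr]\bigr| \leq \PE\bigl[|f(X)|\mathbf{1}_{\{X\neq Y\}}\bigr] + \PE\bigl[|f(Y)|\mathbf{1}_{\{X\neq Y\}}\bigr].
\]
Then I would apply the Cauchy--Schwarz inequality to each of the two terms on the right, using $\PE[f(X)^2]=\mu(f^2)$ and $\PE[f(Y)^2]=\nu(f^2)$, and $\PE[\mathbf{1}_{\{X\neq Y\}}^2] = \PP(X\neq Y) = \|\mu-\nu\|_\tv$, to obtain
\[
\PE\bigl[|f(X)|\mathbf{1}_{\{X\neq Y\}}\bigr] \leq \sqrt{\mu(f^2)}\,\sqrt{\|\mu-\nu\|_\tv},
\]
and analogously for the $Y$ term. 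Summing the two bounds yields the claim.

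There is essentially no technical obstacle: the only non-trivial ingredient is the existence of a maximal coupling on a Polish space, which is classical. The argument also clarifies why the exponent $1/2$ on the total variation appears rather than the familiar $1$: it is the price of trading the $L^\infty$-norm of $f$ for its $L^2$-norms under $\mu$ and $\nu$ via Cauchy--Schwarz.
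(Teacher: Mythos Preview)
Your proof is correct and follows essentially the same approach as the paper: both invoke a maximal coupling and exploit the fact that $f(X)-f(Y)$ vanishes on $\{X=Y\}$, then apply Cauchy--Schwarz. The only cosmetic difference is ordering: the paper applies Cauchy--Schwarz directly to $\mathbf{1}_{\{X\neq Y\}}(f(X)-f(Y))$ and then bounds $\PE^{1/2}[(f(X)-f(Y))^2]$ via Minkowski, whereas you split $|f(X)-f(Y)|$ first and apply Cauchy--Schwarz to each piece; the two routes are equivalent.
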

 \begin{proof}
 Let $(X,Y)$ be a maximal coupling of $(\mu,\nu)$. Then
 \[\left|\mu(f) - \nu(f)\right| = \left|\PE\left[\textbf{1}_{\{X\neq Y\}}\left(f(X) - f(Y)\right)\right]\right|\leq \|\mu-\nu\|_\tv^{1/2} \PE^{1/2}\left((f(X) - f(Y))^2\right).\]
 The result then follows from the Minkowski inequality.
 \end{proof}
 We will apply this lemma in the following context. Let $\bar\mu(\rmd x,\rmd y) = \mu_1(\rmd x)\mu_2(x,\rmd y)$, $\bar\nu(\rmd x,\rmd y) = \nu_1(\rmd x)\nu_2(x,\rmd y)$ be two probability measures on some Polish space $\Xset\times\Yset$, and let $f:\;\Xset\times\Yset\to\rset$ be a measurable function. Then by writing
 \begin{multline*}
 \bar\mu(f) -\bar\nu(f) = \left(\int_\Xset \mu_1(\rmd x) f_\mu(x) -\int_\Xset\nu_1(\rmd x)f_\mu(x)\right)  \\
 +  \left(\int_\Xset\nu_1(\rmd x)\int_\Yset \mu_2(x,\rmd y) f(x,y) - \int_\Xset\nu_1(\rmd x) \int_\Yset \nu_2(x,\rmd y)f(x,y)\right),\end{multline*}
 where $f_\mu(x) \eqdef \int_\Yset \mu_2(x,\rmd y)f(x,y)$, we deduce from Lemma \ref{lem:coupling} that if $C<\infty$ is such that $\int_\Xset\int_\Yset f^2(x,y)\mu_1(\rmd x)\mu_2(x,\rmd y)\leq C^2$, $\int_\Xset\int_\Yset f^2(x,y)\nu_1(\rmd x)\nu_2(x,\rmd y)\leq C^2$, and $\int_\Xset\int_\Yset f^2(x,y)\nu_1(\rmd x)\mu_2(x,\rmd y)\leq C^2$, then
 \begin{equation}\label{eq:comp}
 |\bar\mu(f) - \bar\nu(f)| \leq C\left(\|\mu_1-\nu_1\|_\tv^{1/2} + \left\|\int\nu_1(\rmd x)\mu_2(x,\cdot) - \int\nu_1(\rmd x)\nu_2(x,\cdot)\right\|_\tv^{1/2}\right).\end{equation}
We apply (\ref{eq:comp}) to $Kg_k(\delta,\theta) - \tilde Kg_k(\delta,\theta)$. Indeed, for $(\delta,\theta)\in\cB$, using (\ref{control:KV}), we have 
\[K g_k^2(\delta,\theta) \leq \frac{C_0^2}{(1-\tilde\lambda)^2}K V(\delta,\theta) \leq \frac{C_0^2}{(1-\tilde\lambda)^2},\;\;\;  \tilde K g_k^2(\delta,\theta) \leq  \frac{C_0^2}{(1-\tilde\lambda)^2},\]
and since $Q_{\theta,\mathsf{J}}$ also satisfies (\ref{control:QJ}), we similarly have
\[ \int\tilde K_\delta(\theta,\rmd\theta') \sum_{\mathsf{J}:\;|\mathsf{J}|=J}{p\choose  J}^{-1} \int_\Delta Q_{\theta',\mathsf{J}}(\delta,\rmd\delta')g_k^2(\delta',\theta') \leq\frac{C_0^2}{(1-\tilde\lambda)^2}. \]
Therefore (\ref{eq:comp}) and (\ref{eq:thm1:bound:dif:1}) yield
\begin{multline}\label{eq:thm1:bound:dif:2}
\left|\Pi(f) - \tilde\Pi(f)\right|  \leq C_0\tilde\lambda^k  +  \frac{C_0\sqrt{\Pi(\cB^c)}}{1-\tilde\lambda} +\frac{C_0}{1-\tilde\lambda} \sup_{(\delta,\theta)\in\cB}\|P_\delta([\theta]_\delta,\cdot) - \tilde P_\delta([\theta]_\delta,\cdot)\|_\tv^{1/2} \\
+  \frac{C_0}{1-\tilde\lambda} \sup_{(\delta,\theta)\in\cB} \left\| \int\tilde K_\delta(\theta,\rmd\theta') \sum_{\mathsf{J}:\;|\mathsf{J}|=J}{p\choose  J}^{-1} \left(Q_{\theta',\mathsf{J}}(\delta,\cdot) - \tilde Q_{\theta',\mathsf{J}}(\delta,\cdot)\right)\right\|_\tv^{1/2}.
 \end{multline}
The result then follows by taking $k\to\infty$.
\vspace{-0.5cm}
\begin{flushright}
$\square$
\end{flushright}

\section{Proof of Theorem \ref{thm:3}}\label{sec:proof:thm:3}
We recall that $\PP$ and $\PE$ denote the probability measure and expectation operator of the Markov chains defined by Algorithms \ref{algo:1} and {\ref{algo:2} (more specifically their coupling distribution as constructed below), and  $\PP_\star$ and $\PE_\star$ denote the probability measure and expectation operator of the data $Y$ as assumed in H\ref{H:post:contr}. 

Throughout we will use $C$ to denote a generic constant that depends only on the constants appearing in H\ref{H:post:contr} ($\sigma^2,\|\theta_\star\|_\infty$, $c_0,c_1$ and $c_2$). The actual value of $C$ may vary from one appearance to the next.

We use a similar argument as  in \cite{desa:16}. Let $\{\delta^{(k)},\;k\geq 0\}$ denote the $\delta$-marginal chain of Algorithm \ref{algo:2}, and let $\{\check\delta^{(k)},\;k\geq 0\}$ be the $\delta$-marginal chain of Algorithm \ref{algo:1}. These processes are also Markov chains because in both cases we have taken $P_\delta=\tilde P_\delta$  to be an exact draw from the posterior conditional distribution of $\theta$ given  $\delta$. We construct a coupling of $\{\delta^{(k)},\;k\geq 0\}$ and the stationary version of $\{\check\delta^{(k)},\;k\geq 0\}$ as follows. First take $\delta^{(0)}$ as the null model, and draw $\check\delta^{(0)}\sim \Pi(\cdot\vert \D)$, the marginal distribution of $\delta$  in (\ref{post:Pi}). For each $k\geq 0$, given $(\delta^{(k)},  \check\delta^{(k)})$, we do the following.
\begin{enumerate}
\item Given $\delta^{(k)},  \check\delta^{(k)}$, we independently  draw $\theta^{(k)}\sim \Pi(\cdot\vert \delta^{(k)},\D)$, $\check\theta^{(k)}\sim \Pi(\cdot\vert \check\delta^{(k)},\D)$, and  we select a random subset $\mathsf{J}^{(k)}= \{\mathsf{J}_1^{(k)},\ldots,\mathsf{J}_J^{(k)}\}$ of size $J$  from $\{1,\ldots,p\}$.
\item  We define $\vartheta\in\Delta$ as $\vartheta_i = 0$ if $i\in\mathsf{J}^{(k)}$, and $\vartheta_i = \delta^{(k)}_i$ otherwise. We also define  $\vartheta^{(0)} = \delta^{(k)}$, and $\check\vartheta^{(0)} = \check\delta^{(k)}$. For  each $r\in\{1,\ldots,J\}$, given  $\mathsf{J}^{(k)}_r = j$, we then do the following.
\begin{enumerate}
\item  We draw $(d_r^{(k)},\check d_r^{(k)})$ from the maximal coupling of $\textsf{Ber}(\tilde q_{j}(\vartheta,\theta^{(k)}))$ and $\textsf{Ber}(q_{j}(\check\vartheta^{(r-1)},\check\theta^{(k)}))$, where $q_j$ and $\tilde q_j$ are given by (\ref{cond:dist:eq:1}) and (\ref{cond:dist:eq:2}) respectively.  
\item We set $\vartheta^{(r)}_{j} = d_r^{(k)}$, $\check\vartheta_{j}^{(r)} =\check d_r^{(k)}$, and $\vartheta_{i}^{(r)} =\vartheta_{i}^{(r-1)}$, $\check\vartheta_{i}^{(r)} =\check\vartheta_{i}^{(r-1)}$, for $i\neq j$.
\end{enumerate}
\item Set $\delta^{(k+1)} = \vartheta^{(J)}$, and 
$\check\delta^{(k+1)} = \check\vartheta^{(J)}$.
\end{enumerate}

By construction, the marginal chain $\{\delta^{(k)}\;k\geq 0\}$ (resp.  $\{\check\delta^{(k)}\;k\geq 0\}$) from the above construction is the asynchronous sampler from Algorithm \ref{algo:2} (resp. a stationary version of Algorithm \ref{algo:1}). By the coupling inequality
\begin{equation}\label{coupling:ineq}
\PE_\star\left[\max_{j:\;\delta_{\star j}=1}\;|\PP(\delta_j^{(k)}=1) -\Pi(\delta_j=1\vert \D)|\right]  \leq \PE_\star\left[\max_{j:\;\delta_{\star j}=1}\;\PP(\delta^{(k)}_j\neq \check\delta^{(k)}_j)\right].\end{equation}
Hence the main part of the proof consists in bounding the right-hand side of the last display.  We do this in paragraph (e). Paragraphs (a)-(d) collect some needed implications of H\ref{H:post:contr}.

\paragraph{\texttt{(a) Restricted eigenvalues}}\;\; Given $s>0$, Let $\delta\in\Delta$ be such that $0<\|\delta\|_0\leq s$, and let $u\in\rset^{\|\delta\|_0}$. Using $|\pscal{X_i}{X_j}|\leq \sqrt{c_0n\log(p)}$ from H\ref{H:post:contr}-(1), and $\|X_j\|_2=\sqrt{n}$, we have
\[u'(X_\delta'X_\delta) u \geq n\|u\|_2^2 -  \sqrt{c_0n\log(p)}\sum_{i\neq j}|u_iu_j|\geq \left(n-s\sqrt{c_0n\log(p)}\right)\|u\|_2^2.\]
We conclude that if the sample size satisfies $n\geq 4c_0s^2\log(p)$, then
\begin{equation}\label{rsc}
\lambda_{\min}\left(X_\delta'X_\delta\right) \geq \frac{n}{2},\;\mbox{ for all } \delta\in\Delta,\;\mbox{ s. t. }\; 0<\|\delta\|_0\leq s,
\end{equation}
where $\lambda_{\min}(A)$ denotes the smallest eigenvalue of $A$.

\paragraph{\texttt{(b) Implications of the sub-Gaussian regression errors}}\;\; For $\delta\in\Delta$, we define
\[L_\delta \eqdef I_n + \frac{1}{\sigma^2}X_\delta X_\delta'.\]
We convene that $L_\delta = I_n$, for $\delta=0$. Clearly, $\|L_\delta^{-1}\|_2\leq1$. Given $s\geq 0$ (here we allow $s$ to be 0), and for some constant $c>0$, we set
\[\e_s\eqdef\left\{y\in\rset^n:\; \max_{1\leq j\leq p}\;\max_{\delta:\;\|\delta\|_0\leq s}\; \sigma^{-1}\left|\pscal{L_\delta^{-1}X_j}{y-X\theta_\star}\right| \leq \sqrt{c(1+s)n\log(p)}\right\}.\]
Using the sub-Gaussianity of the regression error term in H\ref{H:post:contr}-(2), and by a union bound argument, we can choose $c>0$ depending solely on the absolute constant $c_1$ in H\ref{H:post:contr}-(2), such that for all $s\geq 0$, 
\begin{equation}\label{bound:e}
\PP_\star\left(Y\notin\e_s\right) \leq p{p\choose s}c_1 \exp\left(-\frac{c(1+s)n\log(p)}{2n}\right) \leq \frac{c_1p^{s+1}}{p^{\frac{c(1+s)}{2}}}\leq \frac{1}{p},\end{equation}
where we use the fact that ${p\choose s}\leq p^s$. Throughout the proof, whenever we use the event $\e_s$, the constant $c$ is always taken as above.

\paragraph{\texttt{(c) Sparse MCMC output}}\;\; It will be important in the proof to guarantee that the Markov chain $\{\delta^{(k)},\;k\geq 1\}$ remains in the set $\Delta_s\eqdef\{\delta\in\Delta:\;\|\delta\|_0\leq s\}$ for some small value of $s$. The following result could probably be improved, but will serve the purpose.  Let 
\[s_1 \eqdef s_\star+2\log(p).\]
We show in Lemma \ref{lem:moment} that under the sample size condition (\ref{ss:cond}), and $\mathsf{u}$ taken large enough as in (\ref{ss:cond}), it holds
\begin{equation}\label{eq:sparse:mcmc}
\textbf{1}_{\e_0}(Y)\max_{k\geq 0}\;\PP\left(\|\delta^{(k)}\|_0 > s_1\right) \leq   \frac{1}{p}.\end{equation}

\paragraph{\texttt{(d) Posterior contraction}}\;\; We show  below that the posterior distribution $\Pi(\cdot\vert\D)$ puts most probability mass on sparse super-sets of $\delta_\star$. More precisely, by Lemma \ref{post:contr} we can find constants $C_1, C_2$ that depends only on the constants appearing in H\ref{H:post:contr} ($\sigma^2,\|\theta_\star\|_\infty$, $c_0,c_1$ and $c_2$)  such that for $n,p$ such that $n\geq C_1 (1 + s_\star^3)\log(p)$, it holds
\[\PE_\star\left[\textbf{1}_{\e_{s_\star}}(Y) \Pi(\Cset\vert \D)\right]\geq 1-\frac{2}{p},\]
where 
\[\Cset \eqdef \left\{\delta\in\Delta:\;\delta\supseteq \delta_\star,\;\mbox{ and } \; \|\delta\|_0\leq C_2 (1+s_\star)\right\}.\]
We set 
\[s_2 \eqdef C_2 (1+s_\star).\]
Furthermore the  linear regression setting implies  that the conditional posterior distribution of $\theta \vert \delta$ is given by
\begin{equation}\label{cond:dist:theta:lm}
[\theta]_{\delta^c}\;\vert\;\delta\; \stackrel{i.i.d.}{\sim} \textbf{N}(0,\rho_0^{-1}),\;\;\mbox{ and }\;\; [\theta]_\delta\;\vert\;\delta\sim \textbf{N}\left(\hat\theta_\delta,\sigma^2\left(\sigma^2\rho_1 I_{\|\delta\|_0} + X_{\delta}'X_\delta\right)^{-1}\right),\end{equation}
where
\[\hat\theta_\delta \eqdef \argmax_{u\in\rset^{\|\delta\|_0}} \;\left[-\frac{1}{2\sigma^2}\|y - X_\delta u\|_2^2 -\frac{\rho_1}{2}\|u\|_2^2\right] = \left(X_\delta'X_\delta + \rho_1\sigma^2 I_{\|\delta\|_0}\right)^{-1} X_\delta'y.\]
Therefore, if for some $M>0$ we set 
\begin{multline*}
\cB_{\delta} \eqdef \left\{\theta\in\rset^p:\; \|\theta-\theta_\delta\|_\infty \leq  \sqrt{\frac{M\log(p)}{\rho_0}} \;\;\mbox{ and }\;\; \|\theta_\delta - \hat\theta_\delta\|_\infty \leq \sqrt{\frac{M\sigma\log(p)}{n}}\right\},\end{multline*}
then, provided that $n\geq Cs^2\log(p)$, for some constant $C$,  by the restricted eigenvalue bound in (\ref{rsc}), and by Gaussian tail bounds and a union bound argument, for all $\delta\in\Delta_s$, we have
\begin{equation}\label{proof:asynch:bound:B}
\Pi\left(\cB_\delta^c\vert \delta,\D\right) \leq \frac{4}{p^{-1+M/2}}\leq \frac{1}{p},\end{equation}
by taking $M>2$ appropriately.

\paragraph{\texttt{(e) Main arguments of the proof}}\;\;
With $s_1$ as in Paragraph (c) and $s_2$ as in Paragraph (d), we set
\[s \eqdef \max(s_1,s_2).\]
Fix $Y\in\e_{s_\star}$, and fix some arbitrary component $j$ such that $\delta_{\star j}=1$. We first note that $\delta_j^{(k+1)}\neq \check\delta^{(k+1)}_j$ if and only if $j\in\mathsf{J}^{(k)}$, and the corresponding Bernoulli's $(d_r^{(k)},\check d_r^{(k)})$ are different, or $\delta_j^{(k)}\neq \check\delta_j^{(k)}$, and $j\notin\mathsf{J}^{(k)}$. We write this as
\begin{multline*}
\PP\left[\delta_j^{(k+1)} \neq \check\delta_j^{(k+1)} \vert \delta^{(k)},\check\delta^{(k)}\right] \\
= \textbf{1}_{\left\{ \delta_j^{(k)} \neq \check\delta_j^{(k)}\right\}}  \left(1- \frac{J}{p}\right) +  \sum_{r=1}^J \PP\left[\mathsf{J}_r^{(k)} = j, d_r^{(k)} \neq \check d_r^{(k)}\vert \delta^{(k)},\check\delta^{(k)}\right]\\
= \textbf{1}_{\left\{ \delta_j^{(k)} \neq \check\delta_j^{(k)}\right\}}  \left(1- \frac{J}{p}\right) +  \frac{1}{p}\sum_{r=1}^J \PP\left[d_r^{(k)} \neq \check d_r^{(k)}\vert \mathsf{J}_r^{(k)} = j, \delta^{(k)},\check\delta^{(k)}\right],\end{multline*}
where we use the fact that $\PP(\mathsf{J}_r^{(k)}=j\vert \delta^{(k)},\check\delta^{(k)})=1/p$.  With $s_1,s_2$ as above, we  introduce the set $\mathbb{T}\eqdef \Delta_{s_1}\times \Cset_{s_2}$ where,
\[\Delta_{s_1}\eqdef\left\{\delta\in\Delta:\;\|\delta\|_0\leq s_1\right\},\;\;\mbox{ and }\;\;\Cset_{s_2} \eqdef \left\{\delta\in\Delta:\;\delta\supseteq\delta_\star,\;\|\delta\|_0\leq s_2\right\}.\]
It follows that
\begin{multline*}
\PP\left[\delta_j^{(k+1)} \neq \check\delta_j^{(k+1)} \vert \delta^{(k)},\check\delta^{(k)}\right] \\
\leq  \textbf{1}_{\left\{ \delta_j^{(k)} \neq \check\delta_j^{(k)}\right\}}\left(1- \frac{J}{p}\right)  + \textbf{1}_{\mathbb{T}}(\delta^{(k)}, \check\delta^{(k)}) \frac{1}{p} \sum_{r=1}^J \PP\left[d_r^{(k)} \neq \check d_r^{(k)}\vert \mathsf{J}_r^{(k)} = j, \delta^{(k)},\check\delta^{(k)}\right] \\
+ \frac{J}{p}\textbf{1}_{\mathbb{T}^c}(\delta^{(k)}, \check\delta^{(k)}).
\end{multline*}
Let us set
\[A^{(k)} \eqdef \PP\left((\delta^{(k)}, \check\delta^{(k)})\notin \mathbb{T}\right),\;\;\mathcal{I}_{r,j}^{(k)}(\theta,\check\theta) \eqdef \PP\left(d_r^{(k)} \neq \check d_r^{(k)}\vert \mathsf{J}_r^{(k)}=j,\delta^{(k)},\check\delta^{(k)},\theta,\check\theta\right).\]
Taking expectation on both sides of the last inequality, we get
\begin{eqnarray}\label{proof:eq1}
\PP\left(\delta_j^{(k+1)} \neq \check\delta_j^{(k+1)}\right) &  \leq &  \left(1- \frac{J}{p}\right)\PP\left(\delta_j^{(k)} \neq \check\delta_j^{(k)}\right) + \frac{J A^{(k)}}{p}\nonumber\\
&& + \frac{1}{p}\sum_{r=1}^J \PE\left[\textbf{1}_{\mathbb{T}}(\delta^{(k)}, \check\delta^{(k)})\PP\left(d_r \neq \check d_r\vert \mathsf{J}_r^{(k)}=j,\delta^{(k)},\check\delta^{(k)}\right)\right],\nonumber\\
& = & \left(1- \frac{J}{p}\right)\PP\left(\delta_j^{(k)} \neq \check\delta_j^{(k)}\right) +  \frac{J A^{(k)}}{p}\\
&&+ \frac{1}{p}\sum_{r=1}^J \PE\left[\textbf{1}_{\mathbb{T}}(\delta^{(k)}, \check\delta^{(k)})\int\mathcal{I}_{r,j}^{(k)}(\theta,\check\theta)\,\Pi(\rmd\theta\vert \delta^{(k)},\D)\,\Pi(\rmd\check\theta\vert\check\delta^{(k)},\D)\right]\nonumber.
\end{eqnarray}
We establish the following claim below
\begin{multline}\label{big:claim}
\textbf{1}_{\mathbb{T}}(\delta^{(k)}, \check\delta^{(k)})\int\mathcal{I}_{r,j}^{(k)}(\theta,\check\theta)\,\Pi(\rmd\theta\vert \delta^{(k)},\D)\,\Pi(\rmd\check\theta\vert\check\delta^{(k)},\D)\\
 \leq \left(e^{-C \sqrt{n}\underline{\theta}_\star} + \frac{1}{p}\right) + \frac{7}{10} \textbf{1}_{\{\delta_j^{(k)}\neq \check\delta_j^{(k)}\}} .
\end{multline}
Using (\ref{big:claim}) in (\ref{proof:eq1}), we obtain
\begin{multline}\label{proof:eq3}
\PP\left(\delta_j^{(k+1)} \neq \check\delta_j^{(k+1)}\right) \leq \left(1- \frac{J}{p}\right)\PP\left(\delta_j^{(k)} \neq \check\delta_j^{(k)}\right) + \frac{J A^{(k)}}{p} \\
+  \frac{J}{p}\left(e^{-C \sqrt{n}\underline{\theta}_\star} + \frac{1}{p}\right) + \frac{7}{10}\frac{J}{p} \PP\left(\delta_j^{(k)} \neq \check\delta_j^{(k)}\right)\\
\leq \left(1- \frac{3}{10}\frac{J}{p}\right)\PP\left(\delta_j^{(k)} \neq \check\delta_j^{(k)}\right) +  \frac{J}{p}\left(A^{(k)} +e^{-C \sqrt{n}\underline{\theta}_\star} + \frac{1}{p}\right).
\end{multline}
Iterating (\ref{proof:eq3}) yields
\begin{multline}\label{proof:eq4}
\max_{j:\;\delta_{\star j}=1}\;\PP\left(\delta_j^{(k)} \neq \check\delta_j^{(k)}\right) \leq \left(1- \frac{3}{10}\frac{J}{p}\right)^k  + \frac{10}{3}\left(e^{-C \sqrt{n}\underline{\theta}_\star} + \frac{1}{p}\right) \\
+ \frac{J}{p} \sum_{t=0}^{k-1}\left(1- \frac{3}{10}\frac{J}{p}\right)^tA^{(k-t)}.
\end{multline}
Recall that
\[A^{(k)} \eqdef \PP\left((\delta^{(k)}, \check\delta^{(k)})\notin \mathbb{T}\right)\leq \PP(\|\delta^{(k)}\|_0>s_1) + \Pi(\Cset_{s_2}^c\vert \D),\]
where $\Cset_{s_2}^c\eqdef \Delta\setminus\Cset_{s_2}$. By Lemma \ref{lem:moment} and Lemma \ref{post:contr} below, we have
\[\textbf{1}_{\e_{s_\star}}(Y)\max_{k\geq 0}  \PP(\|\delta^{(k)}\|_0>s_1) \leq \frac{1}{p},\;\mbox{ and }\;\; \PE_\star\left[\textbf{1}_{\e_{s_\star}}(Y)\Pi(\Cset_{s_2}^c\vert \D)\right]\leq \frac{1}{p} .\]

Taking the expectation over the data $Y$ in (\ref{proof:eq4}), and using the last display and (\ref{coupling:ineq}), we deduce that
\begin{multline}\label{proof:eq5}
\PE_\star\left[\textbf{1}_\e(Y)\max_{j:\;\delta_{\star j}=1}\;\;\left|\mathbb{P}(\delta_j^{(k)}=1) -\Pi(\delta_j=1\vert \D)\right|\right] \leq \left(1- \frac{3}{10}\frac{J}{p}\right)^k  \\
+ \frac{10}{3}\left(e^{-C \sqrt{n}\underline{\theta}_\star} + \frac{1}{p} + \frac{2}{p}\right)\\
\leq \left(1- \frac{3}{10}\frac{J}{p}\right)^k + 10\left(e^{-C \sqrt{n}\underline{\theta}_\star} + \frac{1}{p}\right).
\end{multline}
It remains only to  establish the claim (\ref{big:claim}). 

\paragraph{\texttt{Proof of Claim (\ref{big:claim})}}\;\; We consider two cases.

\paragraph{\underline{\texttt{Case 1}: $\delta_j^{(k)}\neq \check\delta_j^{(k)}$}}\;\;  Since $\check\delta^{(k)}\in\Cset_{s_1}$ (which implies that  $\check\delta^{(k)}_j=1$), we must then have $\delta_j^{(k)}=0$, and $\check\delta_j^{(k)}=1$. Set
\[\mathbb{S}\eqdef\left\{(\theta,\check\theta)\in\rset^p\times \rset^p:\; \theta\in\cB_{\delta^{(k)}}, \check\theta\in\cB_{\check\delta^{(k)}},\;\; \mbox{ and } \;\;\sqrt{\frac{1}{100\rho_0}} \leq \textsf{sign}(\theta_{\star j})\theta_{j} \leq \sqrt{\frac{4}{\rho_0}} \right\}.\]
It follows from (\ref{proof:asynch:bound:B}) and the fact that $\theta_j^{(k)}\vert \{\delta^{(k)}_j=0\}\sim\textbf{N}(0,\rho_0^{-1})$ that for $(\delta^{(k)},\check\delta^{(k)})\in \mathbb{T}$,
\begin{equation}\label{case1:eq1}
\PP\left((\theta^{(k)},\check\theta^{(k)})\notin\mathbb{S}\vert \delta^{(k)}, \check\delta^{(k)}\right)\leq \frac{4}{p^{M/2}} + \frac{3}{5}\leq \frac{7}{10}.\end{equation}
First we note that for  $(\theta,\check\theta)\in\mathbb{S}$, $|\theta_j| \leq 2/\sqrt{\rho_0} \leq Cn^{-1/2}$. Whereas for $i\neq j$ and $\delta_i^{(k)}=0$, we have $|\theta_i|\leq \sqrt{M\log(p)/n}$, and if $\delta_i^{(k)}=1$, using (\ref{proof:asynch:bound:B}), and Lemma \ref{lem:model:grad}-(1),
\[|\theta_i| = |\theta_i -\hat\theta_i| + |\hat\theta_i -\theta_{\star i}| + |\theta_{\star i}|\leq C\sqrt{\frac{\log(p)}{n}} + C\sqrt{\frac{\log(p)}{n}}+ \|\theta_\star\|_\infty \leq C,\] 
under the sample size condition (\ref{ss:cond}). Using the expression $\tilde q_j$ in (\ref{cond:dist:eq:2}), and since $\rho_0\geq \rho_1$, and ignoring the  nonpositive quadratic term, we have
\[1 - \tilde q_j(\vartheta,\theta) \leq \exp\left(\mathsf{a} -\frac{\theta_j}{\sigma^2}\pscal{X_j}{y- X\theta_{\vartheta}} \right).\]
 We write
\[\pscal{X_j}{y- X\theta_{\vartheta}} =  \left[\pscal{X_j}{X\theta_{\delta^{(k)}}} -\pscal{X_j}{X\theta_{\vartheta}} \right] +  \pscal{X_j}{y-X\theta_{\delta^{(k)}}}.\]
Since $|\theta_i|\leq C$, for $\delta_i^{(k)}=1$, we have
\begin{multline*}
\left|\pscal{X_j}{X\theta_{\vartheta}}  - \pscal{X_j}{X\theta_{\delta^{(k)}}}\right| = \left|\sum_{r\in\mathsf{J}^{(k)}:r\neq j,\;\delta^{(k)}_r=1} \;\;\pscal{X_j}{X_r}\theta_r\right|\\
 \leq C \min(J,s_1) \sqrt{n\log(p)}.\end{multline*}
We can rewrite the last display as
\[\left|\pscal{X_j}{y- X\theta_{\vartheta}} - \pscal{X_j}{y-X\theta_{\delta^{(k)}}}\right| \leq  C \min(J,s_1) \sqrt{n\log(p)}.\] 
We further expand the term $\pscal{X_j}{y-X\theta_{\delta^{(k)}}}$ as
\begin{multline*}
\pscal{X_j}{y-X\theta_{\delta^{(k)}}} = \pscal{X_j}{y-X\theta_\star} + \pscal{X_j}{X\theta_\star - X_{\delta^{(k)}}[\theta_\star]_{\delta^{(k)}}} \\
+  \pscal{X_j}{X_{\delta^{(k)}}([\theta_\star]_{\delta^{(k)}} - [\theta]_{\delta^{(k)}})}.\end{multline*}
Note that
\[\pscal{X_j}{X\theta_\star - X_\delta[\theta_\star]_{\delta^{(k)}}} = n\theta_{\star j} + \sum_{r:\;\delta_{\star r}=1,\delta_r^{(k)}=0}\theta_{\star r}\pscal{X_j}{X_r}.\]
  For $\theta\in\cB_{\delta^{(k)}}$, 
\begin{multline*}
\left|\pscal{X_j}{X_{\delta^{(k)}}([\theta_\star]_{\delta^{(k)}} - [\theta]_{\delta^{(k)}})}\right|\leq s_1\sqrt{c_0n\log(p)}\|[\theta_\star]_{\delta^{(k)}} - [\theta]_{\delta^{(k)}}\|_\infty\\
\leq Cs_1\sqrt{n\log(p)}\left(\|[\theta_\star]_{\delta^{(k)}} -\hat\theta_{\delta^{(k)}}\|_\infty + \sqrt{\frac{\log(p)}{n}}\right)\\
\leq C(1+m(\delta^{(k)})) s_1\log(p), \end{multline*}
where the last inequality uses Lemma \ref{lem:model:grad}. Using this, and since $\delta_{\star j}=1$, for $\theta\in\cB_{\delta^{(k)}}$ we have
\begin{multline}\label{grad:ineq}
\left|\pscal{X_j}{y-X\theta_{\vartheta}} - n\theta_{\star j} \right| \leq C\min(s_1,J)\sqrt{n\log(p)} + |\pscal{X_j}{y-X\theta_\star}| \\
+ \sum_{r:\;\delta_{\star r}=1,\delta_r^{(k)}=0}\left|\theta_{\star r}\pscal{X_j}{X_r}\right| + C(1+m(\delta^{(k)})) s_1\log(p) \\
\leq C\left(s_\star +\min(J,s_1)\right)\sqrt{n\log(p)},
\end{multline}
using the sample size condition $n\geq s_1^2\log(p)$. Since $|\theta_j|\leq C n^{-1/2}$, we conclude that
\[|\theta_j||\pscal{X_j}{y- X\theta_{\vartheta}}- n\theta_{\star j}| \leq  C\left(s_\star +\min(J,s_1)\right)\sqrt{\log(p)}.\]
It follows that  for $(\theta,\check\theta)\in\mathbb{S}$,
\begin{multline*}
\frac{\theta_j}{\sigma^2}\pscal{X_j}{y- X\theta_{\vartheta}} \geq \frac{n\theta_{\star j}\theta_j}{\sigma^2}  - C\left(s_\star +\min(J,s_1)\right)\sqrt{\log(p)} \\
\geq \frac{|\theta_{\star j}|\sqrt{n}}{10\sigma^2} - C\left(s_\star +\min(J,s_1)\right)\sqrt{\log(p)} \geq \frac{|\theta_{\star j}|\sqrt{n}}{20\sigma^2},\end{multline*}
under the sample size condition (\ref{ss:cond}). Hence, since $\mathsf{a} = \mathsf{u}\log(p) + \log(\rho_0)/2$, for $(\theta,\check\theta)\in\mathbb{S}$
\begin{equation}\label{case1:eq2}
1 - \tilde q_j(\vartheta,\theta) \leq \exp\left(\mathsf{u}\log(p) +\frac{1}{2}\log\left(\frac{n}{\sigma^2}\right) - \frac{\sqrt{n}\underline{\theta}_\star}{20\sigma}\right) \leq e^{-C \sqrt{n}\underline{\theta}_\star}.\end{equation}
We handle $1 - q_j(\check\vartheta^{(r-1)},\theta)$ similarly: since $\rho_0 \geq \rho_1$,
\[
1 - q_j(\check\vartheta^{(r-1)},\check\theta)\leq  \exp\left(\mathsf{a} + \frac{n\check\theta_j^2}{2\sigma^2}-\frac{\check\theta_j}{\sigma^2}\pscal{X_j}{y- X\check\theta_{(\check\vartheta^{(r-1)})^{(j,0)}}}\right).\]
The inequality (\ref{grad:ineq}) remains valid when applied to $\check\theta$ and $(\check\vartheta^{(r-1)})^{(j,0)}$ (but with $\min(J,s_1)$ replaced by $J$), and yields
\[\left|\pscal{X_j}{y- X\check\theta_{(\check\vartheta^{(r-1)})^{(j,0)}}} - n\theta_{\star j}\right| \leq C\left(s_\star +J\right)\sqrt{n\log(p)},
\]
leading to
\begin{multline*}
-\frac{\check\theta_j}{\sigma^2}\pscal{X_j}{y- X\check\theta_{(\check\vartheta^{(r-1)})^{(j,0)}}} \\
\leq  -\frac{n}{\sigma^2}\theta_{\star j}^2 + \frac{n\theta_{\star j}}{\sigma^2}|\check\theta_j - \theta_{\star j}|+  C\left(s_\star +J\right)\sqrt{n\log(p)}\\
\leq  \frac{n}{\sigma^2}\theta_{\star j}^2 +  C\left(s_\star +J\right)\sqrt{n\log(p)},
\end{multline*}
where we use Lemma \ref{lem:model:grad} to derive the bound $|\check\theta_j - \theta_{\star j}|\leq \left(|\check\theta_j - \hat\theta_{j}| + |\hat\theta_j - \theta_{\star j}|\right) \leq C\sqrt{n/\log(p)}$. The same bound implies that
\[
\frac{n\check\theta_j^2}{2\sigma^2} =  \frac{n\theta_{\star j}^2}{2\sigma^2}  +\frac{n(\check\theta_j^2 - \theta_{\star j}^2)}{2\sigma^2}\leq  \frac{n\theta_{\star j}^2}{2\sigma^2} + C\sqrt{n\log(p)}.\]
We conclude that
\begin{multline*}
\frac{n\check\theta_j^2}{2\sigma^2} -\frac{\check\theta_j}{\sigma^2}\pscal{X_j}{y- X\check\theta_{(\check\vartheta^{(r-1)})^{(j,0)}}}\\
 \leq - \frac{n\theta_{\star j}^2}{2\sigma^2} + C\left(s_\star +J\right)\sqrt{n\log(p)}  \leq -\frac{n\underline{\theta_\star}^2}{4\sigma^2},
 \end{multline*}
under the sample size condition (\ref{ss:cond}).
Hence
\begin{equation}\label{case1:eq3}
1 - q_j(\check\vartheta^{(r-1)},\check\theta) \leq  \exp\left(\mathsf{a}-\frac{n\underline{\theta_\star}^2}{4\sigma^2}\right)  \leq e^{-C n\underline{\theta}_\star^2}.\end{equation}
 Since the Bernoulli random variables $d_r^{(k)}$ and $\check d_r^{(k)}$ are maximally coupled, (\ref{case1:eq2}) and (\ref{case1:eq3}) imply that for $(\delta^{(k)},\check\delta^{(k)})\in\mathbb{T}$, and $\delta^{(k)}_j\neq \check\delta_k^{(k)}$, 
\begin{equation}\label{case1:eq4}
\int\mathcal{I}_{r,j}^{(k)}(\theta,\check\theta)\,\Pi(\rmd\theta\vert \delta^{(k)},\D)\,\Pi(\rmd\check\theta\vert\check\delta^{(k)},\D) \leq e^{-C \sqrt{n}\underline{\theta}_\star} + \frac{7}{10}.
\end{equation}

\paragraph{\underline{\texttt{Case 2}: $\delta_j^{(k)} = \check\delta_j^{(k)}$}}\;\;  Since $\check\delta^{(k)}\in\Cset_{s_1}$, we must then have $\delta_j^{(k)}= \check\delta_j^{(k)}=1$. Here we define the set $\mathbb{S}$ as
\[\mathbb{S}\eqdef\left\{(\theta,\check\theta)\in\rset^p\times \rset^p:\; \theta\in\cB_{\delta^{(k)}}, \check\theta\in\cB_{\check\delta^{(k)}}\right\}.\]
It follows from (\ref{proof:asynch:bound:B})  that for $(\delta^{(k)},\check\delta^{(k)})\in \mathbb{T}$,
\begin{equation}\label{case2:eq1}
\PP\left((\theta^{(k)},\check\theta^{(k)})\notin\mathbb{S}\vert \delta^{(k)}, \check\delta^{(k)}\right)\leq \frac{4}{p^{1-M/2}}.\end{equation}
For $(\delta^{(k)},\check\delta^{(k)})\in\mathbb{T}$, and $(\theta,\check\theta)\in\mathbb{S}$, the calculations on $1 - q_j(\check\vartheta^{(r-1)},\check\theta)$ remain valid, and we have
\[1 - q_j(\check\vartheta^{(r-1)},\check\theta) \leq e^{-C n\underline{\theta}_\star^2}.\]
For $\delta^{(k)}_j=1$, and $\theta\in\cB_{\delta^{(k)}}$, it follows from (\ref{grad:ineq}) that
\[\theta_j\pscal{X_j}{y- X\theta_{\vartheta}} \geq   \frac{n\theta_{\star j}^2}{\sigma^2} - C\left(s_\star +\min(J,s_1)\right)\sqrt{n\log(p)} \geq  \frac{n\underline{\theta}_\star^2}{2\sigma^2},\]
under the sample size condition (\ref{ss:cond}). We deduce  that
\begin{multline*}
1 - \tilde q_j(\vartheta,\theta) \leq \exp\left(\mathsf{a}  -\frac{\theta_j}{\sigma^2}\pscal{X_j}{y- X\theta_{\vartheta}} \right) \leq \exp\left(\mathsf{a} - \frac{n \underline{\theta}_\star^2}{2\sigma^2}\right) \leq e^{-Cn\underline{\theta}_\star^2}.\end{multline*}
The last two majorations on $1 - q_j(\check\vartheta^{(r-1)},\check\theta)$ and $1 - \tilde q_j(\vartheta,\theta)$, and (\ref{case2:eq1}) implies that for $\delta_j^{(k)} = \check\delta_j^{(k)}$, 
\begin{equation}\label{case2:eq2}
\int\mathcal{I}_{r,j}^{(k)}(\theta,\check\theta)\,\Pi(\rmd\theta\vert \delta^{(k)},\D)\,\Pi(\rmd\check\theta\vert\check\delta^{(k)},\D) \leq  e^{-Cn\underline{\theta}_\star^2} + \frac{4}{p^{1-M/2}}.
\end{equation}
The claim (\ref{big:claim}) follows from (\ref{case1:eq2}) and (\ref{case2:eq2}) together.
\begin{flushright}
\vspace{-0.2cm}
$\square$
\end{flushright}

\subsection{Technical lemmas}

\begin{lemma}\label{lem:moment}
Assume H\ref{H:post:contr}, and let $\e_0$ as in (\ref{bound:e}). Let $\{\delta^{(k)},\;k\geq 0\}$ be the $\delta$-marginal chain generated by Algorithm \ref{algo:2} for the linear regression posterior. There exists a constant $C>0$ that depends only on $c$ (in the definition of $\e_0$) and the constants in H\ref{H:post:contr} such that for 
\begin{equation}\label{cond:lem:mom}
\mathsf{u} \geq C(1 + s_\star^2),\;\;\mbox{ and }\;\; n\geq \left(\|\delta^{(0)}\|_0+s_\star+2\log(p)\right)^2\log(p),
\end{equation} 
it holds
\[\textbf{1}_{\e_0}(Y)\PP\left(\|\delta^{(k)}\|_0 > \|\delta^{(0)}\|_0+s_\star+2\log(p)\right) \leq  \left(\frac{p-s_\star}{2p}\right)^{s-\|\delta^{(0)}\|_0-s_\star} \leq  \frac{1}{p}.\]
\end{lemma}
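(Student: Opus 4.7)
My plan is to run a Lyapunov drift argument with the potential $V(\delta)\eqdef \alpha^{\|\delta\|_0}$, where $\alpha\eqdef 2p/(p-s_\star)$, applied to a suitably stopped version of the chain, followed by Markov's inequality. Setting $s\eqdef\|\delta^{(0)}\|_0 + s_\star + 2\log(p)$ and $\tau\eqdef\inf\{k\geq 0:\|\delta^{(k)}\|_0>s\}$, the target bound is equivalent to
\[
\mathbf{1}_{\e_0}(Y)\,\PP(\tau\leq k) \;\leq\; \alpha^{-2\log p} = \bigl(\tfrac{p-s_\star}{2p}\bigr)^{2\log p},
\]
after which the chain of inequalities $\alpha^{-2\log p}\leq 2^{-2\log p} = p^{-2\log 2} \leq 1/p$ for $p\geq 2$ would close the statement. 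The bound on $\PP(\tau\leq k)$ should follow by showing $\E\bigl[V(\delta^{(k\wedge\tau)})\mathbf{1}_{\e_0}\bigr]\leq \alpha^{s_\star}V(\delta^{(0)})$ uniformly in $k$ and then applying $\alpha^{s}\mathbf{1}_{\{\tau\leq k\}}\leq V(\delta^{(k\wedge\tau)})$.

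The core step is a one-step drift inequality: on $\e_0\cap\{\|\delta^{(k)}\|_0\leq s\}$ I would show $\E[V(\delta^{(k+1)})\vert \delta^{(k)}]\leq \alpha^{s_\star}V(\delta^{(k)})$, with the factor $\alpha^{s_\star}$ absorbing the contribution of the at most $s_\star$ relevant components that could be newly turned on in one sweep. Using that $(V_j)_{j\in\mathsf{J}}$ are conditionally independent Bernoullis in (STEP 2) of Algorithm~\ref{algo:2},
\[
\E\bigl[\alpha^{\|\delta^{(k+1)}\|_0-\|\delta^{(k)}\|_0}\,\big\vert\,\bar\theta,\mathsf{J},\delta^{(k)}\bigr] = \prod_{j\in\mathsf{J}}\alpha^{-\delta_j^{(k)}}\bigl(1+(\alpha-1)\tilde q_j(\vartheta,\bar\theta)\bigr).
\]
For $j$ with $\delta_j^{(k)}=1$ the factor is $\leq 1$; for relevant inactive $j$ the factor is at most $\alpha$ but concerns at most $s_\star$ indices, which accounts for the $\alpha^{s_\star}$ slack; the crux is controlling the factors for irrelevant inactive $j$ via a sharp bound on $\tilde q_j$.

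The main technical obstacle is establishing the uniform upper bound $\tilde q_j(\vartheta,\bar\theta)\leq \kappa$, for some $\kappa$ polynomially small in $p$, on $\e_0\cap\{\|\delta^{(k)}\|_0\leq s\}$ and with high probability over the posterior draw $\bar\theta$, whenever $\delta_{\star j}=0$ and $\delta_j^{(k)}=0$. To this end I would bound the exponent
\[
\mathsf{a}+\tfrac{1}{2}(\rho_1-\rho_0)\bar\theta_j^2-\bar\theta_j G_j(\bar\theta_\vartheta)-\tfrac{\bar\theta_j^2}{2}G_j(\bar\theta_\vartheta)^2
\]
from below term by term. Since $\delta_j^{(k)}=0$, $\bar\theta_j\sim\gauss(0,\rho_0^{-1})$ and a standard Gaussian tail bound gives $|\bar\theta_j|\leq C\sqrt{\log(p)/n}$; the gradient $G_j(\bar\theta_\vartheta)=-\sigma^{-2}\pscal{X_j}{y-X\bar\theta_\vartheta}$ is bounded by splitting $y-X\bar\theta_\vartheta=(y-X\theta_\star)+X(\theta_\star-\bar\theta_\vartheta)$, applying $\e_0$ to the first piece and H\ref{H:post:contr}-(1) together with $\ell^1$ posterior concentration to the cross term. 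The required concentration of $\bar\theta_\vartheta$ around $\theta_\star$ is provided by the restricted eigenvalue bound (\ref{rsc}), valid under the sample-size condition (\ref{cond:lem:mom}), combined with Lemma~\ref{post:contr}. All correction terms come out $O(s_\star^2\log p)$, so the condition $\mathsf{u}\geq C(1+s_\star^2)$ forces the prior term $\mathsf{a}\geq \mathsf{u}\log p$ to dominate and makes $\kappa$ polynomially small in $p$. The most delicate piece is the negative-definite sensitivity term $-\tfrac{1}{2}\bar\theta_j^2 G_j^2$, which always lowers the exponent (enlarging $\tilde q_j$) and is precisely what forces $\mathsf{u}$ to grow quadratically in $s_\star$.
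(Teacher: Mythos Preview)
Your Lyapunov plan has a genuine gap at the iteration step. The one-step drift $\E[V(\delta^{(k+1)})\mid\delta^{(k)}]\leq \alpha^{s_\star}V(\delta^{(k)})$ on $\{\|\delta^{(k)}\|_0\leq s\}$ cannot yield the uniform bound $\E[V(\delta^{(k\wedge\tau)})]\leq \alpha^{s_\star}V(\delta^{(0)})$: since $\alpha^{s_\star}>1$, iterating the drift on the stopped chain gives $\alpha^{ks_\star}V(\delta^{(0)})$, which diverges. Optional stopping requires a supermartingale. The obstruction is structural: relevant inactive components can be activated and deactivated repeatedly along the trajectory, and with the potential $V(\delta)=\alpha^{\|\delta\|_0}$ each such activation costs a factor $\alpha$ that nothing in your accounting cancels. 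A separate problem is your appeal to Lemma~\ref{post:contr} for ``concentration of $\bar\theta_\vartheta$ around $\theta_\star$'': that lemma concerns the posterior marginal of $\delta$, not the chain state $\delta^{(k)}$, and given an arbitrary $\delta^{(k)}$ with $\|\delta^{(k)}\|_0\leq s$ the draw $\bar\theta$ concentrates around $\hat\theta_{\delta^{(k)}}\approx [\theta_\star]_{\delta^{(k)}}$ (Lemma~\ref{lem:model:grad}), not around $\theta_\star$, whenever $\delta^{(k)}\nsupseteq\delta_\star$.

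The paper proceeds quite differently, via a direct counting argument rather than drift. On $\{\|\delta^{(k)}\|_0>s\}$ there must exist at least $s-\|\delta^{(0)}\|_0-s_\star=2\log p$ individual Bernoulli updates $(t,r)$ with $\|\delta^{(t)}\|_0\leq s$, $\mathsf{J}_r^{(t)}=j$ satisfying $\delta_{\star j}=0$, $\delta_j^{(t)}=0$, and $d_r^{(t)}=1$. Only the weak bound $\PP(d_r^{(t)}=1\mid\mathsf{J}_r^{(t)}=j,\delta^{(t)})\leq 1/2$ is needed: the paper restricts to a set $\mathbb{S}=\{\theta\in\cB_{\delta^{(t)}},\;|\theta_j|\leq\sqrt{m/\rho_0}\}$ with $\Pi(\mathbb{S}^c\mid\delta^{(t)},\D)\leq 1/4$, and on $\mathbb{S}$ expands $\pscal{X_j}{y-X\bar\theta_\vartheta}=\pscal{X_j}{y-X\theta_\star}+\sum_{r\neq j}\pscal{X_j}{X_r}(\theta_{\star r}-\bar\theta_r\vartheta_r)$ termwise using $\e_0$, H\ref{H:post:contr}-(1), and $|\bar\theta_r|\leq C$, obtaining $|\bar\theta_j\pscal{X_j}{y-X\bar\theta_\vartheta}|\leq C s_\star\sqrt{\log p}$ without any assumption that $\delta^{(t)}\supseteq\delta_\star$; then $\mathsf{u}\geq C(1+s_\star)^2$ makes $\mathsf{a}$ dominate the exponent and forces $\tilde q_j\leq 1/4$ on $\mathbb{S}$. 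Combined with $\PP(\mathsf{J}_r^{(t)}\text{ irrelevant inactive}\mid\delta^{(t)})\leq (p-s_\star)/p$ and multiplied over the $2\log p$ events, this gives the stated bound.
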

\begin{proof} Fix $Y\in\e_0$. Set $s_1 \eqdef  \|\delta^{(0)}\|_0+s_\star+2\log(p)$. Referring to the coupling construction at the beginning of the proof,  the event $\{\|\delta^{(k)}\|_0>s_1\}$ means that we can find at least $s_1 -\|\delta^{(0)}\|_0 -s_\star$ terms among $\{(\delta^{(t)},  \mathsf{J}_r^{(t)}, d_r^{(t)}),\;1\leq t\leq k-1, 1\leq r\leq J\}$ where $\|\delta^{(t)}\|_0\leq s_1$, $\mathsf{J}_r^{(t)}\in \{j:\;\delta_{\star j}=0,\;\mbox{ and }\; \delta_j^{(t)}=0\}$,  and $d_r^{(t)} = 1$.
\[\PP\left(\mathsf{J}_r^{(t)}\in \{j:\;\delta_{\star j}=0,\;\mbox{ and }\; \delta_j^{(t)}=0\}\vert \delta^{(t)}\right) \leq  1 - \frac{s_\star}{p}.\]
We show next that on the event $\|\delta^{(t)}\|_0\leq s_1$, and $\mathsf{J}_r^{(t)}\in \{j:\;\delta_{\star j}=0,\;\mbox{ and }\; \delta_j^{(t)}=0\}$, 
\begin{equation}\label{claim:lem:moment}
\PP\left(d_r^{(t)} = 1 \vert \delta^{(t)}, \mathsf{J}_r^{(t)}=j\right) \leq \frac{1}{2},\end{equation}
to conclude that
\[\PP\left(\|\delta^{(k)}\|_0 > s_1\right) \leq  \left(\frac{p-s_\star}{2p}\right)^{s_1-\|\delta^{(0)}\|_0-s_\star}\leq \exp\left(-(s_1-\|\delta^{(0)}\|_0-s_\star)\log(2)\right)\leq \frac{1}{p},\]
which would end the proof. In order to prove (\ref{claim:lem:moment}), for some absolute constant $m>0$, let 
\[\mathbb{S}\eqdef\left\{\theta\in\rset^p:\;\theta\in\cB_{\delta^{(t)}},\;\;\mbox{ and }\;\; |\theta_j|\leq \sqrt{\frac{m}{\rho_0}} \right\}.\]
As seen in (\ref{proof:asynch:bound:B}), we can choose $m$ such that  $\Pi(\theta\notin \mathbb{S}\vert \delta^{(t)},\D)\leq \frac{1}{4}$. Fix $j$ such that $\delta_{\star j}=0$ and $\delta^{(t)}_j=0$. Recalling  the expression of $\tilde q_j$ in (\ref{cond:dist:eq:2}), it follows then that 
\begin{multline*}
\PP\left(d_r^{(t)} = 1 \vert \mathsf{J}_r^{(t)} =j,\;\delta^{(t)}\right) \leq  \frac{1}{4} + \\
\int_{\mathbb{S}} \exp\left(-\mathsf{a} +\frac{\theta_j^2}{2}(\rho_0-\rho_1) +\frac{\theta_j}{\sigma^2}\pscal{X_j}{y-X\theta_{\vartheta}} + \frac{\theta_j^2}{2\sigma^4}\pscal{X_j}{y-X\theta_{\vartheta}}^2\right) \Pi(\theta\vert \delta^{(t)},\D)\rmd\theta.\end{multline*}
Then we write
\begin{multline*}
\pscal{X_j}{y-X\theta_{\vartheta}} = \pscal{X_j}{y-X\theta_\star} + \pscal{X_j}{X\theta_\star - X\theta_\vartheta} \\
= \pscal{X_j}{y-X\theta_\star} + \sum_{r\neq j} \pscal{X_j}{X_r}(\theta_{\star r} - \theta_r\vartheta_r).
\end{multline*}
The last summation does not include $j$ because $\theta_{\star j}=0$, and $\mathsf{J}_r^{(t)}=j$, which implies that $\vartheta_j=0$.  Since $\theta\in \mathbb{S}$, we see that $|\theta_{\star r} - \theta_r\vartheta_r|\leq C$ for all $r$, for some constant $C$. If $\delta_{\star r}=0$, then $|\theta_r| \leq C\sqrt{\log(p)/n}$. It follows that for $Y\in\e_0$,
 \begin{multline*}
\left| \theta_j\pscal{X_j}{y- X\theta_{\vartheta}}\right| \leq |\theta_j| \left(\sqrt{cn\log(p)} + Cs_\star\sqrt{c_0n\log(p)} + Cs_1\sqrt{\frac{\log(p)}{n}}\sqrt{c_0n\log(p)}\right)\\
\leq C|\theta_j| \left(s_\star\sqrt{n\log(p)}  + s_1\log(p)\right)\leq Cs_\star\sqrt{\log(p)},\end{multline*}
under the sample size condition $n\geq s_1^2\log(p)$.
Hence taking $\mathsf{u}>C(1+s_\star)^2$ large enough, it follows that
\begin{multline*}
\PP\left(d_r^{(t)} = 1 \vert \mathsf{J}_r^{(t)} =j,\;\delta^{(t)}\right) \leq  \frac{1}{4} + \\
\int_{\mathbb{S}} \exp\left(-\mathsf{u}\log(p)-\frac{1}{2}\log\left(\frac{n}{\sigma^2}\right) + C(1+ s_\star)^2\log(p)\right)\Pi(\theta\vert \delta^{(t)},\D)\rmd\theta + \frac{1}{4} \leq \frac{1}{4} + \frac{1}{4} \leq \frac{1}{2}.\end{multline*}
\end{proof}

\begin{lemma}\label{lem:model:grad}
Assume H\ref{H:post:contr}, and let $\e_0$ as in (\ref{bound:e}). Fix $0< s_1\leq p$. Then we can find constants $C,C'$ that depends only on $\sigma^2,\|\theta_\star\|_\infty,c_0$, and $c$ (in the definition of $\e_0$) such that for $n\geq C s_1^2\log(p)$, the following holds. For all $\delta\in\Delta$ such that $\|\delta\|_0\leq s_1$, 
\begin{equation}\label{bound:freq:est}
\|\hat\theta_{\delta} - [\theta_\star]_{\delta}\|_\infty \leq C'\left( 1 + m(\delta)\right)\sqrt{\frac{\log(p)}{n}},
\end{equation}
where $ m(\delta) \eqdef  |\{k:\;\delta_{\star k}=1,\;\delta_k=0\}|$, and $\hat\theta_\delta$ as in (\ref{cond:dist:theta:lm}).
\end{lemma}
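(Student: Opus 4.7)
The plan is to work directly from the closed-form ridge expression
\[\hat\theta_\delta = A^{-1} X_\delta' Y, \qquad A \eqdef X_\delta'X_\delta + \rho_1 \sigma^2 I_{\|\delta\|_0},\]
and to decompose the bias. Writing $S \eqdef \{k:\;\delta_{\star k}=1,\;\delta_k=0\}$, so that $|S|=m(\delta)$, I would split
\[X\theta_\star = X_\delta[\theta_\star]_\delta + \sum_{k\in S}\theta_{\star k}X_k,\]
and use the identity $A^{-1}X_\delta'X_\delta = I - \rho_1\sigma^2 A^{-1}$ to obtain the three-term decomposition
\[\hat\theta_\delta - [\theta_\star]_\delta = -\rho_1\sigma^2 A^{-1}[\theta_\star]_\delta + \sum_{k\in S}\theta_{\star k}\, A^{-1}X_\delta'X_k + A^{-1}X_\delta'(Y-X\theta_\star),\]
representing, respectively, ridge shrinkage, omitted-variable bias, and noise.

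The key step is bounding the $\ell_\infty$-to-$\ell_\infty$ operator norm $\|A^{-1}\|_\infty$ (i.e.\ the max row-sum norm). I would factor $A = (n+\rho_1\sigma^2)(I+G)$, where the diagonal of $G$ vanishes and the off-diagonal entries satisfy $|G_{ij}|\leq \sqrt{c_0 n\log(p)}/(n+\rho_1\sigma^2)$ by H\ref{H:post:contr}-(1) together with $\|X_j\|_2=\sqrt{n}$. Under the hypothesis $n\geq C s_1^2 \log(p)$ with $C$ large enough, the max row sum of $G$ is at most $s_1\sqrt{c_0\log(p)/n}\leq 1/2$, so a Neumann-series argument yields $\|(I+G)^{-1}\|_\infty\leq 2$ and hence $\|A^{-1}\|_\infty\leq 2/n$. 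This is the main technical step and the main source of the $s_1^2\log(p)$ requirement.

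With this bound in hand, each term is controlled componentwise. For the omitted-variable term, each coordinate of $X_\delta'X_k$ (with $k\in S$, so $k\notin\delta$) is at most $\sqrt{c_0n\log(p)}$ by H\ref{H:post:contr}-(1), giving $\|A^{-1}X_\delta'X_k\|_\infty\leq 2\sqrt{c_0\log(p)/n}$; summing with weights $|\theta_{\star k}|\leq\|\theta_\star\|_\infty$ contributes $O\!\bigl(m(\delta)\sqrt{\log(p)/n}\bigr)$. For the noise term, on $\e_0$ every entry of $X_\delta'(Y-X\theta_\star)$ is at most $\sigma\sqrt{cn\log(p)}$, so it contributes $O\!\bigl(\sqrt{\log(p)/n}\bigr)$. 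For the ridge term, $\|\rho_1\sigma^2 A^{-1}[\theta_\star]_\delta\|_\infty\leq (2\rho_1\sigma^2/n)\|\theta_\star\|_\infty = O(1/n)$, which is dominated by $\sqrt{\log(p)/n}$. Combining the three bounds yields the claimed estimate $\|\hat\theta_\delta-[\theta_\star]_\delta\|_\infty\leq C'(1+m(\delta))\sqrt{\log(p)/n}$ for a constant $C'$ depending only on $\sigma^2,\|\theta_\star\|_\infty,c_0,c$ (and $\rho_1$, which is fixed at $1$ in the governing theorem).
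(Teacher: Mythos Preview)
Your argument is correct and in fact more direct than the paper's. The key difference lies in how the $\ell_\infty$ bound is extracted. The paper, following the technique of Lounici (2008), first bounds $\|A(\hat\theta_\delta - [\theta_\star]_\delta)\|_\infty$ (which is essentially your three-term decomposition pushed through $A$), then uses the restricted eigenvalue bound (\ref{rsc}) on $X_\delta'X_\delta$ to obtain an intermediate $\ell_2$ estimate $\|\hat\theta_\delta-[\theta_\star]_\delta\|_2\leq C(1+m(\delta))\sqrt{s_1\log(p)/n}$, and only then recovers the $\ell_\infty$ bound by isolating the diagonal of $A$ and controlling the off-diagonal remainder via $\|\cdot\|_1\leq\sqrt{s_1}\,\|\cdot\|_2$. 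Your Neumann-series bound $\|A^{-1}\|_\infty\leq 2/n$ under the incoherence assumption H\ref{H:post:contr}-(1) bypasses the $\ell_2$ detour entirely and works componentwise from the start; this is cleaner and makes the role of the sample-size condition $n\geq Cs_1^2\log(p)$ more transparent (it is exactly what forces $\|G\|_\infty\leq 1/2$). Both approaches ultimately rest on the same structural information---the $\sqrt{n\log(p)}$ off-diagonal bound on the Gram matrix and the $\e_0$ control of $X_\delta'(Y-X\theta_\star)$---so the constants and the dependence on $\sigma^2,\|\theta_\star\|_\infty,c_0,c$ match. The paper's route has the small side benefit of producing an explicit $\ell_2$ bound along the way, but that is not needed elsewhere in the proof of Theorem~\ref{thm:3}.
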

\begin{proof}
The proof follows (\cite{lounici:08}~Theorem 1). Fix $Y\in\e_0$. The first order optimality condition of $\hat\theta_\delta$ is given by $-\rho_1\hat\theta_\delta + X_\delta'(Y - X_\delta\hat\theta_\delta)/\sigma^2=0$, which can be rewritten as 
\[\left(\rho_1 I_{\|\delta\|_0} + \frac{1}{\sigma^2}X_\delta'X_\delta\right)([\theta_\star]_{\delta} - \hat\theta_\delta) -\rho_1[\theta_\star]_\delta +\frac{1}{\sigma^2}X_\delta'(X\theta_\star - X_\delta[\theta_\star]_{\delta}) + \frac{1}{\sigma^2}X_\delta'(Y-X\theta_\star) = 0.\]
We deduce that
\begin{multline*}
\left\|\left(\rho_1 I_{\|\delta\|_0} + \frac{1}{\sigma^2}X_\delta'X_\delta\right)([\theta_\star]_{\delta} - \hat\theta_\delta)\right\|_\infty \leq \rho_1\|\theta_\star\|_\infty \\
+ \frac{1}{\sigma^2} \max_{k:\;\delta_k=1}\sum_{j:\;\delta_{\star j}=1,\delta_j=0}|\theta_{\star j}| |\pscal{X_j}{X_k}| + \frac{1}{\sigma} \sqrt{cn\log(p)}\\
\leq C(1 + m(\delta))\sqrt{n\log(p)}.
 \end{multline*}
 Using this conclusion and the restricted strong convexity in (\ref{rsc}), for $n\geq C s_1^2\log(p)$, we have
\begin{multline*}
\frac{n}{2} \|\hat\theta_\delta  - [\theta_\star]_{\delta}\|_2^2 \leq (\hat\theta_\delta  - [\theta_\star]_{\delta})' \left(\rho_1 I_{\|\delta\|_0} + \frac{1}{\sigma^2}X_\delta'X_\delta\right)(\hat\theta_\delta  - [\theta_\star]_{\delta}) \\
\leq C(1 + m(\delta))\sqrt{n\log(p)}\|\hat\theta_\delta  - [\theta_\star]_{\delta}\|_1\\
\leq Cs_1^{1/2}(1 + m(\delta))\sqrt{n\log(p)}\|\hat\theta_\delta  - [\theta_\star]_{\delta}\|_2,
\end{multline*}
which implies that
\[\|\hat\theta_\delta  - [\theta_\star]_{\delta}\|_2 \leq C(1 + m(\delta)) \sqrt{\frac{s_1\log(p)}{n}}.\]
On the other hand for $j$ such that $\delta_j=1$, 
\begin{multline*}
\left(\left(\rho_1 I_{\|\delta\|_0} + \frac{1}{\sigma^2}X_\delta'X_\delta\right)([\theta_\star]_{\delta} - \hat\theta_\delta)\right)_j = (\rho_1 +\frac{n}{\sigma^2})\left(\hat\theta_\delta-[\theta_\star]_\delta\right)_j \\
+ \frac{1}{\sigma^2}\sum_{k\neq j:\;\delta_k=1}\pscal{X_k}{X_j}\left(\hat\theta_\delta-[\theta_\star]_\delta\right)_k,
\end{multline*}
which we use to deduce that
\begin{multline*}
\|\hat\theta_\delta-[\theta_\star]_\delta\|_\infty \leq \frac{\sigma^2}{n}  \left\|\left(\rho_1 I_{\|\delta\|_0} + \frac{1}{\sigma^2}X_\delta'X_\delta\right)([\theta_\star]_{\delta} - \hat\theta_\delta)\right\|_\infty + \frac{1}{n}\sqrt{c_0n\log(p)}\|\hat\theta_\delta-[\theta_\star]_\delta\|_1\\
\leq C(1+ m(\delta))\sqrt{\frac{\log(p)}{n}} + s_1^{1/2}\sqrt{\frac{c_0\log(p)}{n}} \|\hat\theta_\delta-[\theta_\star]_\delta\|_2\\
\leq C(1+ m(\delta))\sqrt{\frac{\log(p)}{n}}  + C(1+ m(\delta))\sqrt{\frac{\log(p)}{n}}  \sqrt{\frac{s_1^2\log(p)}{n}}\\
\leq C(1+ m(\delta))\sqrt{\frac{\log(p)}{n}},
\end{multline*}
under the stated sample size condition.
\end{proof}

We show in the next result  that the posterior distribution puts most of its probability mass on models that contain the true model $\delta_\star$.

\begin{lemma}\label{post:contr}
Assume H\ref{H:post:contr}, and let $\e_{s_\star}$ be as in (\ref{bound:e}). Then we can find constants $C_1, C_2$ that depends only on $\sigma^2,\|\theta_\star\|_\infty,c_0$, $c_1$ $c_2$ and $c$ (in the definition of $\e_{s_\star}$) such that for $n\geq C_1 \underline{\theta}_\star^{-2} (1 + s_\star^3)\log(p)$, it holds
\[\PE_\star\left[\textbf{1}_{\e_{s_\star}}(Y) \Pi(\Cset\vert \D)\right]\geq 1-\frac{3}{p},\]
where 
\[\Cset \eqdef \left\{\delta\in\Delta:\;\delta\supseteq \delta_\star,\;\mbox{ and } \; \|\delta\|_0\leq C_2 (1+s_\star)\right\}.\]
\end{lemma}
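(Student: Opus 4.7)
The plan is to marginalize out $\theta$ in (\ref{post:Pi}) to obtain a closed form for $\pi(\delta\vert\D)$, then bound $\Pi(\Cset^c\vert\D)$ on the event $\e_{s_\star}$ by summing $\pi(\delta\vert\D)/\pi(\delta_\star\vert\D)$ over $\delta\in\Cset^c$. Gaussian conjugacy gives, up to a factor independent of $\delta$,
\[
\pi(\delta\vert\D) \;\propto\; p^{-\mathsf{u}\|\delta\|_0}\rho_1^{\|\delta\|_0/2}|M_\delta|^{-1/2}\exp\!\left(-\tfrac{1}{2\sigma^2}\|Y-X_\delta\hat\theta_\delta\|_2^2 - \tfrac{\rho_1}{2}\|\hat\theta_\delta\|_2^2\right),
\]
with $M_\delta=\rho_1 I_{\|\delta\|_0}+\sigma^{-2}X_\delta'X_\delta$. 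Using Sylvester's identity ($|M_\delta|=|L_\delta|$ when $\rho_1=1$) and the push-through identity $\sigma^{-2}X_\delta M_\delta^{-1}X_\delta'=I-L_\delta^{-1}$, where $L_\delta=I_n+\sigma^{-2}X_\delta X_\delta'$ is as in the proof of Theorem~\ref{thm:3}, one rewrites this as $\pi(\delta\vert\D)\propto p^{-\mathsf{u}\|\delta\|_0}|L_\delta|^{-1/2}\exp(-\tfrac{1}{2\sigma^2}Y'L_\delta^{-1}Y)$.

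Next I split $\Cset^c = A_1\cup A_2$ with $A_1=\{\delta\supseteq\delta_\star:\;\|\delta\|_0>C_2(1+s_\star)\}$ (overselected) and $A_2=\{\delta:\;\delta_\star\not\subseteq\delta\}$ (underselected). On $A_1$, the matrix $L_{\delta_\star}^{-1}-L_\delta^{-1}$ is positive semidefinite; the restricted eigenvalue bound~(\ref{rsc}) gives $|L_\delta|/|L_{\delta_\star}|\gtrsim (n/\sigma^2)^{\|\delta\|_0-s_\star}$, and decomposing $Y=X\theta_\star+\varepsilon$ and using the sub-Gaussian tail bound built into $\e_{s_\star}$ controls $Y'(L_{\delta_\star}^{-1}-L_\delta^{-1})Y$ by $C(\|\delta\|_0+1)\log(p)$. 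Hence
\[
\log\frac{\pi(\delta\vert\D)}{\pi(\delta_\star\vert\D)} \;\leq\; -(\|\delta\|_0-s_\star)\mathsf{u}\log p \,+\, C(\|\delta\|_0+1)\log(np),
\]
and summing over the ${p-s_\star\choose s-s_\star}\leq p^{s-s_\star}$ models at each size $s>C_2(1+s_\star)$ gives a geometric series bounded by $1/p$ once $\mathsf{u}$ and $C_2$ are large enough. For $A_2$, I would use the comparison trick of enlarging: writing $r=|\{j:\delta_{\star j}=1,\delta_j=0\}|\geq 1$, the $A_1$-style bound applied to $\delta\cup\delta_\star$ handles the prior, Occam and noise factors; the additional likelihood drop from removing $r$ true components is controlled via Lemma~\ref{lem:model:grad}, the restricted eigenvalue~(\ref{rsc}), and the $\beta$-min lower bound $|\theta_{\star j}|\geq\underline{\theta}_\star$, yielding an extra factor $\exp(-cnr\underline{\theta}_\star^2/\sigma^2)$. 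Under the hypothesis $n\underline{\theta}_\star^2\geq C_1(1+s_\star^3)\log(p)$ this dominates the $p^{\mathsf{u}r}$ prior gain and the $p^q$ entropy of the $q$ possible false positives, giving $\Pi(A_2\vert\D)\leq 1/p$.

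The main obstacle is tracking constants so that all bounds hold uniformly in $\delta$ over the relevant range of sizes (up to order $s_\star+\log p$) while matching the stated sample size condition. In particular, the restricted eigenvalue~(\ref{rsc}) and the definition of $\e_{s_\star}$ must apply simultaneously to $\delta_\star$, $\delta$, and $\delta\cup\delta_\star$, which is precisely what the requirement $n\gtrsim (1+s_\star^3)\log(p)$ secures, and the $\underline{\theta}_\star^{-2}$ factor in the sample size is exactly what allows the exponential likelihood drop in $A_2$ to dominate the combinatorial and prior-gain factors.
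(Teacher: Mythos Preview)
Your derivation of the marginal $\pi(\delta\vert\D)\propto p^{-\mathsf{u}\|\delta\|_0}|L_\delta|^{-1/2}\exp(-\tfrac{1}{2\sigma^2}Y'L_\delta^{-1}Y)$ is correct and is exactly what the paper uses (equation (\ref{eq:ratio:Pi:1})). The ratio strategy and the Woodbury/determinant manipulations are also the same as in the paper. However, your decomposition $A_1\cup A_2$ has a real gap.

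Your bound for $A_1$ explicitly invokes the restricted eigenvalue (\ref{rsc}) to get $|L_\delta|/|L_{\delta_\star}|\gtrsim (n/\sigma^2)^{\|\delta\|_0-s_\star}$ and, implicitly, to control $Y'(L_{\delta_\star}^{-1}-L_\delta^{-1})Y$ by $C(\|\delta\|_0+1)\log(p)$. But (\ref{rsc}) requires $n\geq 4c_0\|\delta\|_0^2\log(p)$, so under the stated sample size $n\geq C_1(1+s_\star^3)\log(p)$ it only holds for $\|\delta\|_0\lesssim s_\star^{3/2}$. Your set $A_1$ contains supersets of $\delta_\star$ of \emph{all} sizes up to $p$; for those the determinant lower bound fails (indeed when $\|\delta\|_0>n$ the matrix $X_{\delta-\delta_\star}'L_{\delta_\star}^{-1}X_{\delta-\delta_\star}$ is rank-deficient), and without the eigenvalue lower bound the Woodbury computation only yields $Y'(L_{\delta_\star}^{-1}-L_\delta^{-1})Y\leq \sigma^{-2}\|X_{\delta-\delta_\star}'L_{\delta_\star}^{-1}Y\|_2^2$, which on $\e_{s_\star}$ is of order $(\|\delta\|_0-s_\star)s_\star^2\,n\log(p)$ --- the extra factor $n$ makes the prior $p^{-\mathsf{u}(\|\delta\|_0-s_\star)}$ powerless. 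The same issue infects $A_2$, which also contains models of unbounded size.

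The paper avoids this by a different split: it first invokes Lemma~\ref{lem:post:spar} (a separate posterior-sparsity result imported from \cite{AB:19}) to bound $\Pi(\|\delta\|_0>C_2(1+s_\star)\vert\D)\leq 2/p$ for \emph{all} models, large or not, without any ratio computation. Only afterwards does it run the ratio argument on the remaining set $\A=\{\delta:\delta\nsupseteq\delta_\star,\ \|\delta\|_0\leq C_2(1+s_\star)\}$, where (\ref{rsc}) and $\e_{s_\star}$ legitimately apply. Within $\A$ the paper routes each $\delta$ through $\delta^{(0)}=\delta\cap\delta_\star$ (writing $\pi(\delta)/\pi(\delta_\star)=[\pi(\delta)/\pi(\delta^{(0)})]\cdot[\pi(\delta^{(0)})/\pi(\delta_\star)]$), which is the mirror image of your ``enlarge to $\delta\cup\delta_\star$'' idea. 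So your treatment of the underselection part is essentially right once restricted to sparse $\delta$; what is missing is an argument that disposes of the non-sparse models, and that argument is not the direct ratio bound you propose but a separate sparsity lemma.
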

\begin{proof}
By Lemma \ref{lem:post:spar}, there exist positive constant $C_1,C_2$ that depends only on $c_0,c_2$, and $c$  such that for $n\geq C_1 s_\star^2\log(p)$,
\[\PE_\star\left[\textbf{1}_{\e_0}(Y)\Pi\left(\|\delta\|_0> C_2(1 + s_\star)\; \vert\; \D\right)\right]\leq \frac{2}{p}.\]
We set 
\[s \eqdef C_2(1 + s_\star),\]
and $\A\eqdef \{\delta:\;\delta\nsupseteq\delta_\star,\;\|\delta\|_0\leq s\}$, so that 
\[\Delta = \Cset_{s} \cup\A \cup\left\{\delta\in\Delta:\;\|\delta\|_0>s\right\}.\]  
Therefore, 
\[\PE_\star\left[\textbf{1}_{\e_{s_\star}}(Y)\Pi(\Cset_{s}\vert \D)\right] \geq 1 - \frac{2}{p} - \PE_\star\left[\textbf{1}_{\e_{s_\star}}(Y)\Pi\left(\A\vert \D\right)\right].\]
To finish the proof we will establish that for $Y \in\e_{s_\star}$, $\Pi\left(\A\vert \D\right)\leq \frac{1}{p}$. To that end,  let $\mathcal{P}\eqdef\{\delta^{(0)}\in\Delta:\;\delta^{(0)}\subseteq\delta_\star,\;\delta^{(0)}\neq \delta_\star\}$, and for each $\delta^{(0)}\in\mathcal{P}$, we set 
\[\A(\delta^{(0)})\eqdef \left\{\delta\in\Delta:\; \delta\supseteq \delta^{(0)},\;\|\delta\|_0\leq s_1,\mbox{ and } \delta_{\star j}=0 \mbox{ whenever } (\delta_j=1,\mbox{ and } \delta_j^{(0)}=0)\right\}.\] 
We then write
\begin{equation}\label{eq:PiA}
\Pi(\A\vert \D) = \Pi(\delta_\star\vert \D)\sum_{\delta^{(0)}\in\mathcal{P}} \frac{\Pi(\delta^{(0)}\vert \D)}{\Pi(\delta_\star\vert \D)} \sum_{\delta\in  \A(\delta^{(0)})} \frac{\Pi(\delta\vert \D)}{\Pi(\delta^{(0)}\vert \D)}.\end{equation}
For any subset $\delta,\vartheta\in\Delta$, we have
\[
\frac{\Pi(\delta\vert\D)}{\Pi(\vartheta\vert \D)}  =  \left(\frac{1}{p^{\textsf{u}}}\sqrt{\frac{\rho_1}{2\pi}}\right)^{\|\delta\|_0-\|\vartheta\|_0}\frac{\int_{\rset^{\|\delta\|_0}}\exp\left(-\frac{\rho_1}{2}\|u\|_2^2 -\frac{1}{2\sigma^2}\|Y-X_\delta u\|_2^2\right)\rmd u}{\int_{\rset^{\|\vartheta\|_0}}\exp\left(-\frac{\rho_1}{2}\|u\|_2^2 -\frac{1}{2\sigma^2}\|Y-X_\vartheta u\|_2^2\right)\rmd u}.\]
We calculate that for any $\delta\in\Delta$,
\[\int_{\rset^{\|\delta\|_0}}\exp\left(-\frac{\rho_1}{2}\|u\|_2^2 -\frac{1}{2\sigma^2}\|Y-X_\delta u\|_2^2\right)\rmd u = \left(\frac{2\pi}{\rho_1}\right)^{\|\delta\|_0/2}\frac{e^{-\frac{1}{2\sigma^2}Y'\left(I_{n} + \frac{1}{\rho_1\sigma^2}X_\delta X_\delta'\right)^{-1}Y}}{\sqrt{\det\left(I_{n} + \frac{1}{\rho_1\sigma^2}X_\delta X_\delta'\right)}}.\]
And we deduce, using $\rho_1=1$, that
\begin{equation}\label{eq:ratio:Pi:1}
\frac{\Pi(\delta\vert\D)}{\Pi(\vartheta\vert \D)}  =  \left(\frac{1}{p^{\textsf{u}}}\right)^{\|\delta\|_0-\|\vartheta\|_0}\frac{}{}e^{\frac{1}{2\sigma^2}\left(Y'L_{\vartheta}^{-1}Y - Y'L_{\delta}^{-1}Y \right)}\sqrt{\frac{\det\left(L_\vartheta\right)}{\det\left(L_\delta\right)}},\end{equation}
where
\[L_\delta\eqdef I_n + \frac{1}{\sigma^2} X_\delta X_\delta'.\]
Suppose that $\vartheta\supseteq\delta$, and $\|\vartheta\|_0\leq s_1$. In that case
\[L_\vartheta = L_\delta + \frac{1}{\sigma^2}X_{\vartheta-\delta}X_{\vartheta-\delta}'.\]
and by the determinant lemma ($\det(A + UV') = \det(A)\det(I_m + V'A^{-1}U)$ valid for any invertible matrix $A\in\rset^{n\times n}$, and $U,V\in\rset^{n\times m}$), and using the  lower bound on  the smallest eigenvalue of $L_\delta$ resulting from (\ref{rsc}), we have
\[1 \leq \frac{\det\left(L_\vartheta\right)}{\det\left(L_\delta\right)} = \det\left(I_{\vartheta-\delta} +\frac{1}{\sigma^2} X_{\vartheta-\delta}' L_\delta^{-1} X_{\vartheta-\delta}\right) \leq\left(1 + 2\|\vartheta-\delta\|_0\right)^{\|\vartheta-\delta\|_0}.\]
We use this to deduce from (\ref{eq:ratio:Pi:1}) that when $\vartheta\supseteq\delta$, and $\|\vartheta\|_0\leq s$, it holds
\begin{multline}\label{eq:ratio:Pi:2}
p^{\textsf{u}\|\vartheta-\delta\|_0} e^{\frac{1}{2\sigma^2}\left(Y'L_{\vartheta}^{-1}Y - Y'L_{\delta}^{-1}Y \right)} \leq \frac{\Pi(\delta\vert\D)}{\Pi(\vartheta\vert \D)} \\
\leq p^{\textsf{u}\|\vartheta-\delta\|_0} \left(1+ 2s\right)^{\|\vartheta-\delta\|_0} e^{\frac{1}{2\sigma^2}\left(Y'L_{\vartheta}^{-1}Y - Y'L_{\delta}^{-1}Y \right)}.\end{multline}
By the Woodbury formula which states that any set of matrices $U,V, A,C$ with matching dimensions, $(A + UCV)^{-1} = A^{-1} - A^{-1}U(C^{-1} + VA^{-1}U)^{-1}VA^{-1}$, we have
\[Y'L_\vartheta^{-1}Y - Y'L_\delta^{-1}Y =-\frac{1}{\sigma^2}y'L_\delta^{-1}X_{\vartheta-\delta}\left(I_{\|\vartheta-\delta\|_0} + \frac{1}{\sigma^2}X_{\vartheta-\delta}'L_\delta^{-1}X_{\vartheta-\delta}\right)^{-1}X_{\vartheta-\delta}L_\delta^{-1}y.\]
It follows from Equation (\ref{control:L}) of Lemma \ref{lem:Ls} that for any non-zero vector $u\in\rset^{\|\vartheta-\delta\|_0}$,
\[u'X_{\vartheta-\delta}'L_\delta^{-1}X_{\vartheta-\delta} u \geq \frac{n}{2}\| u\|_2^2 - C_0\sqrt{n\log(p)}\|u\|_1^2\geq \frac{n}{4}\|u\|_2^2,\]
for some absolute constant $C_0$, provided that $n\geq 4C_0^2s^2\log(p)$. We deduce that for $\delta\subseteq\vartheta$, $\|\vartheta\|_0\leq s$, it holds
\begin{equation}\label{eq:diff:L}
\frac{\|Y'L_\delta^{-1}X_{\vartheta-\delta}\|_2^2}{\sigma^2(1 + \|\vartheta-\delta\|_0n)} \leq Y'L_\delta^{-1}Y - Y'L_\vartheta^{-1}Y \leq \frac{4\|Y'L_\delta^{-1}X_{\vartheta-\delta}\|_2^2}{\sigma^2 n}.
\end{equation}
We put  (\ref{eq:diff:L}) and (\ref{eq:ratio:Pi:2}) to write the second summation of  (\ref{eq:PiA}) as
\begin{multline*}
\sum_{\delta\in  \A(\delta^{(0)})} \frac{\Pi(\delta\vert \D)}{\Pi(\delta^{(0)}\vert \D)} = \sum_{k=0}^{s-\|\delta^{(0)}\|_0} \;\;\sum_{\delta\in  \A(\delta^{(0)}):\;\|\delta\|_0=\|\delta^{(0)}\|_0+k}\;\;\frac{\Pi(\delta\vert \D)}{\Pi(\delta^{(0)}\vert \D)} \\
\leq \sum_{k=0}^{s-\|\delta^{(0)}\|_0} \;\;\sum_{\delta\in  \A(\delta^{(0)}):\;\|\delta\|_0=\|\delta^{(0)}\|_0+k}\;\;\left(\frac{1}{p^{\mathsf{u}}}\right)^k \exp\left(\frac{2\left\|Y'L_{\delta^{(0)}}^{-1}X_{\delta-\delta^{(0)}}\right\|_2^2}{\sigma^4 n}\right).
\end{multline*}
We can write $Y = \sigma V + \sum_{k:\;\delta_{\star k}=1}\theta_{\star k}X_k$, where $V = (Y-X\theta_\star)/\sigma$. Fix a component $i$ such that $(\delta-\delta^{(0)})_i=1$. Note that  we have $\delta^{(0)}_i=0$, and $\delta_{\star i}=0$. We can then write
\[Y'L_{\delta^{(0)}}^{-1}X_i = \sigma V'L_{\delta^{(0)}}^{-1}X_i + \sum_{k:\;\delta_{\star k}=1}\theta_{\star k}X_k'L_{\delta^{(0)}}^{-1}X_i.\]
Therefore, by (\ref{control:L}) from Lemma \ref{lem:Ls}, we have
\[|Y'L_{\delta^{(0)}}^{-1}X_i| \leq C_1\sqrt{(1+s_\star)n\log(p)} + C_1(s_\star - \|\delta^{(0)}\|_0)\sqrt{n\log(p)}.\]
It follows that
\[\frac{\left\|Y'L_{\delta^{(0)}}^{-1}X_{\delta-\delta^{(0)}}\right\|_2^2}{\sigma^4 n}   \leq C_1k(1+s_\star^2)\log(p),\]
for some constant $C_1$. Therefore,
\[\sum_{\delta\in  \A(\delta^{(0)})} \frac{\Pi(\delta\vert \D)}{\Pi(\delta^{(0)}\vert \D)}\leq p^{C_1s(1+s_\star^2)}\;\;\sum_{k=0}^{s-\|\delta^{(0)}\|_0} \;{p-\|\delta^{(0)}\|_0\choose k} \left(\frac{1}{p^{\mathsf{u}}}\right)^k \leq 2 p^{C_1s(1+s_\star^2)},\]
by choosing $\mathsf{u}>2$, assuming $p\geq 2$. The last display, with  (\ref{eq:PiA}) and (\ref{eq:ratio:Pi:2}) yield
\begin{multline*}
\Pi(\A\vert \D) \leq 2 p^{C_1s(1+s_\star^2)} \sum_{\delta^{(0)}\in\mathcal{P}} \frac{\Pi(\delta^{(0)}\vert \D)}{\Pi(\delta_\star\vert \D)} \\
\leq 2 p^{C_1s(1+s_\star^2)}   \sum_{k=0}^{s_\star-1}\;\sum_{\delta^{(0)}\in\mathcal{P}:\;\|\delta^{(0)}\|_0 = s_\star -k}\;\; p^{k\mathsf{u}} (1+2s)^k\exp\left(-\frac{\left\|Y'L_{\delta^{(0)}}^{-1}X_{\delta_\star-\delta^{(0)}}\right\|_2^2}{2\sigma^4 (1 + kn)}\right).
\end{multline*}
As above, given $i$ such that $\delta_{\star i}=1$,  we write
\[Y'L_{\delta^{(0)}}^{-1}X_i = \sigma V'L_{\delta^{(0)}}^{-1}X_i +\theta_{\star i}X_i'L_{\delta^{(0)}}^{-1}X_i + \sum_{k\neq i:\;\delta_{\star k}=1}\theta_{\star k}X_k'L_{\delta^{(0)}}^{-1}X_i,\]
and using Lemma \ref{lem:Ls}, we deduce that 
\[|Y'L_{\delta^{(0)}}^{-1}X_i| \geq \frac{n|\theta_{\star i}|}{2}   - C_1\sqrt{(1+s_\star)n\log(p)} \geq \frac{n|\underline{\theta}_\star|}{4},\]
under the sample size condition $n\geq C_2 (1+s_\star)\log(p)/\underline{\theta}_\star^2$. Hence
\[\frac{\left\|Y'L_{\delta^{(0)}}^{-1}X_{\delta_\star-\delta^{(0)}}\right\|_2^2}{2\sigma^4 (1 + kn)}\geq \frac{n\underline{\theta}_\star^2}{64\sigma^4},\]
so that
\begin{multline*}
\Pi(\A\vert \D) \leq 2 p^{C_1s(1+s_\star^2)}  e^{-\frac{n\underline{\theta}_\star^2}{64\sigma^4}} \sum_{k=0}^{s_\star-1} {s_\star\choose k} (1+2s)^k\\
\leq 2 p^{C_1s(1+s_\star^2)}  e^{-\frac{n\underline{\theta}_\star^2}{64\sigma^4}}  \sum_{k=0}^{s_\star-1}  \left(C_1(1+s_\star)\right)^k\\
\leq 2 p^{C_1s(1+s_\star^2)}  e^{-\frac{n\underline{\theta}_\star^2}{64\sigma^4}}  \left(C_1(1+s_\star)\right)^{s_\star}\\
\leq2 \exp\left(-\frac{n\underline{\theta}_\star^2}{64\sigma^4} + C_1(1+s_\star^3)\log(p)\right)\\
 \leq \exp\left(-\frac{n\underline{\theta}_\star^2}{128\sigma^4}\right)\leq \frac{1}{p},
\end{multline*}
for $n\geq C_2\underline{\theta}_\star^{-2}(1 + s_\star^3)\log(p)$, for some constant $C_2$. This ends the proof.

\end{proof}

\begin{lemma}\label{lem:Ls}
Assume H\ref{H:post:contr}, and fix $s>0$.  There exist constants $C_1,C_2$ that depends only on $\sigma,\underline{\kappa},c_0$ and $\|\theta_\star\|_\infty$ such that for $n \geq C_1 s^2\log(p)$, the following holds. For all $\delta\in\Delta_{s}$, and for all pair $j \neq k$, such that $\delta_j=0$,  it holds
\begin{equation}\label{control:L}
|X_j'L_\delta^{-1}X_k| \leq  C_2\left(1 + \frac{\|\delta\|_0^{1/2}}{n}\textbf{1}_{\{\delta_k=1\}}\right)\sqrt{n\log(p)} ,\;\;\;\mbox{ and }\;\;  X_j'L_\delta^{-1}X_j \geq \frac{n}{2}.
\end{equation}
\end{lemma}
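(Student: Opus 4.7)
The plan is to reduce everything to the Sherman-Morrison-Woodbury identity
\[
L_\delta^{-1} = I_n - \frac{1}{\sigma^2} X_\delta \Bigl(I_{\|\delta\|_0} + \frac{1}{\sigma^2} X_\delta' X_\delta\Bigr)^{-1} X_\delta' = I_n - X_\delta \bigl(\sigma^2 I_{\|\delta\|_0} + X_\delta' X_\delta\bigr)^{-1} X_\delta',
\]
combined with the restricted-eigenvalue estimate \eqref{rsc}, which guarantees $\lambda_{\min}(\sigma^2 I + X_\delta' X_\delta) \geq n/2$ once $n \geq 4 c_0 s^2 \log p$, hence the operator-norm bound $\|(\sigma^2 I + X_\delta' X_\delta)^{-1}\|_2 \leq 2/n$. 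I will also need the elementary consequence of H\ref{H:post:contr}-(1) and $\delta_j = 0$ that $\|X_\delta' X_j\|_2^2 \leq \|\delta\|_0 \cdot c_0 n \log(p)$, since every entry of $X_\delta' X_j$ is an off-diagonal inner product.

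For the lower bound on $X_j' L_\delta^{-1} X_j$, I apply the Woodbury identity directly:
\[
X_j' L_\delta^{-1} X_j \;=\; \|X_j\|_2^2 \;-\; X_j' X_\delta \bigl(\sigma^2 I + X_\delta' X_\delta\bigr)^{-1} X_\delta' X_j \;\geq\; n \;-\; \frac{2}{n} \|X_\delta' X_j\|_2^2 \;\geq\; n - 2 c_0 s \log(p),
\]
which is at least $n/2$ as soon as $C_1$ is taken large enough in $n \geq C_1 s^2 \log(p)$. For the off-diagonal bound, I split into two cases. When $\delta_k = 0$, both $\|X_\delta' X_j\|_2$ and $\|X_\delta' X_k\|_2$ are controlled by $\|\delta\|_0^{1/2}\sqrt{c_0 n \log p}$, so
\[
|X_j' L_\delta^{-1} X_k| \leq |\pscal{X_j}{X_k}| + \frac{2}{n}\|X_\delta' X_j\|_2 \|X_\delta' X_k\|_2 \leq \sqrt{c_0 n \log(p)} + 2 c_0 \|\delta\|_0 \log(p),
\]
and the second summand is $\lesssim \sqrt{n \log p}$ under the sample-size requirement.

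The case $\delta_k = 1$ is the subtle one: a naive bound via $\|X_\delta' X_k\|_2 \lesssim n$ loses a factor of $n$ and fails the advertised estimate. The key algebraic step is that if $e_k \in \rset^{\|\delta\|_0}$ denotes the indicator of the position of $k$ in $\delta$, then $X_\delta' X_k = X_\delta' X_\delta e_k$, so
\[
(\sigma^2 I + X_\delta' X_\delta)^{-1} X_\delta' X_k = e_k - \sigma^2 (\sigma^2 I + X_\delta' X_\delta)^{-1} e_k,
\]
which telescopes with $X_j' X_k = X_j' X_\delta e_k$ to produce the clean identity $X_j' L_\delta^{-1} X_k = \sigma^2 X_j' X_\delta (\sigma^2 I + X_\delta' X_\delta)^{-1} e_k$. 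Bounding this by Cauchy-Schwarz gives $|X_j' L_\delta^{-1} X_k| \leq 2\sigma^2 \|\delta\|_0^{1/2} \sqrt{c_0 n \log p}/n = 2\sigma^2 c_0^{1/2} (\|\delta\|_0^{1/2}/n) \sqrt{n\log p}$, which is exactly the advertised $\|\delta\|_0^{1/2}/n$ correction. I expect this telescoping trick to be the only non-routine step; once it is in place, collecting constants and imposing $n \geq C_1 s^2 \log(p)$ yields both inequalities simultaneously with $C_2$ depending only on $\sigma^2, c_0$ and $\|\theta_\star\|_\infty$ (the latter entering only through constants carried from $\underline\kappa$ elsewhere, and not through this calculation).
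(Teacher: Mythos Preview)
Your proof is correct and follows essentially the same approach as the paper: Woodbury identity, restricted-eigenvalue bound \eqref{rsc}, and the off-diagonal column bound $\|X_\delta' X_j\|_2 \leq \sqrt{c_0 \|\delta\|_0 n\log(p)}$ for $\delta_j=0$. Your ``telescoping trick'' in the $\delta_k=1$ case is exactly the paper's observation that $X_\delta' L_\delta^{-1} = (I_{\|\delta\|_0} + \sigma^{-2} X_\delta' X_\delta)^{-1} X_\delta'$, written column-wise via $X_k = X_\delta e_k$ rather than as a matrix identity; both yield the same expression $X_j' L_\delta^{-1} X_k = e_k'(I + \sigma^{-2}X_\delta'X_\delta)^{-1} X_\delta' X_j$ and the same bound.
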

\begin{proof}
Applying the Woodbury identity to $L_\delta$, we have
\begin{equation}\label{eq1:lem:Ls}
L_\delta^{-1} = I_n - \frac{1}{\sigma^2}X_\delta\left(I_{\|\delta\|_0} + \frac{1}{\sigma^2}X_\delta'X_\delta\right)^{-1} X_\delta'.\end{equation}
It follows that 
\[X_j'L_\delta^{-1}X_k = \pscal{X_j}{X_k}- \frac{1}{\sigma^2}X_j'X_\delta\left(I_{\|\delta\|_0} + \frac{1}{\sigma^2}X_\delta'X_\delta\right)^{-1}X_\delta' X_k.\]
Under the sample size condition, by (\ref{rsc}), we have
\[\left|\frac{1}{\sigma^2}X_j'X_\delta\left(I_{\|\delta\|_0} + \frac{1}{\sigma^2}X_\delta'X_\delta\right)^{-1}X_\delta' X_k\right| \leq \frac{2\|X_\delta'X_j\|_2\|X_\delta'X_k\|_2}{\sigma^2n}.\]
By assumption $\delta_j=0$. If $\delta_k=0$, then 
\[\|X_\delta'X_j\|_2\|X_\delta'X_k\|_2\leq \sqrt{c_0\|\delta\|_0n\log(p)} \sqrt{c_0\|\delta\|_0 n\log(p)},\]
and we deduce that
\[|X_j'L_\delta^{-1}X_k| \leq \sqrt{c_0n\log(p)} + \frac{2c_0\|\delta\|_0\log(p)}{\sigma^2}\leq C\sqrt{n\log(p)},\]
provided that $n\geq s^2\log(p)$. Suppose now that $\delta_k=1$. Note that starting from (\ref{eq1:lem:Ls}) we can also write 
\[X_\delta'L_\delta^{-1}X_j = \left(I_{\|\delta\|_0} + \frac{1}{\sigma^2}X_\delta'X_\delta\right)^{-1}X_\delta'X_j.\]
This implies that if $\delta_k=1$, then
\[|X_j'L_\delta^{-1}X_k| \leq \frac{2\|X_\delta'X_j\|_2}{n} \leq \frac{2\sqrt{c_0\|\delta\|_0n\log(p)}}{n}\leq C_2\sqrt{\frac{\|\delta\|_0\log(p)}{n}},\]
which establishes the first part of (\ref{control:L}). When $j=k$, we get
\[X_j'L_\delta^{-1}X_j \geq n - \frac{2c_0\|\delta\|_0\log(p)}{\sigma^2} \geq \frac{n}{2},\]
under the sample size condition $n\geq 2c_0s\log(p)/\sigma^2$.

\end{proof}

\begin{lemma}\label{lem:post:spar}
Assume H\ref{H:post:contr},  and let $\e_0$ as in (\ref{bound:e}).
There exist positive constant $C_1,C_2$ that depends only on $c_0,c_2$, and $c$ (in the definition of $\e_0$)  such that for $n\geq C_1 s_\star^2\log(p)$,
\[\PE_\star\left[\textbf{1}_{\e_0}(Y)\Pi\left(\|\delta\|_0>(1 + C_2)s_\star\; \vert\; \D\right)\right]\leq \frac{2}{p}.\]
\end{lemma}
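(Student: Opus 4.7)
The strategy is to bound $\Pi(\|\delta\|_0>s\vert\D)$ on the event $\e_0$ by a direct ratio comparison to $\Pi(\delta_\star\vert\D)$, mirroring the technique used in the proof of Lemma~\ref{post:contr}. Starting from
\[\Pi(\|\delta\|_0>s\vert\D) \leq \sum_{\delta:\;\|\delta\|_0>s}\frac{\Pi(\delta\vert\D)}{\Pi(\delta_\star\vert\D)},\]
I would expand each ratio via (\ref{eq:ratio:Pi:1}) into a prior factor $p^{-\mathsf{u}(\|\delta\|_0-s_\star)}$, a determinant factor $\sqrt{\det L_{\delta_\star}/\det L_\delta}$, and a Gaussian factor $\exp\{(Y'L_{\delta_\star}^{-1}Y-Y'L_\delta^{-1}Y)/(2\sigma^2)\}$.

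The determinant factor is handled by the restricted eigenvalue bound (\ref{rsc}) combined with a diagonal-dominance estimate $\|X_\delta'X_\delta\|_2\leq n+\|\delta\|_0\sqrt{c_0n\log p}\leq 2n$: under $n\geq Cs^2\log p$, $\det L_\delta\geq(n/(2\sigma^2))^{\|\delta\|_0}$ while $\det L_{\delta_\star}\leq(3n/\sigma^2)^{s_\star}$, yielding $\sqrt{\det L_{\delta_\star}/\det L_\delta}\leq C^{s_\star}(\sigma^2/n)^{(\|\delta\|_0-s_\star)/2}$ for $\|\delta\|_0\geq s_\star$. Each extra active variable thus brings a decay of order $\sqrt{\sigma^2/n}$.

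Controlling the Gaussian factor is the main obstacle. For general $\delta$, introduce $\bar\delta=\delta\cup\delta_\star$; since $\delta\subseteq\bar\delta$, monotonicity gives $Y'L_\delta^{-1}Y\geq Y'L_{\bar\delta}^{-1}Y$, and applying (\ref{eq:diff:L}) to the pair $\delta_\star\subseteq\bar\delta$ gives
\[Y'L_{\delta_\star}^{-1}Y-Y'L_\delta^{-1}Y\leq \frac{4\|X_{\delta_2}'L_{\delta_\star}^{-1}Y\|_2^2}{\sigma^2 n},\qquad \delta_2\eqdef\delta\setminus\delta_\star.\]
The hard step is to establish, using \emph{only} the event $\e_0$ (which controls $V'X_j$ but not $V'L_{\delta_\star}^{-1}X_j$ directly), that $|X_i'L_{\delta_\star}^{-1}Y|\leq C\sqrt{n\log p}$ uniformly for $i\notin\delta_\star$. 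Writing $Y=\sigma V+X_{\delta_\star}[\theta_\star]_{\delta_\star}$ with $V=(Y-X\theta_\star)/\sigma$, setting $M=X_{\delta_\star}'X_{\delta_\star}$, and using the algebraic identity $X_{\delta_\star}'L_{\delta_\star}^{-1}=(I+M/\sigma^2)^{-1}X_{\delta_\star}'$, one obtains
\[X_i'L_{\delta_\star}^{-1}Y = \sigma V'X_i - \frac{1}{\sigma}V'X_{\delta_\star}(I+M/\sigma^2)^{-1}X_{\delta_\star}'X_i + X_i'X_{\delta_\star}(I+M/\sigma^2)^{-1}[\theta_\star]_{\delta_\star}.\]
Invoking $\e_0$ for $|V'X_j|\leq\sqrt{cn\log p}$, H\ref{H:post:contr}-(1) for $\|X_{\delta_\star}'X_i\|_2\leq\sqrt{s_\star c_0 n\log p}$, and (\ref{rsc}) for $\|(I+M/\sigma^2)^{-1}\|_2\leq 2\sigma^2/n$, all three terms are dominated by $C\sqrt{n\log p}$ for $n\geq Cs_\star^2\log p$; the signal term in particular contributes only $O(1)$ thanks to the $(I+M/\sigma^2)^{-1}$ cancellation. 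Summing the squares over $i\in\delta_2$ gives $Y'L_{\delta_\star}^{-1}Y-Y'L_\delta^{-1}Y\leq C|\delta_2|\log p$ on $\e_0$, hence the Gaussian factor is at most $p^{C|\delta_2|}$ for a constant $C$ independent of $s_\star$.

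Combining and stratifying by $k_1=|\delta\cap\delta_\star|$ and $k_2=|\delta_2|$, the counting $\binom{s_\star}{k_1}\binom{p-s_\star}{k_2}\leq s_\star^{k_1}p^{k_2}$ yields
\[\sum_{\|\delta\|_0=k}\frac{\Pi(\delta\vert\D)}{\Pi(\delta_\star\vert\D)}\leq C^{s_\star}(s_\star+1)\cdot\bigl(p^{1+C-\mathsf{u}}\sqrt{\sigma^2/n}\bigr)^{k-s_\star}\]
for $k\geq s_\star$. This is a geometric series whose ratio is bounded by $1/2$ (say) for $n$ large enough (as in the hypothesis of the lemma), and whose tail starting from $k>(1+C_2)s_\star$ is at most $1/p$ once $C_2$ is chosen sufficiently large (depending on $\mathsf{u}$ and the constants of H\ref{H:post:contr}). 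Combining with $\PP_\star(Y\notin\e_0)\leq 1/p$ from (\ref{bound:e}) then yields the stated bound $2/p$. The crux is the bound $|X_i'L_{\delta_\star}^{-1}Y|\leq C\sqrt{n\log p}$ using only $\e_0$; everything else is bookkeeping.
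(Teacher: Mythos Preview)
Your approach is different from the paper's: the paper simply invokes Theorem~2.2 of \cite{AB:19} as a black box, whereas you attempt a self-contained ratio comparison to $\Pi(\delta_\star\vert\D)$ mirroring the proof of Lemma~\ref{post:contr}. That is a reasonable idea in spirit, but as written it has a genuine gap.

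The problem is the range of $\|\delta\|_0$ over which your bounds are valid. Both ingredients you use --- the determinant estimate $\det L_\delta\geq (n/(2\sigma^2))^{\|\delta\|_0}$ coming from (\ref{rsc}), and the upper bound in (\ref{eq:diff:L}) applied to the pair $\delta_\star\subseteq\bar\delta$ --- require a restricted--eigenvalue condition on the relevant Gram matrix, which in turn requires $\|\bar\delta\|_0\leq s$ with $n\geq C s^2\log p$. Under the lemma's sole hypothesis $n\geq C_1 s_\star^2\log p$ this only covers models of size $O(s_\star)$, whereas the posterior sum you are trying to control runs over \emph{all} $\delta$ with $\|\delta\|_0>(1+C_2)s_\star$, up to size $p$. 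For $\|\delta\|_0$ large (and certainly for $\|\delta\|_0>n$) the bound $\det L_\delta\geq (n/(2\sigma^2))^{\|\delta\|_0}$ is simply false, since $L_\delta$ is an $n\times n$ matrix; and without the eigenvalue lower bound on $X_{\delta_2}'L_{\delta_\star}^{-1}X_{\delta_2}$ the Woodbury identity only gives the trivial inequality $(I+M/\sigma^2)^{-1}\preceq I$, so that
\[
Y'L_{\delta_\star}^{-1}Y-Y'L_{\bar\delta}^{-1}Y\ \leq\ \frac{1}{\sigma^2}\|X_{\delta_2}'L_{\delta_\star}^{-1}Y\|_2^2\ \leq\ \frac{C|\delta_2|\,n\log p}{\sigma^2},
\]
producing a Gaussian factor of order $p^{C|\delta_2|\,n}$ rather than the $p^{C|\delta_2|}$ you claim. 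This extra factor of $n$ in the exponent destroys the geometric summation.

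In other words, your argument controls the posterior mass on models with $(1+C_2)s_\star<\|\delta\|_0\lesssim \sqrt{n/\log p}$, but says nothing about larger models; and handling arbitrarily large $\|\delta\|_0$ without an a~priori size restriction is precisely the content of the lemma. The standard route (as in \cite{AB:19}) avoids anchoring at $\delta_\star$ and instead bounds marginal likelihoods directly in a way that does not require restricted eigenvalues for the oversized model --- that is the missing idea here.
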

\begin{proof}
The lemma follows from Theorem 2.2 of \cite{AB:19}, applied with $\bar\rho =2 \sqrt{cn\log(p)}/\sigma$, and $\bar\kappa = s_\star n$.  The sub-Gaussian assumption in H\ref{H:post:contr}-(1) implies that Equation (2.1) of \cite{AB:19}  holds with $\mathsf{r}_0 = n/(2\sigma^2\bar\rho)$ under the sample size condition. Then  using the assumption in H\ref{H:post:contr} that $n/p$ and $\|\theta_\star\|_\infty/\log(p)$ remain bounded from above by $c_2$, we checked that Equation 2.2 of \cite{AB:19} is satisfies for some absolute constant $c_0$.
\end{proof}

\section{Description of the coupled chains for mixing time estimation}\label{append:coupled:chains}
We describe here the specific coupled Markov chain employed to estimate the mixing time plots presented in Section \ref{sec:lm}. We describe the method for Algorithm \ref{algo:1}. Algorithm \ref{algo:2} proceeds similarly. 

We start with a brief description of the method. Let $\{X^{(t)},\;t\geq 0\}$ be the Markov chain generated by Algorithm \ref{algo:1}, where $X^{(t)} = (\delta^{(t)},\theta^{(t)}) \in \mathsf{X}$. Let $K$ denote the transition kernel of the Markov chain $\{X^{(t)},\;t\geq 0\}$. The basic idea of the method is to construct a coupling $\check K$ of $K$ with itself: that is, a transition kernel on $\mathsf{X}\times \mathsf{X}$ such that $\check K((x,y), A\times \mathsf{X}) = K(x,A)$, $\check K((x,y),\mathsf{X}\times B) = K(y,B)$, for all $x,y\in\mathsf{X}$, and all measurable sets $A,B$. The coupling $\check K$ is constructed in such  a way that $\check K((x,x), \mathcal{D}) =1$, where $\mathcal{D} \eqdef\{(x,x):\;x\in\mathsf{X}\}$.  The method then proceeds as follows. Fix a lag $L\geq 1$. Draw $X^{(0)}\sim\Pi^{(0)}$, $Y^{(0)}\sim\Pi^{(0)}$ (where $\Pi^{(0)}$ is the initial distribution as given in the initialization step in Algorithm~\ref{algo:1}). Draw $X^{(L)}\vert (X^{(0)},Y^{(0)}) \sim K^L(X^{(0)},\cdot)$. Then for any $k\geq 1$, draw,
\[(X^{(L+k)},Y^{(k)})\vert \left\{(X^{(L+k-1)},Y^{(k-1)}),\ldots,(X^{(L)},Y^{(0)})\right\} \sim \check K\left((X^{(L+k-1)},Y^{(k-1)}),\cdot\right).\]
Setting 
\[\tau^{(L)} \eqdef\inf\left\{k>L:\; X^{(k)} = Y^{(k-L)}\right\},\]
it then holds under some ergodicity assumptions on $P$ (see \cite{biswas2019estimating}) that
\begin{equation}\label{eq:coupl}
\| \Pi^{(t)}-\Pi\|_{\textsf{tv}} \leq \PE\left[\max\left(0,\left\lceil\frac{\tau^{(L)} -L - t}{L}\right\rceil\right)\right],\end{equation}
where $\lceil x \rceil$ denote the smallest integer above $x$. The implication of (\ref{eq:coupl}) is that we can empirically upper bound the left hand side of (\ref{eq:coupl}) by simulating multiple copies of the joint chain as described above and then approximating the expectation on the right hand side of (\ref{eq:coupl}) by Monte Carlo. We refer the reader to \cite{biswas2019estimating} for more details on the construction of such coupled kernels. 

We modify  Algorithm~\ref{algo:1}  to  construct the coupled kernel $\check P$.
Let $(\delta^{(1,t)},\theta^{(1,t)})$ and let $(\delta^{(2,t)},\theta^{(2,t)})$ denote the states of the two chains at time $t$. At some iteration $ t \geq 1$, given $(\delta^{(1, L + t)},\theta^{(1, L+t)}) = (\delta^{(1)},\theta^{(1)})$ and $(\delta^{(2,t)},\theta^{(2,t)}) = (\delta^{(2)},\theta^{(2)})$, we now describe how to generate the next state of the coupled chain. 

In step 1, to update $\delta^{(1)}$ and $\delta^{(2)}$, we first make use of the same randomly drawn subset $\mathsf{J}$. For $i = 1,2$, drawing $\bar \delta^{(i)} \sim Q_{\theta}^{(\mathsf{J})} (\delta^{(i)}, \cdot)$ is equivalent to let $ \bar \delta^{(i)}_{-\mathsf{J}} = \delta^{(i)}_{-\mathsf{J}}$, and for any $j\in\mathsf{J}$, draw  $\bar \delta^{(i)}_j\sim \textbf{Ber}(q^{(i)}_{j})$ which we implement in the following way. We first draw a common uniform number $u_j\sim \textbf{Uniform}(0,1)$, then we obtain  $\bar \delta^{(i)}_j = \mathbf{1}\{q^{(i)}_{j} \leq u_j\}$ for $i = 1,2$.

In step 2, to update $\theta^{(1)}$ and $\theta^{(2)}$, for simplicity, we partition the indices $\{1, \ldots,p\}$ into four groups: $G_{ab}=\{j:\; \bar \delta^{(1)}_{j}=a, \bar \delta^{(2)}_{j}=b\}$ for $a, b = 0, 1$. 

To update the components of $\theta^{(1)}_{G_{00}}$ and $\theta^{(2)}_{G_{00}}$, for any $j \in G_{00}$ we first draw a common standard normal random variables $Z_j$, and then obtain $\bar\theta^{(i)}_j = \rho_0^{-\frac{1}{2}} Z_j $ for $i = 1,2$. 

To update the components of $\theta^{(1)}_{G_{01}}$ and $\theta^{(2)}_{G_{01}}$, 
Since in linear regression, $[\theta]_\delta \;\vert\; \delta \sim \textbf{N}(\hat{\theta}_\delta, \Sigma)$, where $\hat{\theta}_\delta$ is describled in (\ref{cond:dist:theta:lm})  and $\Sigma = \sigma^2(\sigma^2\rho_1 I_{\|\delta\|_0})^{-1}$,  we then have $\theta^{(1)} \;\vert\; \delta^{(1)} \sim \textbf{N}(\hat{\theta}^{(1)}, \Sigma^{(1)})$ and  $\theta^{(2)} \;\vert\; \delta^{(2)} \sim \textbf{N}(\hat{\theta}^{(2)}, \Sigma^{(2)})$, respectively.
Then with the property of gaussian random variables, we have $\theta^{(2)}_{G_{01}} \sim \textbf{N}(\hat{\theta}^{(2)}_{G_{01}}, \Sigma^{(2)}_{G_{01}})$, where $\hat{\theta}^{(2)}_{G_{01}}$ are the $G_{01}$ components of $\hat{\theta}^{(2)}$ and $\Sigma^{(2)}_{G_{01}}$ is the submatrix of $\Sigma^{(2)}$ with $G_{01}$ rows and columns. With $\theta^{(1)}_{G_{01}} \sim \textbf{N}(\textbf{0}, \rho_0^{-1}I_{\|\delta_{G_{01}}\|_0})$, we draw the maximal coupling of these two gaussian distributions to update $\bar{\theta}^{(1)}_{G_{01}}$ and $\bar{\theta}^{(2)}_{G_{01}}$. A similar updating procedure is used for the components of $\bar\theta^{(1)}_{G_{10}}$ and $\bar\theta^{(2)}_{G_{10}}$.

 For components of $\theta^{(1)}_{G_{11}}$ and $\theta^{(2)}_{G_{11}}$, since we have $\theta^{(1)}_{G_{11}} \sim \textbf{N}(\hat{\theta}^{(1)}_{G_{11}}, \Sigma^{(1)}_{G_{11}})$, where $\hat{\theta}^{(1)}_{G_{11}}$ are the $G_{11}$ components of $\hat{\theta}^{(1)}$ and $\Sigma^{(1)}_{G_{11}}$ is the submatrix of $\Sigma^{(1)}$ with $G_{11}$ rows and columns, and similarly $\theta^{(2)}_{G_{11}} \sim \textbf{N}(\hat{\theta}^{(2)}_{G_{11}}, \Sigma^{(2)}_{G_{11}})$, we could construct another maximal coupling to update $\bar\theta^{(1)}_{G_{11}}$ and $\bar\theta^{(2)}_{G_{11}}.$

\bibliographystyle{ims}
\bibliography{biblio_graph,biblio_mcmc,biblio_optim,biblio_new}

\end{document}